%% file: arXiv.tex
\documentclass[superscriptaddress, 10pt, amsmath,amssymb, aps, pra, twocolumn,preprintnumbers]{revtex4-2}

\usepackage[utf8]{inputenc}
\usepackage[T1]{fontenc}
\usepackage{amsmath}
\usepackage{amsfonts}
\usepackage{amssymb}
\usepackage{amsthm}
\usepackage{dsfont}
\usepackage{graphicx}
\usepackage[dvipsnames]{xcolor}
\usepackage{enumitem}
\usepackage{hyperref}
\usepackage{physics}
\usepackage{bbm}
\usepackage{indentfirst}
\usepackage{thmtools}
\usepackage{thm-restate}
\usepackage{soul}
\usepackage{comment}
\setstcolor{red}
\usepackage{breakurl}
\usepackage{mathtools}
\usepackage{titletoc}
\usepackage{chngcntr}
\usepackage[normalem]{ulem}

\usepackage{titletoc}

\titlecontents{section}
  [1.5em]
  {}
  {\contentslabel{2em}}
  {}
  {\hfill\contentspage}

\titlecontents{subsection}
  [3.5em]
  {}
  {\contentslabel{2.5em}}
  {}
  {\hfill\contentspage}

\usepackage{bibunits}
\defaultbibliographystyle{aps}
\defaultbibliography{refs}

\usepackage{algorithm,algorithmic}

\usepackage{setspace}

\newtheorem{theorem}{Theorem}

\newtheorem{definition}{Definition}

\newtheorem{lemma}{Lemma}

\newtheorem{proposition}{Proposition}

\makeatletter
\newcommand{\bigboxplus}{\mathop{\vphantom{\sum}\mathchoice
  {\vcenter{\hbox{\Large$\m@th\boxplus$}}}
  {\vcenter{\hbox{\large$\m@th\boxplus$}}}
  {\vcenter{\hbox{\normalsize$\m@th\boxplus$}}}
  {\vcenter{\hbox{\scriptsize$\m@th\boxplus$}}}
}\displaylimits}
\makeatother

\makeatletter
\renewcommand\footnoterule{%
  \kern -3pt
  \hrule width 1.6in height 0.4pt
  \kern 3pt
}
\makeatother

\usepackage{microtype}

\begin{document}
\addtocontents{toc}{\protect\setcounter{tocdepth}{-1}} 

\author{Constantin Cedillo Vayson de Pradenne}
\thanks{Equal contribution}
\email{ccedillo@caltech.edu}
\affiliation{Harvard Quantum Initiative, 60 Oxford St, Cambridge, MA 02138}
\affiliation{California Institute of Technology, Pasadena, CA, 91125}

\author{Ishaan Kannan}
\thanks{Equal contribution}
\email{ishaan_kannan@g.harvard.edu}
\affiliation{Harvard Quantum Initiative, 60 Oxford St, Cambridge, MA 02138}

\author{Harald Putterman}
\affiliation{Department of Physics, Harvard University, Cambridge, MA 02138 USA}

\author{Jordan Cotler}
\email{jcotler@fas.harvard.edu}
\affiliation{Harvard Quantum Initiative, 60 Oxford St, Cambridge, MA 02138}
\affiliation{Department of Physics, Harvard University, Cambridge, MA 02138 USA}

\title{\textls[-20]{Restrictions on non-Clifford fault tolerance and ruling out beyond-SQL quantum metrology}}

\begin{abstract}
Quantum metrology promises a quadratic speedup over the standard quantum limit (SQL), but signal-aligned noise is expected to preclude this advantage in realistic settings. A potential route around known no-go results is to encode the sensors in a quantum code where the physical signal acts transversally as a logical gate. Understanding restrictions on transversal non-Clifford gates is therefore central to both quantum metrology and fault-tolerant quantum computation. Here, we prove such restrictions and apply them to transversal sensing. For any stabilizer code of distance $d\ge 3$ supporting a transversal logical action in level $D$ of the Clifford hierarchy, every stabilizer generating set must contain a check of weight at least $2^D$. Moreover, any $r$-level concatenated realization satisfies $r\leq \lfloor \log_2 n/D\rfloor$, forcing $r=1$ and ruling out concatenation when applied to beyond-SQL metrology. We then show that transversal single-qubit rotations by a small angle $\theta$ can only induce a nontrivial logical action on an $n$-qubit code if its checks include irreducible stabilizers of weight $\Omega(1/(n|\theta|^2))$. Here, many single-qubit errors commute with every stabilizer or logical Pauli below this weight and are only detected by a high-weight check, so their syndromes cannot be fault-tolerantly reconstructed from low-weight normalizer measurements. Since beyond-SQL transversal sensing requires $|\theta| = o(n^{-1/2})$, the weight of checks required for syndrome extraction diverges with $n$. Finally, we prove a broader metrological no-go theorem that avoids the assumptions of the quantum Cram\'{e}r-Rao bound: constant-strength signal-aligned noise rules out any asymptotic advantage over the SQL in AC or DC sensing, even with biased estimators, nonstabilizer or approximate encodings, quantum memory, intermediate measurements, or adaptive control.
\end{abstract}

\maketitle

\begin{bibunit}
\setcounter{tocdepth}{-1} 

\section{Introduction}
\vspace{-2mm}

Quantum metrology has traditionally leveraged entangled probes to improve parameter-estimation precision from the standard quantum limit (SQL) to the quadratically enhanced Heisenberg limit (HL) \cite{Giovannetti_2004, Giovannetti_2006}. Noise poses a fundamental obstacle to this enhancement, since even weak local decoherence can destroy the correlations that support Heisenberg scaling, and quantum error correction (QEC) therefore provides a natural approach to preserving metrological advantage in future quantum sensors \cite{Kessler_2014}.

In canonical metrology, an unknown parameter couples to a known Pauli generator $\sigma$ through either a DC Hamiltonian $H = \omega \sigma$ or an AC Hamiltonian $H(t) = B\sin(\nu t+\phi)\sigma$. Under Markovian noise, QEC restores quadratic scaling of the quantum Fisher information (QFI) if and only if $\sigma$ lies outside the Lindblad span generated by the noise operators, a criterion known as \textit{Hamiltonian-not-in-Lindblad-span} (HNLS) \cite{Zhou_2018}. In widely studied settings such as phase sensing under dephasing, HNLS is violated, forcing the QFI to scale at most linearly with the number of sensors and ruling out an asymptotic quantum speedup within the quantum Cram\'er-Rao framework \cite{Helstrom1976QuantumDetection}. Because realistic noise often acts through the same generator support as the signal, HNLS appears to severely restrict the scope of metrological quantum advantage \cite{Layden_2018}.

This restriction, however, does not apply universally. The quantum Cram\'er--Rao bound (QCRB), which yields a sensing-time lower bound $T = \Omega({\rm QFI}^{-1/2})$, governs unbiased, local estimators which vary as a smooth function of the underlying parameter~\cite{Helstrom1967MinimumMeanSquaredError, BraunsteinCaves1994StatisticalDistance}. A QEC-protected protocol designed to distinguish a discrete set of parameter values need not define a locally unbiased estimator on a differentiable family of output states, and may therefore achieve beyond-SQL efficiency even when the signal generator lies within the Lindblad span. Whether error-corrected sensing can evade the HNLS no-go in the practical setting where signal and noise are \textit{aligned} (i.e., share Pauli generators) is therefore still open.

\begin{figure*}[t!]
    \centering
    \includegraphics[width=\linewidth]{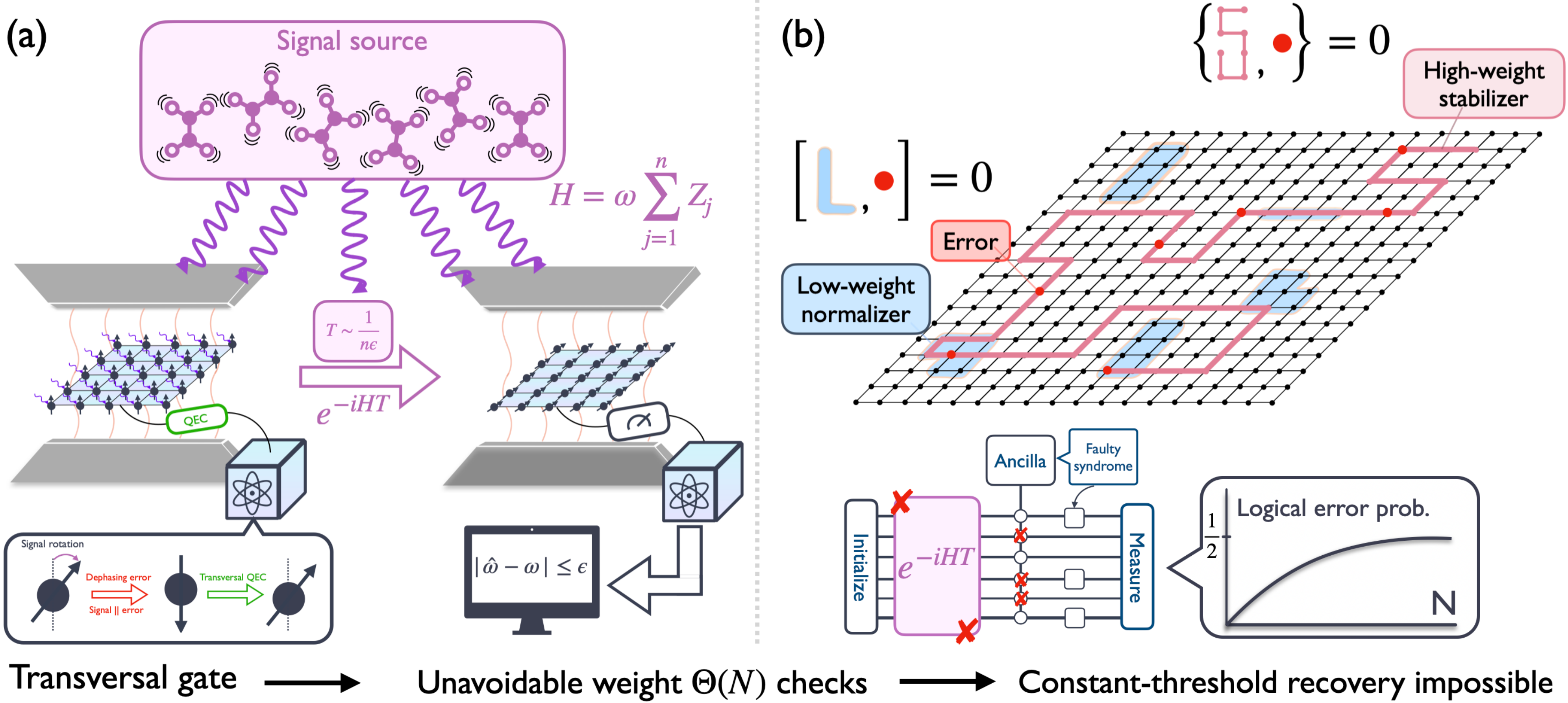}
    \caption{\textbf{Obstructions to transversal quantum sensing.} (a) \textit{Heisenberg-limited transversal sensing.} A spatially homogeneous signal acts on an array of $n$ aligned quantum sensors initialized in a code state, which evolves for a time of order $1/(n\epsilon)$. Near selected interrogation times, QEC removes local errors aligned with the signal generator while preserving the coherent global rotation as a discrete transversal logical gate; measuring the encoded state would then determine $\omega$ to precision $\epsilon$. (b) \textit{Fault-tolerance obstructions.} Any code family supporting transversal sensing in time $O(1/(n^{\alpha}\epsilon))$ for any $\alpha > 1/2$ must contain stabilizer checks of $\textnormal{poly}(n)$-growing weight, and many single-qubit errors can be detected only by such high-weight stabilizers, since every low-weight normalizer commutes with those errors. The resulting need to extract syndromes from increasingly nonlocal checks prevents asymptotically fault-tolerant error correction under local noise of any fixed strength.}
    \label{fig:1}
\end{figure*}

\textit{Transversal quantum sensing} provides a natural way to exploit this possibility [Fig.~\ref{fig:1}(a)]. In this picture, motivated by fault-tolerant quantum computation with transversal gates \cite{Gottesman_1998, Zhou_2025}, physical sensing qubits are prepared in the codespace of a quantum error-correcting code (QECC) such that the single-qubit rotations generated by the signal Hamiltonian $H = f(\omega)\sigma$, for a known function $f$, implement a nontrivial logical gate transversally. Although the Eastin--Knill theorem \cite{Eastin_2009} rules out QECCs supporting a continuous family of transversal logical gates, sensing requires only a discrete family of transversal rotations whose physical angle decreases with the code size; the resulting protocol distinguishes a discrete set of parameter values and need not yield an unbiased estimator or respond continuously to $\omega$, placing it outside the estimation setting governed by the HNLS no-go. The remarkable progress in high-rate QECCs admitting transversal non-Clifford gates \cite{Zhu_2025, golowich2024qldpc, nguyen2024goodbinaryquantumcodes, golowich2024agqc, he2025asymptoticallygoodquantumcodes} has thus far suggested that codes useful for transversal sensing may eventually be discovered.

Code families with the appropriate parameters would yield the first asymptotic beyond-SQL sensing protocols under realistic noise, while also producing QECCs with rich transversal gate sets useful for universal fault-tolerant quantum computation. Conversely, quantitative restrictions on such codes would elucidate how the non-Clifford complexity of transversal gates constrains QECC design and would exclude robust beyond-SQL metrology far more generally than the QCRB alone.

In this work, we derive such restrictions and apply them to any code family capable of transversal sensing in time $T = O(1/(\epsilon \,n^\alpha))$ with $\alpha > 1/2$. We show that the required transversal action necessitates genuinely nonlocal stabilizer structure, including high-weight checks whose syndrome information cannot be reconstructed from lower-weight normalizer measurements [Fig.~\ref{fig:1}(b)], and that concatenation cannot circumvent this obstruction. Since $\alpha = 1/2$ corresponds to the SQL, these constraints obstruct not only Heisenberg scaling but every asymptotic quantum speedup over the SQL within the transversal-sensing framework. We then prove a broader no-go theorem excluding beyond-SQL metrology whenever signal and noise align, closing the remaining route to noise-robust asymptotic advantage.

\vspace{-2em}
\vspace{2mm}
\section{Results}

\vspace{-2mm}
\subsection{Definitions and notation}
\vspace{-2mm}

\textit{Clifford hierarchy and stabilizer QEC~\cite{gottesman1997stabilizercodesquantumerror}} --- Rotations by dyadic angles about Pauli axes lie in the \textit{Clifford hierarchy}. For each $m \in \mathbb{Z}_{>0}$, let $\mathcal P_m$ denote the Pauli group on $m$ qubits.

\begin{definition}[Clifford hierarchy]
The Clifford hierarchy is defined recursively by $\mathcal C_1^{(m)} := \mathcal P_m$ and, for $D \ge 2$,
\begin{equation}
\mathcal C_D^{(m)} := \left\{U \in \mathrm U(2^m) : U P U^\dagger \in \mathcal C_{D-1}^{(m)} \text{ for all } P \in \mathcal P_m\right\}.
\end{equation}
\end{definition}
\noindent The second level, $\mathcal C_2^{(m)}$, is the Clifford group, while every gate in $\mathcal C_D^{(m)} \setminus \mathcal C_{D-1}^{(m)}$ with $D \ge 3$ is non-Clifford. We suppress the superscript whenever the number of qubits is clear and say that a gate is \emph{genuinely level $D$} if it belongs to $\mathcal C_D \setminus \mathcal C_{D-1}$.

Let $\mathcal C$ be an $[[n,k,d]]$ stabilizer code with logical Hilbert space $\mathcal L \cong (\mathbb C^2)^{\otimes k}$ and stabilizer group $S$, and let $\mathcal N(S)$ denote the normalizer of $S$ in the $n$-qubit Pauli group. Fix a basis of logical Pauli representatives $\{\widetilde X_a,\widetilde Z_a\}_{a=1}^k \subset \mathcal N(S)$ and an encoding isometry $V:\mathcal L\to(\mathbb C^2)^{\otimes n}$ satisfying $\widetilde X_a V = V X_a$ and $\widetilde Z_a V = V Z_a$ for every $a\in[k]$, and let $\Pi:=VV^\dagger$ denote the codespace projector. A physical unitary $U$ preserves the code if $U\Pi U^\dagger = \Pi$, in which case it induces the logical unitary $\overline U := V^\dagger UV$. For any Pauli operator $Q$, we write $\mathrm{wt}(Q)$ for the number of qubits on which $Q$ acts nontrivially. A generating set for $S$ is a subset $\mathcal G \subseteq S$ such that $\langle \mathcal G\rangle = S$; its elements are called stabilizer checks. For each integer $L \ge 1$, let $\mathcal N_{\le L}(S) := \left\langle Q \in \mathcal N(S) : \mathrm{wt}(Q) \le L \right\rangle$ denote the subgroup of $\mathcal N(S)$ generated by normalizer elements of weight at most $L$.

\begin{definition}[Minimax stabilizer and normalizer stabilizer weights]
Let $S$ be a stabilizer group. Its minimax stabilizer weight is
\begin{equation}
\lambda_S := \min_{\mathcal G \subseteq S\,:\,\langle \mathcal G\rangle = S}\ \max_{Q \in \mathcal G} \mathrm{wt}(Q),
\end{equation}
and its minimax normalizer weight is
\begin{equation}
\lambda_S^{\mathcal N} := \min\left\{L \ge 1 : S \subseteq \mathcal N_{\le L}(S)\right\}.
\end{equation}
\end{definition}
\noindent Thus $\lambda_S$ is the smallest achievable value of the largest check weight among all generating sets of $S$, quantifying the irreducible nonlocality of any stabilizer presentation of the code, while $\lambda_S^{\mathcal N}$ is the smallest $L$ for which the commutation syndrome of every Pauli error with every stabilizer can be reconstructed from its commutation relations with normalizer elements (i.e., logical operators and stabilizers) of weight at most $L$. Since every stabilizer is a normalizer element, $\lambda_S^{\mathcal N} \le \lambda_S$.

\textit{Distance measures} --- We write $\|\cdot\|_1$ for the trace norm, $\|\cdot \|_{\rm op}$ for the operator norm, $d_B(\rho,\sigma) := \arccos\|\sqrt{\rho}\sqrt{\sigma}\|_1$ for the Bures angle between quantum states, and $d_{\mathrm{ch}}$ for the corresponding channel distance \cite{Yuan_Fung_2017}.

\textit{Quantum metrology} --- We consider estimation of a single parameter $\omega$ to within absolute error $\epsilon$ (with high probability) using $n$ sensors. Heisenberg-limited protocols use total signal exposure time $O(1/(\epsilon\,n))$, whereas protocols using larger times $O(1/(\epsilon\,\sqrt{n}))$ or $O(1/(\epsilon^2 n))$ are standard-quantum-limited; both scalings are achievable by entanglement-free protocols in different regimes. Any protocol that improves either the $\sqrt{n}$ factor or the $\epsilon^2$ factor achieves beyond-classical performance, and our results encompass all such approaches.

\vspace{-2mm}
\subsection{Restrictions on non-Clifford fault tolerance}
Because smaller-angle rotations lie increasingly high in the Clifford hierarchy, we first derive quantitative restrictions on transversal implementations of non-Clifford logical gates: implementing a genuinely level-$D$ logical gate requires stabilizer checks of weight at least $2^D$, so the irreducible nonlocality of the stabilizer group grows exponentially with the level of the induced logical gate, making fault-tolerant implementation increasingly difficult. Our results complement the canonical no-go theorems for non-Clifford gates of Refs.~\cite{Bravyi_2013, Pastawski_2015, Baspin_2022}.

\begin{theorem}\label{thm:clifford-hierarchy-wb}
Let $D\ge 1$. Let $U = \bigotimes_{j=1}^n U_j$, with $U_j \in \mathrm U(2)$, preserve a stabilizer code $\mathcal C$ of distance $d \ge 3$. If the induced logical unitary $\overline U$ is genuinely level $D$, then $\lambda_S \ge 2^D$.
\end{theorem}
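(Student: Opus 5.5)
The plan is an induction on $D$ that peels off one level of the hierarchy per step and trades it for a factor of two in check weight, using the transversal structure of $U$. The base cases $D=1,2$ are immediate: a nontrivial Pauli or Clifford logical action on a code with $d\ge 3$ forces $S\neq\{I\}$, and no nontrivial stabilizer generated by checks of weight $\le 1$ is compatible with $d\ge 3$. So $\lambda_S\ge 2=2^1$, and $\lambda_S\ge 4$ follows from the standard fact that codes with all checks of weight $\le 3$ and $d\ge3$ cannot exist in the relevant form. I will not grind this out; it is the generic step at small $D$.

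\textbf{Step 1: normalize the single-qubit factors.} Since $U$ preserves $\Pi$ and $d\ge 3$, I will first show that each $U_j$ maps the local Pauli algebra to itself up to phases and a fixed-axis rotation. The argument compares $\Pi U P_j U^\dagger \Pi$ for single-qubit Paulis $P_j$ with the Knill--Laflamme conditions, which hold for all weight-$\le 2$ operators. After a local Clifford change of basis, which preserves all check weights and the distance, this lets me write $U_j=\mathrm{diag}(1,e^{i\theta_j})$ up to a harmless Pauli. This reduction to diagonal transversal rotations is the step I trust least. If $U_j$ is not of this form, I would instead argue directly that non-diagonalizable components act on the logical space as Cliffords and do not affect the genuine level.

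\textbf{Step 2: the commutator identity.} For a stabilizer check $g$ with $X$-support $A$, a direct computation gives
\begin{equation}
U g U^\dagger = g\prod_{j\in A}\bigl(\cos\theta_j\, I - i\sin\theta_j\, Z_j\bigr)\ \text{(up to phase conventions)},
\end{equation}
so the group commutator $W_g:=U g U^\dagger g^\dagger$ is again a transversal diagonal rotation. It is supported only on $A$, with angles $\pm\theta_j$. Because $U$ and $g$ both preserve the code, $W_g$ preserves the code as well. Its logical action is the group commutator $\overline U\,\overline g\,\overline U^\dagger\overline g^\dagger$, which is trivial since $g$ is a stabilizer. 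More usefully, the same identity with $g$ replaced by a logical representative $\widetilde P$ produces $\overline U\,\overline P\,\overline U^\dagger\overline P^\dagger$. By genuineness of level $D$, there is a choice of $\widetilde P$ for which this is genuinely level $D-1$.

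\textbf{Step 3: the weight-halving induction.} The transversal operator $W_{\widetilde P}$ has angles doubled relative to $U$ on the support of $\widetilde P$, and it induces a genuinely level-$(D-1)$ logical gate. The key claim, and the main obstacle, is a parity/divisibility statement. I will expand $\Pi W\Pi$ in $Z_B$ monomials, $B\subseteq A$; each $\Pi Z_B\Pi$ vanishes unless $Z_B\in\mathcal N(S)$, and for $|B|<d$ only stabilizer $Z_B$ survive. From this I want to show that every check's $X$-support must have even overlap structure compatible with the smallest dyadic angle, in the manner of the known divisibility conditions for transversal $Z$-rotations on CSS codes (weights divisible by $2^{D-1}$ for angle $\pi/2^{D-1}$). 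Iterating $D-1$ times would force some check, in every generating set, to have $X$-support of size at least $2^{D}$. To make this hold for \emph{every} generating set, I would run the argument on an arbitrary generating set $\mathcal G$. If all elements of $\mathcal G$ had weight $<2^D$, the divisibility constraints on each generator would propagate, via products, to all of $S$, and hence to the logical action. This would force $\overline U\in\mathcal C_{D-1}$, a contradiction. I expect the bookkeeping of the trigonometric expansion to be the hardest part, since it must handle non-CSS checks and angles that are not uniform across qubits.
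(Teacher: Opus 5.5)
\textbf{There is a genuine gap.} The substance of the theorem is exactly what you call ``the key claim, and the main obstacle'' in Step~3, and the proposal leaves it unproved. The mechanism you sketch would also not deliver it.

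The commutator of Step~2 with a logical representative gives a transversal gate inducing a genuinely level-$(D-1)$ logical map on the \emph{same} code, with the same stabilizer group. So the induction hypothesis returns only $\lambda_S\ge 2^{D-1}$; nothing in your scheme gains a factor of two per level. Per-check weight divisibility is also not a necessary condition. For example, duplicate one qubit of the $15$-qubit Reed--Muller code via a weight-two $Z$-stabilizer and apply $T$ to only one copy: the genuinely level-$3$ transversal action survives, but checks with $X$-support of size $9$ become unavoidable. Nor does a constraint on each generator propagate to products, whose weights depend on all the overlaps.

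What is missing is a descent lemma, with the $2^D$ coming in one shot from Reed--Muller duality. In the paper, the gate is first put in the form $D_q\mathcal C_0=\mathcal C_1$ with dyadic $q$. The form $i^{\ell(y)+2p(y)}$ of stabilizer-state amplitudes ($\ell$ affine, $p$ quadratic) then forces the $D$-fold finite differences of $Q(x)=\sum_j q_jx_j$ along $C_X$ to vanish modulo $2^D$. Equivalently, $\sum_j z_j\prod_{s=1}^D(\alpha_s)_j\equiv 0 \pmod 2$ for all $\alpha_1,\dots,\alpha_D\in C_X$, where $z=q\bmod 2$. Now take a single generator row $r_i$ of weight $<2^D$, and let $f_i$ on $\mathbb F_2^{m-1}$ sum $z_j$ over those $j\in\mathrm{supp}(r_i)$ with a given row-$i$-deleted column pattern. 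Then $f_i$ is orthogonal to all Boolean polynomials of degree $\le D-1$ and has support of size $<2^D$. Since $\mathrm{RM}(D-1,m-1)^\perp$ has minimum distance $2^D$, $f_i=0$. Hence the odd coordinates split into disjoint weight-$\le 2$ vectors of $C_X^\perp$, which $d\ge3$ forces to be $Z$-stabilizers. On the codespace, each odd $T_D$ factor therefore becomes a scalar or merges with its partner into a $T_{D-1}$ factor. Applying this from the physical level down to $D$, and once more at $D$, gives $\overline U\in\mathcal C_{D-1}$.

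Step~1 is also not correct as stated. The available local normal form (Zeng--Cross--Chuang, after splitting off Bell-pair factors) is $U_j=e^{i\gamma_j}L_je^{i\alpha_jP_j}$ with $L_j$ an \emph{arbitrary} one-qubit Clifford. Your Knill--Laflamme comparison cannot produce it: $U$ commutes with $\Pi$, so $\Pi UP_jU^\dagger\Pi=U\Pi P_j\Pi U^\dagger$ is automatically a multiple of $\Pi$. Moreover, $He^{i\alpha Z}$ with $\alpha\notin\frac{\pi}{2}\mathbb Z$ is not Clifford-conjugate to any diagonal or antidiagonal unitary. So no local change of basis makes $U$ diagonal up to a Pauli. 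One only gets $U\Pi=e^{i\varphi}AD_qB\Pi$ with two different local Cliffords, so $D_q$ maps $B\mathcal C$ onto the generally different code $A^\dagger\mathcal C$. This is why the descent must be stated for a pair of codes sharing $C_X$ and $C_Z$. Your fallback fails for the same reason: $A$ and $B$ need not preserve $\mathcal C$, so there is no logical Clifford to discard.

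Finally, Step~3 presupposes dyadic angles, whereas the $\theta_j$ are a priori arbitrary reals. The paper first shows, using a disjoint-support codeword basis, that the irrational part acts as a global phase on the codespace. It then uses decompression and the Anderson--Jochym-O'Connor result to force power-of-two denominators. Without this there is no finite level at which to argue modulo $2^D$.
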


\noindent The theorem imposes no restriction on the individual physical factors beyond single-qubit unitarity; only the induced logical action is required to lie in $\mathcal C_D \setminus \mathcal C_{D-1}$. The bound is also tight in its dependence on $D$, since there exist stabilizer codes admitting transversal implementations of genuinely level-$D$ logical gates with minimax stabilizer weight $\lambda_S = 2^D$ \cite{landahl2013complexinstructionsetcomputing, bombin2015gaugecolorcodesoptimal}.

An established strategy for achieving fault tolerance despite the presence of high-weight stabilizer measurements is concatenation \cite{aharonov1999faulttolerantquantumcomputationconstant}. However, we next show that concatenated codes with significant recursive depth cannot implement large-$D$ transversal non-Clifford gates, excluding concatenation as a workaround to Theorem~\ref{thm:clifford-hierarchy-wb}.

\begin{theorem}\label{thm:concatenated_no_go}
Let $\mathcal{C} = \mathcal{C}_1\circ \mathcal{C}_2\circ \cdots \circ \mathcal{C}_r$ be an $r$-level concatenated code where every constituent $\mathcal{C}_j$ is an $[[n_j, k_j, d_j]]$ stabilizer code with $d_j \geq 3$, and $k_j = 1$ for $j\geq 2$. Let $n = \prod_j n_j$ denote the total number of physical qubits. If any $U = \bigotimes_{j=1}^n U_j$ with $U_j\in \mathrm U(2)$ induces a logical unitary $\overline{U}$ that is genuinely level $D$, then $r \leq \lfloor \log_2 n / D\rfloor$.
\end{theorem}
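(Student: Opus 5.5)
We aim to show $n \ge 2^{Dr}$, i.e.\ $Dr \le \log_2 n$, which gives $r \le \lfloor \log_2 n/D\rfloor$ because $r$ is an integer. The idea is to peel off concatenation levels one at a time, with each level contributing a factor of at least $2^D$ to $n$. Theorem~\ref{thm:clifford-hierarchy-wb} supplies this factor.

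\textbf{Step 1 (structure of transversal gates on concatenated codes).} Write $\mathcal C = \mathcal C_1\circ \mathcal C'$, where $\mathcal C' = \mathcal C_2\circ\cdots\circ\mathcal C_r$ encodes one qubit into $n' = \prod_{j\ge 2} n_j$ qubits, and the physical qubits are partitioned into $n_1$ blocks of $\mathcal C'$. Since every constituent has $d_j\ge 3$, the distance of each inner block is at least $3$. A tensor-product unitary $U$ preserving $\mathcal C$ must therefore map each block's codespace into itself: otherwise some single-block operator would move code states out of the inner code while the global codespace is preserved, contradicting the ability to detect errors supported on one inner block.

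I expect this step to be the main obstacle. The planned argument is to use Knill--Laflamme and the fact that $U$ restricted to block $b$, namely $U^{(b)}=\bigotimes_{j\in b}U_j$, is itself transversal. Conjugating the block-$b$ stabilizers by $U^{(b)}$ gives operators supported on block $b$ that must act as scalars on $\mathcal C$. Hence $U^{(b)}$ preserves the inner codespace and induces a single-qubit logical gate $V_b$ on the encoded qubit. Consequently $U$ acts on $\mathcal C_1$ as the transversal gate $\bigotimes_b V_b$, whose logical action is $\overline U$.

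\textbf{Step 2 (propagating the hierarchy level).} Since $\overline U$ is genuinely level $D$, we would argue that some $V_b$ has level at least $D$. The contrapositive route is to show that if every $V_b$ lies in $\mathcal C_{D'}$, then the induced logical gate lies in $\mathcal C_{D'}$. This holds because the logical Paulis of a stabilizer code have representatives that are products of single-qubit Paulis, and conjugation by a transversal gate whose factors are in $\mathcal C_{D'}$ keeps the level bounded. We would prove this by induction on $D'$, using the code-preservation property to pass from physical operators to logical ones, as is presumably also done in the proof of Theorem~\ref{thm:clifford-hierarchy-wb}.

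\textbf{Step 3 (induction).} Apply Steps 1 and 2 recursively down the chain $\mathcal C_2\circ\cdots\circ\mathcal C_r$. At each level $j$ we obtain a transversal gate on the code $\mathcal C_j$ (or on some block-level code) inducing a logical gate of level at least $D$; more precisely we track an element of level exactly $D$ or higher. Theorem~\ref{thm:clifford-hierarchy-wb} then gives $\lambda_{S_j}\ge 2^{D}$ for each $j$, where the level-$j$ logical gate is still genuinely level $\ge D$ because a level-$D'$ gate with $D' > D$ only strengthens the bound.

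A check has weight at most $n_j$, so $n_j\ge \lambda_{S_j}\ge 2^D$. Multiplying over the $r$ levels gives $n=\prod_j n_j\ge 2^{Dr}$, which yields the claim.
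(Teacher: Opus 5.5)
\textbf{The approach matches the paper's, but Step~1 is false as stated.} Your overall route is the paper's: show block preservation so that $U$ induces a transversal gate on each constituent, use the physical-to-logical lemma to keep some block at level $\ge D$, apply Theorem~\ref{thm:clifford-hierarchy-wb} at every level, and multiply.

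The problem is that the hypothesis $d_j\ge 3$ allows a constituent to have a weight-one stabilizer $\pm P_b$, i.e.\ a frozen coordinate. On that block every state of $\mathcal C$ has the same pure reduced state $V_B|\phi_b\rangle$, with $V_B$ the inner encoding isometry. Your conjugated-stabilizer argument therefore only shows that $U^{(b)}$ fixes this one vector up to phase. The argument is valid on active coordinates, where $\mathrm{Tr}_{\backslash b}\Pi_{\mathcal C_1}\propto I$ makes the reduced state maximally mixed on the inner codespace.

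A concrete counterexample:
\begin{itemize}
\item Let $\mathcal C_2$ be the $[[15,1,3]]$ CSS code with $X$-checks from the $[15,4,8]$ simplex code $C_X$, $Z$-checks $\{z\in C_X^\perp : z\cdot v=0\}$, and $\bar X=X(v)$ for a weight-$3$ vector $v$.
\item Then $|\bar 0\rangle\propto\sum_{c\in C_X}|c\rangle$ is fixed by $T^{\otimes 15}$, since all weights are $0$ or $8$.
\item But $T^{\otimes 15}|\bar 1\rangle$ acquires the non-constant phases $e^{i\pi|c+v|/4}$, with $|c+v|\in\{3,5,7,9,11\}$, and leaves the codespace.
\item Take $\mathcal C_1=[[5,1,3]]\otimes|0\rangle$, and let $U$ be $T^{\otimes 15}$ on the sixth block and the identity, or any code-preserving transversal gate, on the others.
\end{itemize}
Then $U$ preserves $\mathcal C$, but $V_B^\dagger U^{(6)}V_B$ is not even unitary. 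So $\bigotimes_b V_b$ is not a transversal gate on $\mathcal C_1$, and Theorem~\ref{thm:clifford-hierarchy-wb} cannot be applied to it.

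\textbf{The repair is the paper's.} Split each constituent as $\mathcal C_j=\bigl(\bigotimes_{b\in F}|\phi_b\rangle\bigr)\otimes(\mathcal C_j)_R$, and note that frozen blocks contribute only a global phase to $\overline U$. Run your block-preservation argument on active coordinates only. Then apply Theorem~\ref{thm:clifford-hierarchy-wb} to the active core $(\mathcal C_j)_R$, which keeps distance $\ge 3$ and has at most $n_j$ qubits. Equivalently, replace the frozen-block factors by the identity before invoking the theorem.

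\textbf{A smaller omission is in Steps~2--3.} The contrapositive only yields some $V_b$ outside level $D-1$ of the hierarchy. Theorem~\ref{thm:clifford-hierarchy-wb} requires $V_b$ to be genuinely level $E$ for some finite $E$, so you must also show that $V_b$ lies in some finite level. The paper obtains this from the dyadic normal form (Lemma~\ref{lem:one-qubit-hierarchy-normal-form}) together with Lemma~\ref{lemma:physical_to_logical_nc}. Alternatively, you can use the statement actually established in the proof of Theorem~\ref{thm:clifford-hierarchy-wb}: $\lambda_S<2^D$ forces the induced logical action into level $D-1$.
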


\noindent Combined with the fact that dyadic rotations whose induced logical action is genuinely level $D$ require $\theta = \Theta(n^{-\alpha}) = \Theta(2^{-D})$, i.e., $D = \alpha \log_2(n)+O(1)$, Theorem~\ref{thm:concatenated_no_go} implies that, for beyond-SQL scaling ($\alpha>1/2$), concatenated code families cannot support transversal sensing unless $r\le 1$, for all sufficiently large $n$. Beyond their relevance to QECC design, Theorems~\ref{thm:clifford-hierarchy-wb} and \ref{thm:concatenated_no_go} illustrate that if a transversal sensing protocol achieving time $O(1/(\epsilon\, n^\alpha))$ relies on genuine non-Clifford logical actions, it must measure stabilizers whose weights grow as $\sim n^\alpha$ without additional concatenated structure, impeding fault-tolerant implementation. We therefore turn to constraints on codes for which the logical action induced by small-angle single-qubit rotations can be entirely arbitrary.

\begin{theorem}\label{thm:discrete-eastin-knill}
Let $\mathcal C$ be an exact $n$-qubit stabilizer code of distance $d \ge 3$ with stabilizer group $S$. For nonidentity single-qubit Pauli operators $\sigma_j$ acting on qubit $j$, define $G := \sum_{j=1}^n \sigma_j$ and $U_\theta := e^{-i\theta G}$. Suppose that $U_\theta$ preserves $\mathcal C$ and induces a logical unitary $\overline U_\theta$ satisfying $\inf_{z \in \mathbb C}\left\|\overline U_\theta - z I_{\mathcal{L}}\right\|_{\rm op} \ge s$ for some $s > 0$. Then
\begin{equation}
\lambda_S \ge \lambda_S^{\mathcal N} \ge \frac{s^2}{n|\theta|^2}\,.
\end{equation}
Consequently, if $s = \Omega(1)$ and $|\theta| = O(n^{-\alpha})$, then $\lambda_S = \Omega\!\left(n^{2\alpha-1}\right)$ and $\lambda_S^{\mathcal N} = \Omega\!\left(n^{2\alpha-1}\right)$.
\end{theorem}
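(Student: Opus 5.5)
The plan is to bound the logical nontriviality $s$ by the spread of $G$ in code states, and then bound that spread by $n\lambda_S^{\mathcal N}$ using the distance and normalizer-weight hypotheses. The inequality $\lambda_S\ge\lambda_S^{\mathcal N}$ was already noted after the definition, so only $\lambda_S^{\mathcal N}\ge s^2/(n|\theta|^2)$ needs proof; the asymptotic consequence is then immediate.

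\textbf{Step 1 (reduction to a variance bound).} Write $W:=\overline U_\theta$ and let $\delta:=\sup_{\psi}\min_{c}\|W\psi-c\psi\|$ over unit logical vectors $\psi$.
\begin{itemize}
\item Take $\psi=(\phi_a+\phi_b)/\sqrt2$, where $\phi_a,\phi_b$ are eigenvectors of $W$ with eigenvalues $e^{ia},e^{ib}$. Then $\min_c\|W\psi-c\psi\|=|e^{ia}-e^{ib}|/2$, so every chord of the spectrum has length at most $2\delta$.
\item Hence the spectrum lies in a disc of radius $\delta$, and $\inf_z\|W-zI_{\mathcal L}\|_{\rm op}\le\delta$.
\item If this step gives a slightly worse constant, I will keep the bound up to a constant factor.
\end{itemize}
For the physical state $V\psi$, with $g:=\langle G\rangle_{V\psi}$ and $\Delta G$ the standard deviation of $G$ in $V\psi$,
\begin{equation}
\|(e^{-i\theta G}-e^{-i\theta g})V\psi\|\le|\theta|\,\|(G-g)V\psi\|=|\theta|\,\Delta G.
\end{equation}
Since $U_\theta V\psi=VW\psi$, this gives
\begin{equation}
s\le|\theta|\sup_{\psi}\big(\mathrm{Var}_{V\psi}G\big)^{1/2}.
\end{equation}
It therefore suffices to show $\mathrm{Var}_{V\psi}G\le n\lambda_S^{\mathcal N}$ for every code state.

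\textbf{Step 2 (correlations are supported on weight-$\le 2$ stabilizers).} Expand $\mathrm{Var}\,G=\sum_{j,k}\mathrm{Cov}(\sigma_j,\sigma_k)$.
\begin{itemize}
\item For a Pauli $P$, $\langle P\rangle$ vanishes on all code states unless $P\in\mathcal N(S)$.
\item Since $d\ge3$, any normalizer element of weight $\le2$ lies in $\pm S$ (up to phase) and therefore has a constant expectation $\pm1$.
\item Consequently $\mathrm{Cov}(\sigma_j,\sigma_k)=0$ unless $\sigma_j\sigma_k\in\pm S$ (for $j\ne k$), and $|\mathrm{Cov}|\le1$ always.
\end{itemize}
Define the relation $j\sim k$ iff $j=k$ or $\sigma_j\sigma_k\in\pm S$. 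This is an equivalence relation, because products of such stabilizers are again stabilizers. It gives $\mathrm{Var}\,G\le\sum_j|C_j|$, where $C_j$ is the class of $j$.

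\textbf{Step 3 (the key combinatorial claim).} The main step is to show $|C_j|\le L:=\lambda_S^{\mathcal N}$ whenever $|C_j|\ge2$.
\begin{itemize}
\item Pick $k\in C_j\setminus\{j\}$ and a single-qubit Pauli $\tau_j$ on qubit $j$ anticommuting with $\sigma_j$. Then $\tau_j$ anticommutes with the stabilizer $\sigma_j\sigma_k$.
\item Since $S\subseteq\mathcal N_{\le L}(S)$, write $\sigma_j\sigma_k$ as a product of normalizer elements of weight $\le L$. At least one factor $Q$ must anticommute with $\tau_j$.
\item On qubit $j$, $Q$ is then $\sigma_j$ or the third Pauli, so $Q|_j$ anticommutes with $\sigma_j$.
\item Because $Q\in\mathcal N(S)$ commutes with every $\sigma_j\sigma_{k'}$, $k'\in C_j$, its restriction must also anticommute with $\sigma_{k'}$ on every qubit $k'\in C_j$.
\item Hence $\mathrm{wt}(Q)\ge|C_j|$, so $|C_j|\le L$.
\end{itemize}
Singleton classes contribute at most $1\le L$. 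Summing gives $\mathrm{Var}\,G\le nL$, so $s^2\le|\theta|^2nL$, which is the theorem.

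I expect Step 3 to be the crux: it is where the hypothesis on low-weight normalizer generation really enters. A secondary obstacle is getting the sharp constant in Step 1, where one must pass from state-wise near-phase behaviour to a single global scalar $z$.
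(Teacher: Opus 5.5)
Your overall route (phase bound, then the variance of $G$, then covariances supported on weight-two stabilizers, then classes of qubits linked by such stabilizers, then class size at most $L$) is the paper's. However, Step 3, which you correctly identify as the crux, fails as written.

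The claim $|C_j|\le L$ for every class with $|C_j|\ge 2$ is false. Take a code satisfying the hypotheses and append $m$ qubits, each fixed to an eigenstate of its $\sigma_j$. This leaves $d$, $s$ (these factors of $U_\theta$ act as phases) and $\lambda_S^{\mathcal N}$ unchanged, yet the appended qubits form a single class of size $m$.

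The broken inference is ``$Q|_j$ anticommutes with $\sigma_j$''. A factor anticommuting with $\tau_j$ can have $Q|_j=\sigma_j$, which commutes with $\sigma_j$. Indeed, for $L\ge 2$ the weight-two stabilizer $\sigma_j\sigma_k$ is itself an allowed factor, so decomposing it can never force large weight. What you need is a stabilizer anticommuting with $\sigma_j$ \emph{itself}, and such a stabilizer exists only for non-fixed classes.

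The repair has two parts.
(i) If $\pm\sigma_j\in S$, then $\sigma_j\Pi=\pm\Pi$, so $\mathrm{Cov}(\sigma_j,\sigma_k)=0$ for all $k$. The whole class of $j$ is then fixed, whence $\mathrm{Var}\,G\le\sum_{j\ \mathrm{active}}|C_j|$.
(ii) For active $j$, $\sigma_j$ has weight one and is not in $\pm S$, so $\sigma_j\notin\mathcal N(S)$ and some $P\in S$ anticommutes with it. Expanding $P$ in $\mathcal N_{\le L}(S)$ gives a factor $Q$ of weight at most $L$ with $\{Q,\sigma_j\}=0$. Your propagation bullet then yields $|C_j|\le\mathrm{wt}(Q)\le L$.

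This is exactly the paper's argument, phrased there contrapositively: an active class of size $>L$ would place $P$ outside $\mathcal N_{\le L}(S)$.

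Step 1 also contains an invalid inference, though it is easy to bypass. A planar set whose chords are all at most $2\delta$ need not lie in a disc of radius $\delta$. For example, three eigenvalues at mutual angles $2\pi/3$ have chords $\sqrt3$ but minimal enclosing radius $1$. (The conclusion happens to hold for unitaries via the numerical range, but your chord argument does not show it.) A constant-factor loss would not give the stated bound.

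No state-dependent phase is needed. Your Step 2 already shows that $\langle\sigma_j\rangle_{V\psi}$ is either $0$ or a fixed sign, so $g=\langle G\rangle_{V\psi}$ is the same for every code state. Therefore
\[
\inf_{z}\|\overline U_\theta-zI_{\mathcal L}\|_{\rm op}\le\|(U_\theta-e^{-i\theta g})\Pi\|_{\rm op}\le|\theta|\,\|(G-g)\Pi\|_{\rm op}=|\theta|\sup_{\psi}\big(\mathrm{Var}_{V\psi}G\big)^{1/2},
\]
which gives the sharp constant. The paper does the same with an arbitrary $c$: it uses Knill--Laflamme to compute $\|(G-c)\Pi\|_{\rm op}^2=\sum_{i,j}\Gamma_{ij}+(c-\sum_i a_i)^2$ and then minimizes over $c$.
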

\noindent A transversal product of small single-qubit rotations can therefore induce a nontrivial logical transformation only if the stabilizer structure of the code becomes increasingly nonlocal, independent of the particular logical gate implemented. This provides an algebraic manifestation of the nonlocal correlations required for coherent signal accumulation: a substantial logical rotation generated by small local angles forces significant entanglement in the codespace, which in turn induces irreducible nonlocal structure in the stabilizer group. This argument is made formal in Appendix \ref{app:thms_2_3}. Applied to transversal sensing protocols with beyond-SQL scaling, which operate in time $T = O(1/(\epsilon\, n^\alpha))$ for some $\alpha > 1/2$, Theorem~\ref{thm:discrete-eastin-knill} forces both $\lambda_S$ and $\lambda_S^{\mathcal N}$ to grow polynomially with the code size.

We next establish that these polynomially large stabilizers are truly essential to performing QEC in sensing-compatible codes, and that no syndrome measurement strategy can avoid faulty high-weight measurements by only measuring high-weight stabilizers infrequently. This formally excludes many fault-tolerant syndrome-extraction strategies which avoid high-weight measurements by inferring their syndromes from low-weight logical operators \cite{Aliferis_2007, Bravyi_2011}.

\begin{theorem}[Transversal syndrome irreducibility]\label{thm:lowweight-syndr-incompleteness}
Under the hypotheses of Theorem~\ref{thm:discrete-eastin-knill}, for every integer $L\ge1$  such that $L <  \frac{s^2}{n|\theta|^2}$, there exist a subset $J \subseteq [n]$ with $|J| > L$ and a stabilizer $P \in S \setminus \mathcal N_{\le L}(S)$ such that
\begin{enumerate}
\item $[\sigma_j,Q] = 0$ for every $j \in J$ and every $Q \in \mathcal N_{\le L}(S)$;
\item $\{\sigma_j,P\} = 0$ for every $j \in J$.
\end{enumerate}
\end{theorem}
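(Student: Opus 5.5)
The strategy is to reuse the mechanism behind Theorem~\ref{thm:discrete-eastin-knill}, but to track \emph{which} qubits are responsible for the nontrivial logical action. Fix $L < s^2/(n|\theta|^2)$ and write $N_L := \mathcal N_{\le L}(S)$. Let $J_0 \subseteq [n]$ be the set of qubits $j$ such that $\sigma_j$ commutes with every element of $N_L$. Equivalently, $\sigma_j$ commutes with every normalizer element of weight at most $L$ that is supported on $j$. Write $G = G_{J_0} + G_{\bar J_0}$, where $G_{J_0} := \sum_{j\in J_0}\sigma_j$ and $G_{\bar J_0}$ collects the remaining terms. The plan has three steps. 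First, show that the complement contributes no logical action. Second, show that $J_0$ must therefore be large, with $|J_0| > L$. Third, extract the stabilizer $P$.

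For the first step, I would argue that each $\sigma_j$ with $j\notin J_0$ anticommutes with some $Q\in N_L$ of weight at most $L$. The key input is the argument of Theorem~\ref{thm:discrete-eastin-knill}, as formalized in Appendix~\ref{app:thms_2_3}. If $L \ge \lambda_S^{\mathcal N}$, then the syndrome of any such $\sigma_j$ is determined by $N_L$. The lower bound $\lambda_S^{\mathcal N}\ge s^2/(n|\theta|^2)$ shows this cannot hold for all qubits without killing the logical action. Concretely, I would bound $\|\overline U_\theta - z I\|_{\rm op}$ by a quantity of order $|\theta|\sqrt{n\,|J_0|}$, or more likely $|\theta|^2 n |J_0|$ after squaring. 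Terms on $\bar J_0$ are detectable by low-weight operators, so by distance and cleaning arguments their contribution to the logical action is trivial to the relevant order.

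Turning this around gives the second step: $s^2 \le n|\theta|^2 |J_0|$, hence $|J_0| \ge s^2/(n|\theta|^2) > L$. Any subset $J\subseteq J_0$ of size exceeding $L$ then satisfies condition~1. Concretely, I would re-run the proof of Theorem~\ref{thm:discrete-eastin-knill} with $\lambda_S^{\mathcal N}$ replaced by the localized quantity. That proof presumably shows that if every $\sigma_j$ is caught by weight-$\le L$ normalizer elements, then $L \ge s^2/(n|\theta|^2)$. So I would check that it in fact proves the sharper statement that the uncaught qubits number at least $s^2/(n|\theta|^2)$.

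For condition~2 (the third step), consider errors $\sigma_j$ with $j \in J_0$. By distance $d\ge3$, each single-qubit $\sigma_j$ is detectable, so it anticommutes with some stabilizer $P_j$. Since $\sigma_j$ commutes with all of $N_L$, we get $P_j\notin N_L$. I need a single $P$ anticommuting with every $\sigma_j$, $j\in J$. Consider the syndrome map $j\mapsto s_j\in S^*$, the character recording commutation of $\sigma_j$ with $S$. Each $s_j$ is nonzero and vanishes on $S\cap N_L$. I would pass to the quotient $S/(S\cap N_L)$, which is an $\mathbb F_2$-vector space, and seek $P$ in it with $s_j(P)=1$ for all $j\in J$. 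A standard counting argument works here: a uniformly random $P$ satisfies each constraint with probability $1/2$. So I would select $J\subseteq J_0$ by grouping according to the value of $s_j(P)$ for a fixed $P$, choosing $P$ to maximize $|\{j : s_j(P)=1\}|$. This yields $|J|\ge |J_0|/2$.

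This halving loses a factor of $2$, so either the bound of step two must carry slack, or a better choice of $P$ must be used. This is the part I expect to be the main obstacle. First I would try to show directly that the $P$ produced in the Eastin--Knill-type argument works. That is the high-weight stabilizer witnessing $S\not\subseteq N_L$, whose anticommutation with the signal generators carries the logical action. With that choice, the qubits it flips number more than $L$, so no averaging is needed.
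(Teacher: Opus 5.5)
\textbf{There is a genuine gap.} It sits where you flagged it in Step~3, but it also affects Steps~1--2, and one idea fixes both.

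\textbf{Steps 1--2.} The argument you defer to ``distance and cleaning arguments'' is exactly the content of the proof, and the mechanism you sketch is not the right one.

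\emph{Theorem~\ref{thm:discrete-eastin-knill} is not enough.} It only yields $S\not\subseteq\mathcal N_{\le L}(S)$. That does not by itself produce any single-qubit $\sigma_j$ commuting with all of $\mathcal N_{\le L}(S)$. Paulis commuting with $\mathcal N_{\le L}(S)$ but not with $S$ exist, but need not be of the form $\sigma_j$.

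\emph{Terms on $\bar J_0$ are not trivial.} They are not negligible in their contribution to the logical action. They are only \emph{bounded}, and that bound is where the threshold $nL$ comes from.

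\emph{What is missing is the structure of $\Pi G^2\Pi$ forced by $d\ge 3$.} Knill--Laflamme for operators of weight $\le 2$ gives $\Pi\sigma_i\Pi=a_i\Pi$ and $\Pi\sigma_i\sigma_j\Pi=b_{ij}\Pi$. Hence
\[
\Xi_G:=\inf_c\|(G-c)\Pi\|_{\rm op}^2=\sum_{i,j}(b_{ij}-a_ia_j),
\]
and Duhamel gives $\Xi_G\ge s^2/|\theta|^2>nL$. Now join $i\sim j$ whenever $\pm\sigma_i\sigma_j\in S$. Components containing some $\pm\sigma_r\in S$ contribute zero. Distinct components do not interact, because $\Pi\sigma_i\sigma_j\Pi\neq 0$ with weight $2<d$ forces $\pm\sigma_i\sigma_j\in S$. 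Within an active component, $a_i=0$ and $b_{ij}=\epsilon_i\epsilon_j$. Therefore
\[
\Xi_G=\sum_{C\ \mathrm{active}}\Bigl(\sum_{i\in C}\epsilon_i\Bigr)^2\le\sum_{C}|C|^2,
\]
so some active component has $|C|>L$. Your $J_0$ is a union of such components. Components outside $J_0$ have size $\le L$, since a low-weight normalizer element catching one qubit catches the whole component. This is what actually controls the $\bar J_0$ contribution.

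\textbf{Step 3.} The averaging argument fails: $|J_0|/2$ need not exceed $L$. Your fallback, that a witness of $S\not\subseteq\mathcal N_{\le L}(S)$ flips more than $L$ of the $\sigma_j$, is unjustified. Nothing controls the anticommutation pattern of an arbitrary such witness.

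\emph{The fix is to take $J$ to be a single large component $C$, not all of $J_0$.} For $i,j\in C$ we have $\sigma_i\sigma_j\in\pm S$, and every element of $\mathcal N(S)$ commutes with $S$. So all $\sigma_j$ with $j\in C$ have identical commutation relations with every normalizer element. This gives both conditions at once:
\begin{enumerate}
\item A $Q\in\mathcal N(S)$ with $\mathrm{wt}(Q)\le L$ that anticommuted with one $\sigma_j$ would anticommute with all $|C|>L$ of them, which is impossible. So condition~1 holds.
\item $\sigma_r$ (for $r\in C$) is neither stabilized nor, having weight $1<d$, a logical operator. So some $P\in S$ anticommutes with it, and hence with every $\sigma_j$, $j\in C$. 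Then $P\notin\mathcal N_{\le L}(S)$ follows from condition~1. So condition~2 holds.
\end{enumerate}
No averaging, and no factor of $2$, is needed.
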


\noindent When small-angle rotations are used to generate transversal logical gates, the resulting code therefore has high-weight stabilizer generators whose measurement is unavoidable for correcting basic single-qubit errors: in the small-angle regime $|\theta| = o(n^{-1/2})$ relevant to transversal sensing, there are $\textnormal{poly}(n)$-many single-qubit errors that can only be detected by measurement of a stabilizer whose weight also grows polynomially in $n$. The same constraint is relevant to non-Clifford fault tolerance, because avoiding increasingly high-weight stabilizer measurements requires transversal non-Clifford gates to be implemented using single-qubit rotations whose angles are sufficiently large that the bound on stabilizer weight does not grow with the code size.

As an intuitive illustration of how Theorems~\ref{thm:discrete-eastin-knill} and \ref{thm:lowweight-syndr-incompleteness} impede practical fault-tolerant realizations of transversal sensing, consider an idealized Shor-style syndrome extraction \cite{shor1997faulttolerantquantumcomputation} in which a perfectly prepared $w$-qubit GHZ ancilla is coupled to $w$ data qubits supporting an irreducible weight-$w$ check, with faults occurring independently with probability $p$ at each data--ancilla interaction and nowhere else. The measured syndrome is flipped when an odd number of faults occur, so the syndrome-extraction error probability is $\frac{1}{2}-(1-2p)^w/2$; with only a constant number of gadget repetitions, keeping this below any constant less than $1/2$ requires $p = O(1/w)$, so the maximum tolerable fault probability per data--ancilla interaction vanishes as the stabilizer weight grows. Moreover, because the stabilizer is irreducible in the sense of Theorem~\ref{thm:lowweight-syndr-incompleteness}, any strategy for replacing the high-weight check by a collection of lower-weight measurements is unavailable.

\vspace{-1em}
\subsection{No-go theorem for beyond-SQL metrology}\label{sec:transverse-sens-no-go}

The previous section established stringent constraints on stabilizer codes with transversal non-Clifford gates, but more general approaches to beyond-SQL metrology may remain via exotic protected sensing schemes using e.g. nonstabilizer codes, approximate covariant QEC \cite{Liu_2023, elovenkova2026covariantapproximatequantumcodes}, distillation \cite{Bombin_2006}, or weight reduction \cite{hastings2023quantumweightreduction}. We now formally rule out all approaches to asymptotic beyond-SQL metrology in the presence of signal-aligned noise, beginning with a weaker but practically illustrative consequence of Theorem~\ref{thm:discrete-eastin-knill}.

Any sensor that estimates a parameter $\omega$ to precision $\epsilon$ with constant success probability can be applied to distinguish the two parameters $\omega_{\pm} = \omega_0 \pm \epsilon$. Prepare the same encoded probe state $\rho_0$ in each case and evolve it for time $T$ under the signal Hamiltonians $H_{\pm} = \omega_{\pm}G$. After removing the common evolution generated by $\omega_0G$, the two hypotheses differ by the transversal unitary $W_T := e^{-2i\epsilon T G}$. Letting $\overline W_T$ denote its induced logical action and $\rho_{\pm}$ the two final probe states, Helstrom's bound \cite{Helstrom1976QuantumDetection} gives that the optimal distinguishing probability is bounded by 
\begin{equation}
\frac{1}{2} + \frac{1}{4}\|\rho_+ - \rho_-\|_1 \le \frac{1}{2}\left(1 + \inf_{z \in \mathbb C}\left\|\overline W_T - z I_{\mathcal{L}}\right\|_{\rm op}\right) \ .
\end{equation}
Consequently, distinguishing the two frequencies with success probability at least $1/2 + p$ requires $\inf_{z \in \mathbb C}\|\overline W_T - z I_{\mathcal{L}}\|_{\rm op} \ge 2p$. For constant $p>0$, Theorem~\ref{thm:discrete-eastin-knill}, applied with rotation angle $2\epsilon T$, implies that any transversal DC sensing protocol must satisfy
\begin{equation}\label{eq:transversal_cw_sql}
T = \Omega\!\left(\frac{1}{\epsilon\sqrt{n\lambda_S^{\mathcal N}}}\right),
\end{equation}
because distinguishability is a prerequisite for metrology. Conversely, a transversal sensing protocol with interrogation time $T = O(1/(\epsilon n^\alpha))$ must satisfy $\lambda_S \ge \lambda_S^{\mathcal N} = \Omega(n^{2\alpha-1})$: for every $\alpha > 1/2$, the required stabilizer-check weight diverges with the code size, so no family of stabilizer codes with bounded-weight checks can support beyond-SQL transversal sensing.

While this bound rules out many practical encoding schemes, it does not formally exclude fault-tolerant constructions with growing-weight checks, nonstabilizer codes, or approximate error correction. We next prove a sensing-specific no-go theorem that excludes all three possibilities and applies beyond transversal sensing to arbitrary metrological protocols.

In this model, signal-aligned dephasing acts during each sensing interval, while every transition from sensing to quantum processing subjects each sensor to independent erasure, representing a fixed local noise floor at the processor--sensor interface. Such control-associated noise is essential in any metrological model permitting fast local processing of the sensors, since operations at the processor--sensor interface are unprotected and cannot be executed arbitrarily quickly; Refs.~\cite{CotlerGongKannan2026NoisyQuantumLearning, kannan2026exponentialspeedupsfaulttolerantprocessing, romanov2026learningarbitrarylindbladiansquantum} and Appendix~\ref{app:no-go-for-sensing} further discuss why this interface noise is physically unavoidable. Any fixed-strength local interface noise produces the same asymptotic scaling in $n$ and $\epsilon$, and our result already precludes beyond-SQL sensing without control-associated errors, but including them yields a more operationally complete bound. Subject to these noise processes, we allow the protocol to use arbitrary quantum memories and ancillas, intermediate controls and measurements, adaptive feedback, and any number of sensing rounds; the model therefore encompasses arbitrary QEC and error-mitigation procedures, since quantum and classical processing between signal queries is unrestricted.

\begin{theorem}[No-go theorem for beyond-SQL metrology, simplified]\label{thm:general-tranverse-sensing-no-go_main}
Consider any multi-round sensing protocol in which round $r$ evolves for time $t_r$ under the signal Hamiltonian $H_\omega = \omega G$ while subject to independent local dephasing of strength $\gamma > 0$ aligned with the signal generator. Between rounds the protocol may use arbitrary quantum memory, control, measurements, and adaptivity. At every interface between signal interrogation and quantum processing, suppose that each sensing qubit independently undergoes a flagged erasure with probability $p_{\mathrm{e}}\in (0, 1)$. If $\rho_\omega^{\mathrm{out}}$ denotes the final state produced by the protocol for signal Hamiltonian $H_\omega$, then for any $\omega_0,\omega_1$ with $\epsilon = |\omega_0-\omega_1|$, we have
\begin{equation}
d_B\!\left(\rho_{\omega_0}^{\mathrm{out}},\rho_{\omega_1}^{\mathrm{out}}\right) \leq \epsilon \min\left\{ nT,\, T\sqrt{\frac{n(1-p_{\mathrm{e}})}{p_{\mathrm{e}}}},\, \frac{1}{2}\sqrt{\frac{nT}{\gamma}} \right\},
\end{equation}
where $T = \sum_r t_r$ is the total interrogation time.
\end{theorem}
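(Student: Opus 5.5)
We bound the Bures angle via a telescoping (hybrid) argument over the rounds of the protocol. The strategy is to establish, for each of the three terms in the minimum, a per-round "signal channel" bound. For the Bures channel distance, applying a common channel, including arbitrary processing, measurement, adaptivity and memory with ancilla, cannot increase it. Sequential composition satisfies $d_B(\mathcal E_2\mathcal E_1,\mathcal F_2\mathcal F_1)\le d_{\mathrm{ch}}(\mathcal E_1,\mathcal F_1)+d_{\mathrm{ch}}(\mathcal E_2,\mathcal F_2)$. For the second and third terms we need something sharper than additivity, because they scale as $\sqrt{T}$ or involve erasure. For those we work with the purified-distance/fidelity framework of Ref.~\cite{Yuan_Fung_2017}. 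There, the squared Bures angle accumulates additively across rounds when each round's channel pair admits a "nearly-identical Kraus representation," i.e., a channel extension $\mathcal{E}_\omega$ with $\|\beta\|_{\rm op}$-type bounds where $\beta = i\sum_k \dot K_k^\dagger K_k$. Concretely, I plan to prove the bound for an infinitesimally separated pair and integrate along $\omega$. This requires showing that $d_B$ grows at rate at most the square root of a channel QFI, which is additive over sequential adaptive uses with an SQL-type bound. I will then convert a QFI bound of the form $F \le c\,T$ (or $F\le c\,T^2$) into $d_B\le \epsilon\sqrt{F}/2$ via the Bures metric being $\sqrt{F}/2$ times the parameter distance, integrated along the segment $[\omega_0,\omega_1]$.

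The three terms are then handled separately. \emph{First term, $\epsilon nT$:} this is unitary and noiseless. Each round contributes a unitary difference $e^{-i\epsilon t_r G}$ acting on $n$ qubits with $\|G\|_{\rm op}\le n$. Hence its channel Bures distance is at most $\epsilon t_r n$, and additivity over rounds gives $\epsilon nT$. \emph{Third term, $\tfrac12\epsilon\sqrt{nT/\gamma}$:} this is the HNLS-violating dephasing bound. For a single qubit, the channel $e^{t(\mathcal L_\omega)}$ with $\mathcal L_\omega = -i\omega[\sigma,\cdot]+\gamma(\sigma\cdot\sigma-\cdot)$ has a Kraus representation. In it, the signal can be absorbed by a parameter-dependent unitary rotation of Kraus operators, yielding a $\beta$ of zero "coherent" part. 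The resulting channel QFI per qubit per unit time is then at most $1/\gamma$, up to the $1/4$ normalization. This is the standard extended-channel-QFI computation of Demkowicz-Dobrza\'nski--Zhou--Jiang for $H$ in the Lindblad span. Since the $n$ qubits dephase independently, these quantities add across qubits, and the sequential adaptive bound of Zhou--Jiang adds them across time. This gives total QFI $\le nT/\gamma$ and hence $d_B\le \tfrac{\epsilon}{2}\sqrt{nT/\gamma}$. \emph{Second term:} erasure at each interface is modeled as a flagged channel applied to each sensor after each round. For a single qubit the round's output is $(1-p_e)\,\mathcal U_{\omega}(\rho)\oplus p_e\,|\mathrm{e}\rangle\langle \mathrm{e}|$. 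Choosing the Kraus representation that optimally rephases the erasure branch bounds the per-qubit QFI for time $t_r$ by $t_r^2(1-p_e)/p_e$. The $\|\alpha\|$ term is then removable because the erased branch leaks the phase reference, as in the standard erasure bound $4t^2(1-p)/p$ up to normalization. Summing over qubits and using $\sum_r t_r^2\le T^2$ with sequential additivity then gives $\epsilon T\sqrt{n(1-p_e)/p_e}$.

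The main obstacle is the conversion to a Bures angle with no Cram\'er--Rao assumptions, for \emph{finite} $\epsilon$ and adaptive protocols including memory. The worry is that QFI additivity for sequential strategies normally requires the $\alpha=0$ condition, and some care is needed with the $\sqrt{\cdot}$ versus linear scaling. I would first try the purified-distance chain rule $d_B(\rho_{\omega_0},\rho_{\omega_1})\le\int_{\omega_0}^{\omega_1}\sqrt{F(\omega)}/2\,d\omega$ along the one-parameter output family. This triangle inequality for the Bures metric holds without any estimator assumptions. Within it, I would bound $F(\omega)$ using the Zhou--Jiang sequential bound $F\le 4\sum_r(\|\alpha_r\|t+\ldots)$ specialized to the $\alpha=0$ Kraus choices above. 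I expect verifying that the erasure interface plus dephasing Kraus choices make the linear term vanish uniformly in $\omega$ to be the delicate step.
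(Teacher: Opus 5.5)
Your concrete plan is correct and is essentially the paper's proof. For each single-qubit round channel you pick a Kraus representation $\widetilde K=e^{-i\omega h}K$ with $\beta=0$, sum $4\|\alpha_\ell\|_{\mathrm{op}}$ over all $nm$ single-qubit uses via the adaptive channel-extension bound, and integrate $d_B\le\frac12\int\sqrt{F_Q}\,d\omega$. What adds across uses is the bound on $\|\dot\Psi_\omega\|^2$, not the squared finite-separation Bures angle of your first paragraph, so the integration route you settle on is the right one.

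The paper instead uses one joint representation for dephasing followed by erasure: $h_r$ mixes $K_0,K_1$ and rephases the two erasure operators oppositely. This gives $\alpha_r=\frac{(1-p_{\mathrm{e}})t_r^2e^{-4\gamma t_r}}{1-(1-p_{\mathrm{e}})e^{-4\gamma t_r}}$, which is at most both $\frac{1-p_{\mathrm{e}}}{p_{\mathrm{e}}}t_r^2$ and $\frac{t_r}{4\gamma}$. One computation therefore yields both simplified terms and a sharper exact bound.

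Your split into the textbook dephasing-only and erasure-only cases is simpler and suffices for the stated theorem. You should, however, say why the unused noise may be dropped: erasure already follows the signal, and dephasing commutes with $e^{-i\omega t_rG}$, so either can be absorbed into the parameter-independent controls. For the first term, your hybrid argument with the stabilized channel angle $d_{\mathrm{ch}}(\mathcal U_{\epsilon t_r},\mathrm{id})\le n\epsilon t_r$ is a valid QFI-free alternative to the paper's purification bound $\|\dot\Psi_\omega\|\le nT$.

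Finally, fix the labels in your last two paragraphs. The condition needed uniformly in $\omega$ is $\beta=0$, not $\alpha=0$, and it is $\beta$ that the erasure rephasing removes. Concretely, attaching phases $e^{\pm i\omega c}$ to $|e\rangle\langle0|$ and $|e\rangle\langle1|$ with $c=t_r(1-p_{\mathrm{e}})/p_{\mathrm{e}}$ gives $\beta=0$ and the $\omega$-independent $\alpha=t_r^2\frac{1-p_{\mathrm{e}}}{p_{\mathrm{e}}}I$. The step you call delicate is thus a direct calculation.
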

\noindent Combining Theorem~\ref{thm:general-tranverse-sensing-no-go_main} with the Helstrom distinguishability requirement yields the lower bound
\begin{equation}\label{eq:T-lower-bound}
T = \Omega\!\left(\max\left\{\frac{1}{\epsilon\,n},\,\sqrt{\frac{p_{\mathrm{e}}}{1-p_{\mathrm{e}}}}\,\frac{1}{\epsilon\sqrt{n}},\,\frac{\gamma}{\epsilon^2 n}\right\}\right)
\end{equation}
on the total interrogation time of an arbitrary sensing protocol, yielding a direct tradeoff between signal-aligned noise and sensing efficiency. The first term is the Heisenberg limit and can dominate only in a nonasymptotic regime where both noise rates are sufficiently small relative to $n$ and $1/\epsilon$. The second reproduces the coherent-evolution $1/(\epsilon\sqrt{n})$ SQL scaling obtained from Eq.~\eqref{eq:transversal_cw_sql} for bounded $\lambda_S^{\mathcal N}$, while the third is the familiar $1/(n\epsilon^2)$ SQL for metrology under Markovian dephasing with repeated interrogation \cite{Demkowicz_Dobrza_ski_2012}. For error rates fixed as $n$ and $\epsilon$ vary, the coherent-evolution term dominates the Heisenberg term beyond small $n$, and the Markovian term dominates it when $\epsilon$ is small. Both asymptotic scalings are achievable by entanglement-free sensing strategies, so Theorem~\ref{thm:general-tranverse-sensing-no-go_main} formally rules out any asymptotic quantum metrological speedup under signal-aligned noise; the result extends straightforwardly to any setting with a constant noise floor in the signal direction, whether Markovian or interstitial, and thereby circumscribes most practical metrological settings. 

Unlike the HNLS condition of Ref.~\cite{Zhou_2018}, which excludes beyond-SQL scaling only for unbiased estimators governed by the QCRB, Theorem~\ref{thm:general-tranverse-sensing-no-go_main} applies directly to the output-state distinguishability of arbitrary single-parameter sensing protocols---with or without the use of QEC---and therefore closes the gap left by QCRB-based no-go theorems: when the signal and local noise act through the same generators, no protocol can achieve an asymptotic quantum advantage. Appendix~\ref{app:no-go-for-sensing} establishes the corresponding formal statements for both DC and AC sensing.

A special case of the distinguishability bound exposes the mechanism that prevents transversal sensing from correcting signal-aligned noise. Let $\mathcal E$ be an encoding channel into $n$ physical qubits, and suppose that $U_\theta = e^{-i\theta G}$, with $G = \sum_{j=1}^n \sigma_j$, implements a logical unitary $\overline U_\theta$ satisfying $\inf_{z \in \mathbb C}\|\overline U_\theta - z I_{\mathcal{L}}\|_{\mathrm{op}} \ge s > 0$. For an interrogation time $T$, let $\mathcal N_{\gamma,T}^{\mathrm{deph}} = \bigotimes_{j=1}^n \mathcal N_{\gamma,T}^{(j),\mathrm{deph}}$ denote independent Markovian dephasing aligned with the signal generator, and assume that all quantum processing is noiseless. Define the optimal logical correction error
\begin{equation}
\delta_n(\gamma,T) := \inf_{\mathcal R} d_{\mathrm{ch}}\!\left(\mathcal R \circ \mathcal N_{\gamma,T}^{\mathrm{deph}} \circ \mathcal E,\mathrm{id}_L\right),
\end{equation}
where the infimum is over all recovery channels $\mathcal R$. As shown in Appendix~\ref{app:no-go-for-sensing},
\begin{equation}
\delta_n(\gamma,T) \ge \frac{s}{4} - \frac{|\theta|}{4}\sqrt{\frac{n}{\gamma T}}\,.
\label{eq:transversal-dephasing-bound}
\end{equation}
Now suppose that $s = \Omega(1)$, $\gamma = \Theta(1)$, and $|\theta| = T\epsilon = O(n^{-\alpha})$ for some $\alpha > 1/2$, as required for beyond-SQL transversal sensing. The second term in Eq.~\eqref{eq:transversal-dephasing-bound} then satisfies $|\theta|\sqrt{n/(\gamma T)} = O(\sqrt{\epsilon \,n^{1-\alpha}})$ and vanishes in the high-precision regime $\epsilon = o(n^{\alpha-1})$, so
\begin{equation}
    \liminf_{n\rightarrow\infty}\delta_n(\gamma,T) \ge s/4
\end{equation} 
and the optimal logical correction error remains bounded away from zero. Outside this high-precision regime, the dephasing contribution to Theorem~\ref{thm:general-tranverse-sensing-no-go_main} requires $T = \Omega(\gamma/(\epsilon^2 n))$, which already has SQL scaling. No family of transversal sensing codes can therefore achieve beyond-SQL scaling while correcting signal-aligned noise with asymptotically vanishing logical error, even when nonstabilizer codes and approximate error correction are allowed. The same distinguishability bound implies that covariant codes supporting continuous logical $\mathrm U(1)$ rotations, such as those proposed in Refs.~\cite{Lin_2025,huang2026robustphasecontinuoustransversal}, cannot yield beyond-SQL metrology under signal-aligned noise whose strength remains nonzero asymptotically. 

\vspace{-2mm}
\section{Discussion}

We have shown that transversal non-Clifford gates, as well as logical gates generated by sufficiently small single-qubit physical rotations, require irreducible stabilizer structure of growing weight, making fault-tolerant syndrome extraction increasingly demanding with increasing Clifford-hierarchy level and decreasing physical rotation angle. We also derived a no-go theorem excluding asymptotic beyond-SQL metrology under local noise aligned with the signal generator, without assuming unbiased estimation or the applicability of the quantum Cram\'er-Rao bound.

At finite code sizes, our bounds quantify the stabilizer-check weights required by a given transversal logical action. They therefore provide quantitative design constraints complementary to the Eastin--Knill and Bravyi-K\"onig theorems and answer an open question posed in Ref.~\cite{chakraborty2026nogotheoremfaulttolerant}. It remains to identify practical code families that approach these bounds, to determine whether analogous nonlocality constraints hold for subsystem, nonstabilizer, and approximate codes admitting transversal non-Clifford gates, and to determine whether an irreducibility statement like Theorem~\ref{thm:lowweight-syndr-incompleteness} applies to the general setting of Theorem~\ref{thm:clifford-hierarchy-wb}.

Our metrological bounds have closed the gaps left by earlier no-go theorems for beyond-SQL error-corrected sensing \cite{Huelga_1997, Escher_2011, Demkowicz_Dobrza_ski_2012, Kessler_2014, Zhou_2018}, ruling out any asymptotic quantum speedup under nonvanishing signal-aligned noise. This shifts the emphasis in canonical AC and DC phase sensing toward useful finite-size advantages that nevertheless remain. In many cases entanglement can produce nearly quadratic improvements at experimentally relevant finite sensor numbers before noise removes the asymptotic gain \cite{Nichols_2016, Zhou_2020, Kielinski_2024, CotlerGongKannan2026NoisyQuantumLearning}, and transversal sensing may similarly protect useful encoded states for particular combinations of code size, encoding rate, physical noise, interrogation time, and target precision. Determining these crossover regimes, together with whether their preasymptotic gains justify the required control overhead, is therefore a natural practical objective.

More broadly, quantum advantage may persist in sensing tasks outside canonical single-parameter phase estimation. Realistic experiments seldom provide a perfectly known generator, a freely chosen interrogation time, and a single parameter whose local estimation error is the sole objective \cite{Tsang_2011, Bylander2011NoiseSpectroscopy, Macieszczak_2014}; the quantity of interest is often a nonlinear or transformed feature of a structured multiparameter signal rather than the parameters themselves \cite{cotler2026quantumadvantagesensingproperties}, and many established sensing platforms use bosonic modes, whose signal structure and control constraints differ substantially from those of qubit arrays \cite{Clerk_2010, Backes_2021, eickbusch2022fast}. A sensing framework centered on global inference, structured-signal learning, measurement complexity, or constrained access models may support superpolynomial and potentially noise-robust advantages through mechanisms distinct from Heisenberg scaling \cite{caves_1981, lloyd2008quantumillumination, Tan_2008, Tsang_2011, Aharonov_2022, cotler2026quantumadvantagesensingproperties}. These directions motivate a theory of quantum sensing organized around broader inference objectives, finite resources, and realistic experimental architectures.

\vspace{-1em}
\section*{Acknowledgements}
We thank Pablo Bonilla, Andrei Diaconu, Jin Ming Koh, Rohan Mehta, Nikita Romanov, Mikhail Lukin, Mincheol Park, and Daniel Tan for helpful discussions. CCVP is supported in part by a Caltech SURF Fellowship. JC is supported by a fellowship from the Alfred P.~Sloan Foundation. IK is supported in part by the Nobile Research Initiative. HP was supported by the Department of Defense through the National Defense Science and Engineering Graduate (NDSEG) Fellowship Program.

\putbib
\end{bibunit}
\clearpage
\onecolumngrid

\appendix

\makeatletter
\@removefromreset{equation}{section}
\@removefromreset{figure}{section}
\@removefromreset{table}{section}
\makeatother

\renewcommand{\thesection}{S\arabic{section}}
\renewcommand{\theequation}{S\arabic{equation}}
\renewcommand{\thefigure}{S\arabic{figure}}
\renewcommand{\thetable}{S\arabic{table}}

\setcounter{equation}{0}
\setcounter{figure}{0}
\setcounter{table}{0}
\setcounter{tocdepth}{2}

\begin{bibunit}
\input{supplement}

\putbib
\end{bibunit}

\end{document}

%% file: supplement.tex
\onecolumngrid
\startcontents[appendix]

\allowdisplaybreaks

\vspace*{1em}
\begin{center}
  {\large\bfseries Supplemental Material}
\end{center}
\vspace{1em}

\begingroup
  \hypersetup{hidelinks}

  \begin{center}
    \textbf{Contents}
  \end{center}
  \vspace{0.5em}

\printcontents[appendix]{}{1}{}
\endgroup

\makeatletter
\@removefromreset{equation}{section}
\makeatother
\renewcommand{\theequation}{S\arabic{equation}}

\section{Proofs of Theorems \ref{thm:clifford-hierarchy-wb} and \ref{thm:concatenated_no_go}}
First we prove Theorem \ref{thm:clifford-hierarchy-wb}. For brevity, we refer to a logical gate in
$\mathcal C_D\setminus\mathcal C_{D-1}$ as a level-$D$ gate. The physical one-qubit factors $U_j$ are arbitrary elements of $U(2)$ and need not themselves lie in the Clifford hierarchy. Our first step is to show that, after local Clifford changes of basis, any transversal unitary preserving a stabilizer code is equivalent on the codespace to a transversal diagonal gate with dyadic phases. For $\alpha\in\mathbb R$, define
\begin{equation}
    Z(\alpha):=\mathrm{diag}(1,e^{i\pi\alpha}),\qquad T_E:=Z(2^{1-E}),\qquad D_q^{(E)}:=\bigotimes_{j=1}^nT_E^{q_j},\qquad q\in(\mathbb Z/2^E\mathbb Z)^n.
\end{equation}

\subsection{Preliminary results}
\noindent For $D=1$, we interpret a genuinely level-$1$ gate as a nontrivial logical Pauli, up to phase. If $D=1$ and $\lambda_S=1$, then $S$ is generated by weight-one Paulis. Since the code encodes at least one qubit, some physical qubit is unstabilized and supports a weight-one logical Pauli, contradicting $d\ge 3$. Thus $\lambda_S\ge 2$. If $D=2$ and $\lambda_S<4$, then $S$ is generated by checks of weight at most three, which forces $d\le 2$~\cite{wang2026checkweight}, a contradiction. We therefore assume $D\ge 3$ throughout the rest of the section.
\begin{definition}
\label{def:bell-pair-factors}

A stabilizer code $\mathcal C$ has a tensor factor $\mathcal C_B$ on a
subset $B$ of the physical qubits if, after a relabeling of qubits,
    $\mathcal C=\mathcal C_{B^c}\otimes\mathcal C_B.$
A \emph{Bell-pair factor} of a stabilizer code is a two-qubit $[[2,0,2]]$ stabilizer-code tensor factor.
A \emph{trivially encoded qubit} is a $[[1,1,1]]$ tensor factor. A code is Bell-pair free, respectively free of trivially encoded qubits, if it has no tensor
factor of the corresponding type.
\end{definition}

\noindent
This terminology is from \cite[Def.~11]{ZengCrossChuang2011}.
A Bell-pair factor is an isolated pair of physical qubits whose state is fixed, up to local Clifford changes of basis, to a single Bell state. It carries no logical information, so any transversal unitary preserving the full codespace can act on this one-dimensional factor only by a phase. By
contrast, a trivially encoded qubit is a bare physical qubit carrying one logical qubit with no redundancy, and therefore forces the code distance to be one.
\begin{proposition}
\label{prop:zcc-local-normal-form}
Let $\mathcal C$ be a stabilizer code that is free of Bell-pair factors and trivially encoded qubits. If $W=\bigotimes_{j=1}^nW_j$ preserves $\mathcal C$, then, for every $j$, there exist $\gamma_j,\alpha_j\in\mathbb R$, a one-qubit Clifford gate $L_j$, and a nonidentity one-qubit Pauli $P_j\in\{X,Y,Z\}$ such that
\begin{equation}
    W_j=e^{i\gamma_j}L_je^{i\alpha_jP_j}.
\end{equation}
The Clifford case is included by taking $\alpha_j=0$.
\end{proposition}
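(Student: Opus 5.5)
The plan is to reproduce the local-normal-form argument of Zeng, Cross and Chuang~\cite{ZengCrossChuang2011}. The idea is to use the reduced states of the codespace on small subsets of qubits, which a transversal unitary must preserve.

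\textbf{Step 1 (reduced states and their invariance).} Let $\rho=\Pi/\mathrm{tr}\,\Pi$ be the maximally mixed code state. For $\omega\subseteq[n]$, the reduced state is
\begin{equation}
\rho_\omega\propto\sum_{s\in S_\omega}s|_\omega,
\end{equation}
where $S_\omega$ is the subgroup of stabilizers supported inside $\omega$. Since $W\Pi W^\dagger=\Pi$ and $W$ is a tensor product, $W_\omega\rho_\omega W_\omega^\dagger=\rho_\omega$ for every $\omega$, where $W_\omega=\bigotimes_{j\in\omega}W_j$. I will apply this to the \emph{minimal supports} of $S$, namely supports of nonidentity stabilizers that contain no smaller nonidentity stabilizer support. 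The hypothesis of no trivially encoded qubits is used here. If qubit $j$ were in no support of $S$, it would carry a free logical qubit, i.e.\ form a $[[1,1,1]]$ factor. Hence every qubit $j$ lies in the support of some nonidentity stabilizer, and therefore of some minimal one.

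\textbf{Step 2 (local Paulis seen at $j$).} For a minimal support $\omega\ni j$, the elements of $S_\omega$ with full support $\omega$ are linearly independent Pauli strings. Since $\rho_\omega$ is invariant under conjugation by $W_\omega$, and minimality controls which strings can appear, $W_\omega$ must conjugate this set of full-support strings into the span of $S_\omega$. Comparing tensor-product structures, I will argue that $W_j$ conjugates each single-qubit Pauli occurring at position $j$ in such an element to a single-qubit Pauli, up to a phase. Now let $A_j\subseteq\{X,Y,Z\}$ be the set of Paulis appearing at qubit $j$ across all minimal elements. There are two cases.
\begin{enumerate}[label=(\roman*)]
\item If $A_j$ contains two anticommuting Paulis, then $W_j$ maps two generators of the one-qubit Pauli group to Paulis. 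Hence $W_j$ is, up to phase, a Clifford $L_j$, and we take $\alpha_j=0$.
\item If $A_j=\{P\}$ for a single Pauli $P$, then $W_jPW_j^\dagger=\pm P'$ for some Pauli $P'$. Choose a Clifford $L_j$ with $L_jPL_j^\dagger=P'$ and the correct sign. Then $L_j^\dagger W_j$ commutes with $P$, so it equals $e^{i\gamma_j}e^{i\alpha_jP}$, which gives the claimed form with $P_j=P$.
\end{enumerate}

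\textbf{Main obstacle.} The hard part is Step 2: showing that invariance of $\rho_\omega$ forces each individual factor $W_j$, not merely the product $W_\omega$, to map the relevant Paulis to Paulis. The Bell-pair exclusion is exactly what is needed here. For $\omega=\{j,k\}$ with $S_\omega=\langle XX,ZZ\rangle$, the pair $W_j=U$, $W_k=\bar U$ preserves $\rho_\omega$ for arbitrary $U$, so no normal form holds for that factor. I will first handle minimal supports of weight at least three. There I would write $W_\omega\, s\, W_\omega^\dagger$ in the Pauli basis and use the fact that the Pauli expansion of a product $\bigotimes_j W_j\sigma_jW_j^\dagger$ has product coefficients. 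Requiring that this expansion stay inside the span of $S_\omega$, and using minimality (no proper-support terms may be created), should force each local factor to send $\sigma_j$ to a single Pauli.

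For weight-two minimal supports I will use the ZCC classification. Such a support is either part of a Bell-pair factor, which is excluded by hypothesis, or it contains a single element $P\otimes Q$, which is handled by case (ii) after a Clifford change of basis. Checking that these cases are exhaustive and that the coefficient-product argument goes through cleanly is where I expect most of the work.
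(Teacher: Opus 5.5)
Your overall strategy is the Zeng--Cross--Chuang minimal-support argument that the paper cites. However, Step~1 rests on a false claim: a qubit lying in the support of some nonidentity stabilizer need not lie in any \emph{minimal} support.

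Take $S=\langle X_1X_2X_3,\,Z_2Z_3\rangle$. Its nonidentity elements are $X_1X_2X_3$, $Z_2Z_3$ and $-X_1Y_2Y_3$, so the only minimal support is $\{2,3\}$. Yet the code has no Bell-pair factor (only one nontrivial stabilizer lives on $\{2,3\}$) and no trivially encoded qubit (every qubit is acted on by $X_1X_2X_3$).

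Adding a distance hypothesis does not help. In the $[[5,1,2]]$ code $S=\langle X_1X_2X_3,\,Z_2Z_3,\,Z_1Z_2X_4X_5,\,Z_4Z_5\rangle$, every stabilizer acting on qubit $1$ acts on both qubits $2,3$ or on both qubits $4,5$. So qubit $1$ again lies in no minimal support. For such a qubit your set $A_j$ is empty, and neither case (i) nor case (ii) applies. This is why the paper's proof uses two citations: Theorem~1 of Zeng--Cross--Chuang for coordinates in a minimal support, and their Lemma~5 for all remaining coordinates.

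You can supply the missing case directly. Let $\omega$ be a support of minimum size among those containing $j$.
\begin{enumerate}[label=(\alph*)]
\item Suppose two different Paulis occurred at $j$ in $S_\omega$. Then there are $g,h\in S_\omega$ such that $g,h,gh$ all have support exactly $\omega$, and so carry three distinct Paulis at every $i\in\omega$.
\item Now take any nonidentity $t\in S$ with $\mathrm{supp}(t)\subsetneq\omega$. Either $t$ contains $j$, or $t$ cancels one of $g,h,gh$ at some $i\in\mathrm{supp}(t)$. In both cases you get a smaller support through $j$.
\item Hence no such $t$ exists, and $\omega$ would be a minimal support containing $j$, contrary to assumption.
\end{enumerate}
So a single Pauli $v$ occurs at $j$ in $S_\omega$, and $\rho_\omega\propto I_j\otimes B+v_j\otimes A$ with $A\neq 0$. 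Comparing the $X_j,Y_j,Z_j$ components of $W_\omega\rho_\omega W_\omega^\dagger=\rho_\omega$ forces $W_jvW_j^\dagger=\pm v$. Your case~(ii) argument then finishes.

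Separately, your sketch for minimal supports with $|S_\omega|=4$ relies on the wrong mechanism.
\begin{itemize}
\item Conjugating a full-support Pauli string by a product unitary never creates proper-support terms, so minimality imposes nothing there.
\item Only the sum $M_1+M_2+M_3$ must be invariant, not each $M_a$ separately. Cross terms can therefore cancel, which is exactly what happens for a Bell pair.
\end{itemize}
What works when $|\omega|=m\ge 3$ is a uniqueness argument. (Here $m$ is even, hence $m\ge4$, because $M_1,M_2$ anticommute on every site of $\omega$ yet commute.) After local Cliffords, $M_a=\epsilon_a\sigma_a^{\otimes m}$ with $\sigma_a\in\{X,Y,Z\}$. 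Write $u^{(j)}_a$ for the columns of the rotation $R^{(j)}\in SO(3)$ induced by $W_j$. Invariance then reads $\sum_a\epsilon_a\,u^{(1)}_a\otimes\cdots\otimes u^{(m)}_a=\sum_b\epsilon_b\,e_b^{\otimes m}$. Contracting legs $3,\dots,m$ against $u^{(3)}_c\otimes\cdots\otimes u^{(m)}_c$ shows that the rank-one matrix $u^{(1)}_c\bigl(u^{(2)}_c\bigr)^{T}$ is diagonal. Hence $u^{(1)}_c=\pm e_{\pi(c)}$, so $R^{(1)}$ is a signed permutation and $W_1$ is Clifford up to phase. This fails for $m=2$, which is exactly the Bell-pair exception.
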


\begin{proof}
For coordinates contained in a minimal stabilizer support, this is proved in \cite[Theorem~1]{ZengCrossChuang2011}. For all remaining coordinates, it is proved in
\cite[Lemma~5]{ZengCrossChuang2011}.
\end{proof}

\begin{proposition}
\label{prop:stabilizer-support-normal-form}
Let $\mathcal C$ be an $[[n,k]]$ stabilizer code.
\begin{enumerate}[label=(\roman*)]
    \item There is an orthonormal basis
    $\{|\psi_\ell\rangle\}_{\ell=1}^{2^k}$ of $\mathcal C$ whose
    computational-basis supports are pairwise disjoint and such that
    \begin{equation}
        |\psi_\ell\rangle =c_\ell\sum_{x\in S_\ell}i^{a_\ell(x)}|x\rangle,
    \end{equation}
    where $S_\ell\subseteq\mathbb F_2^n$, $c_\ell>0$, and
    $a_\ell:S_\ell\rightarrow\mathbb Z/4\mathbb Z$.

    \item Conversely, if
    $\{|\phi_\ell\rangle\}_{\ell=1}^{2^k}$ is any orthonormal basis of
    $\mathcal C$ with pairwise disjoint computational-basis supports,
    then
    \begin{equation}
        |\phi_\ell\rangle =e^{i\beta_\ell}d_\ell
    \sum_{x\in R_\ell}i^{b_\ell(x)}|x\rangle,
    \end{equation}
    where $R_\ell$ is the computational-basis support of
    $|\phi_\ell\rangle$, $d_\ell>0$, and
    $b_\ell:R_\ell\rightarrow\mathbb Z/4\mathbb Z$.
\end{enumerate}
\end{proposition}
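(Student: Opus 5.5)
The plan is to build the basis in (i) directly from the projector $\Pi = 2^{-(n-k)}\sum_{s\in S} s$, using the $X$-part homomorphism of the stabilizer group. Then (ii) follows from (i) by a support-counting (pigeonhole) argument.

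\emph{Setup.} Write each $s\in S$ as $s = \mu_s X^{u(s)}Z^{v(s)}$ with $\mu_s\in\{\pm1,\pm i\}$. The map $\varphi: S\to\mathbb F_2^n$, $s\mapsto u(s)$, is a group homomorphism (phases are discarded).
\begin{itemize}
\item Its kernel $K$ consists of the diagonal stabilizers.
\item Its image is a subspace $V\subseteq\mathbb F_2^n$.
\end{itemize}
For a computational basis state $|x\rangle$ we have $s|x\rangle = \mu_s(-1)^{v(s)\cdot x}|x+u(s)\rangle$. Every such coefficient lies in $\{\pm1,\pm i\}$.

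\emph{Step 1: structure of $\Pi|x\rangle$.} Group the sum $\Pi|x\rangle$ by cosets of $K$ in $S$. Two elements $s,s'$ in the same coset satisfy $s' = s\kappa$ with $\kappa\in K$ diagonal, so they send $|x\rangle$ to the same basis state $|x+u(s)\rangle$.
\begin{itemize}
\item If some $\kappa\in K$ has eigenvalue $-1$ on $|x\rangle$, the character $\kappa\mapsto\langle x|\kappa|x\rangle$ of $K$ is nontrivial. Each coset sum then vanishes, so $\Pi|x\rangle=0$.
\item Otherwise every coset contributes $|K|$ times a single phase in $\{\pm1,\pm i\}$.
\end{itemize}
Hence $\Pi|x\rangle$ is either $0$ or a vector supported exactly on $x+V$, with entries of equal modulus and phases in $\{\pm1,\pm i\}$. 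After multiplying by the positive normalization, it has the form $c\sum_{y\in x+V} i^{a(y)}|y\rangle$. This interference step is the one place requiring care, and it is the main obstacle. The key point is that all stabilizers with the same $X$-part differ by diagonal stabilizers, so they either cancel completely or add coherently.

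\emph{Step 2: the basis for (i).} If $y = x+u(s)$, then $\Pi|y\rangle \propto \Pi s|x\rangle = \Pi|x\rangle$. So each coset $x+V$ yields at most one nonzero vector up to phase, and vectors from distinct cosets have disjoint supports, hence are orthogonal. Because $\{\Pi|x\rangle\}_x$ spans $\mathcal C$, the nonzero coset vectors form an orthogonal basis of $\mathcal C$, and there are exactly $2^k$ of them. Normalizing each one, and choosing its global phase to be a power of $i$, gives $|\psi_\ell\rangle$ with support $S_\ell$ equal to one coset, $c_\ell>0$, and phases $i^{a_\ell(x)}$. This proves (i).

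\emph{Step 3: the converse (ii).} Let $\{|\phi_\ell\rangle\}$ be any orthonormal basis of $\mathcal C$ with pairwise disjoint supports. Expand each vector as $|\phi_\ell\rangle = \sum_m \langle\psi_m|\phi_\ell\rangle|\psi_m\rangle$. Since the $\psi_m$ have disjoint supports $S_m$, the support of $\phi_\ell$ is exactly the union of those $S_m$ with $\langle\psi_m|\phi_\ell\rangle\neq0$.
\begin{itemize}
\item Each of the $2^k$ vectors $\phi_\ell$ is nonzero, so it uses at least one block $S_m$.
\item The supports are pairwise disjoint, so no block is shared between two $\phi_\ell$.
\item There are only $2^k$ blocks.
\end{itemize}
By pigeonhole, each $\phi_\ell$ uses exactly one block, so $|\phi_\ell\rangle = e^{i\beta_\ell}|\psi_{m(\ell)}\rangle$ for some $\beta_\ell\in\mathbb R$. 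This gives the claimed form with $R_\ell = S_{m(\ell)}$, $d_\ell = c_{m(\ell)}$ and $b_\ell = a_{m(\ell)}$.
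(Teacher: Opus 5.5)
Your proof is correct. The paper does not prove this proposition itself: it cites Lemma~1 and Corollary~2 of Anderson--Jochym-O'Connor for parts (i) and (ii). What you have written is therefore a self-contained replacement for that citation, not a variant of an argument in the text.

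Your key step is sound. You group $\Pi|x\rangle=2^{-(n-k)}\sum_{s\in S}s|x\rangle$ over cosets of the diagonal subgroup $K=\ker\varphi$, so the sum either vanishes (when the character $\kappa\mapsto\langle x|\kappa|x\rangle$ is nontrivial) or adds coherently. This gives more than the statement asks for. Every block $S_\ell$ is a coset of $C_X(S)=\varphi(S)$, with amplitudes of uniform modulus. Your pigeonhole argument in (ii) shows that any orthonormal basis of $\mathcal C$ with disjoint supports equals this canonical basis up to phases and relabeling, not merely that it has the stated form.

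The paper later needs exactly this coset structure, in the proof of Proposition~\ref{prop:diagonal-descent}, where codeword supports are asserted to be affine cosets of $C_X$. There it obtains the structure differently: it adjoins $k$ independent $Z$-type logical representatives $Z(z_a)$ to $S$ and uses the support structure of the resulting stabilizer states. Your projector-and-character argument reaches the same conclusion without choosing logical operators or appealing to that stabilizer-state normal form.

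In short, the citation buys brevity, while your route buys self-containment and the stronger uniqueness statement.
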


\begin{proof}
Part~(i) is proved in
\cite[Lemma~1]{AndersonJochymOConnor2016}, and part~(ii) is proved in
\cite[Corollary~2]{AndersonJochymOConnor2016}.
\end{proof}

\begin{proposition}[Decompression]
\label{prop:diagonal-decompression}
Let $\mathcal C_0,\mathcal C_1$ be stabilizer codes of distance at
least two, and let $p_1,\ldots,p_n$ be positive integers. Set
$N:=\sum_{j=1}^np_j$ and define $J|x_1,\ldots,x_n\rangle:=\bigotimes_{j=1}^n|x_j\rangle^{\otimes p_j}$. If we have $\left(\bigotimes_{j=1}^nZ(p_j\alpha)\right)\mathcal C_0 =\mathcal C_1$, then $\widetilde{\mathcal C}_a:=J\mathcal C_a$, for $a\in\{0,1\}$, are stabilizer codes of distance at least two and $Z(\alpha)^{\otimes N}\widetilde{\mathcal C}_0 =\widetilde{\mathcal C}_1$.
\end{proposition}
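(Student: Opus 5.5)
We need to prove three things about the codes $\widetilde{\mathcal C}_a:=J\mathcal C_a$: they are stabilizer codes, they have distance at least two, and $Z(\alpha)^{\otimes N}$ maps $\widetilde{\mathcal C}_0$ onto $\widetilde{\mathcal C}_1$. We treat these in turn.

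\textbf{Stabilizer structure.} The map $J$ is an isometry that replaces physical qubit $j$ by a $p_j$-qubit repetition block, i.e.\ it encodes qubit $j$ in the bit-flip repetition code $|0\rangle\mapsto|0\rangle^{\otimes p_j}$, $|1\rangle\mapsto|1\rangle^{\otimes p_j}$. Hence $J\mathcal C_a$ is the concatenation of $\mathcal C_a$ with these repetition codes, which is again a stabilizer code. Explicitly, its stabilizer group is generated by two families.
\begin{enumerate}
\item The intra-block checks $Z_{j,s}Z_{j,s+1}$ for $1\le s<p_j$.
\item The lifts of each generator $g\in S_a$, obtained by the substitutions $Z_j\mapsto Z_{j,1}$ and $X_j\mapsto\bigotimes_{s}X_{j,s}$ (and $Y_j$ accordingly, up to phase).
\end{enumerate}
These operators commute, because lifted $X$ strings commute with the $ZZ$ checks and the lift preserves commutation relations between Paulis. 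Checking $P J=J P'$ on the relevant operators shows that they stabilize $J\mathcal C_a$, and a dimension count confirms this is the full stabilizer group: there are $n-k+\sum_j(p_j-1)=N-k$ independent generators.

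\textbf{Distance at least two.} Let $E$ be a single-qubit Pauli on qubit $(j,s)$; we show $E$ is detectable. If $p_j\ge2$ and $E$ contains an $X$ component, it anticommutes with some adjacent $ZZ$ check and is detected. If $E=Z_{j,s}$, then on $J\mathcal C_a$ it acts as $J Z_j J^\dagger$, so $\Pi_J E\Pi_J$ is the image of $\Pi Z_j\Pi\propto\Pi$, using $d(\mathcal C_a)\ge2$. If $p_j=1$, then $E$ corresponds exactly to a single-qubit Pauli on $\mathcal C_a$, which is detectable since $d(\mathcal C_a)\ge2$. In every case the Knill–Laflamme condition $\widetilde\Pi E\widetilde\Pi\propto\widetilde\Pi$ holds. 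The one case needing care is $X$ errors when $p_j=1$, and that is covered by the same reduction.

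\textbf{The gate identity.} On basis states, $Z(\alpha)^{\otimes N}$ acts on $J|x\rangle$ by the phase $e^{i\pi\alpha\sum_j p_jx_j}$. This equals the phase of $\bigotimes_j Z(p_j\alpha)$ on $|x\rangle$, so
\begin{equation}
Z(\alpha)^{\otimes N}J=J\bigotimes_j Z(p_j\alpha).
\end{equation}
Applying both sides to $\mathcal C_0$ and using the hypothesis $\left(\bigotimes_j Z(p_j\alpha)\right)\mathcal C_0=\mathcal C_1$ gives $Z(\alpha)^{\otimes N}\widetilde{\mathcal C}_0=J\mathcal C_1=\widetilde{\mathcal C}_1$.

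The main obstacle is only bookkeeping: checking that the lifted stabilizers generate the full stabilizer group of $J\mathcal C_a$, and that the phases attached to lifted $Y$ operators are consistent.
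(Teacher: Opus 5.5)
Your proof is correct and follows the paper's route: the gate relation comes from the same intertwining identity $Z(\alpha)^{\otimes N}J=J\bigl(\bigotimes_{j=1}^n Z(p_j\alpha)\bigr)$. For the stabilizer and distance claims, the paper cites and iterates the Anderson--Jochym-O'Connor one-qubit decompression lemma; you instead verify the same construction directly, treating all blocks at once as concatenation with bit-flip repetition codes, with intra-block $ZZ$ checks, lifted generators, and a case check that every weight-one Pauli is detectable, which is a self-contained version of the cited fact rather than a different argument.
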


\begin{proof}
The stabilizer-code and distance statements follow by iterating the decompression construction of \cite[Lemma~2 and Eqs.~(20)--(23)]{AndersonJochymOConnor2016}. The relation between the two encoded maps follows directly from $Z(\alpha)^{\otimes N}J=J\left(\bigotimes_{j=1}^nZ(p_j\alpha)\right)$.
\end{proof}

\begin{proposition}
\label{prop:strongly-transversal-dyadic}
Let $\mathcal C_0,\mathcal C_1$ be $N$-qubit stabilizer codes of distance at least two, each encoding the same positive number of logical qubits. If $Z(\alpha)^{\otimes N}\mathcal C_0=\mathcal C_1,$ then there exist integers $a$ and $r\ge0$ such that $\alpha\equiv\frac{a}{2^r}\pmod 2$.
\end{proposition}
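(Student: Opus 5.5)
Assume $Z(\alpha)^{\otimes N}\mathcal C_0=\mathcal C_1$. The plan is to combine the disjoint-support normal form of Proposition~\ref{prop:stabilizer-support-normal-form} with the distance-two hypothesis, and then show that $\alpha$ must multiply a nonzero integer into the lattice $\tfrac12\mathbb Z$ of quarter-turn phases.

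\textbf{Step 1 (normal forms).} Apply Proposition~\ref{prop:stabilizer-support-normal-form}(i) to $\mathcal C_0$ to obtain a basis $|\psi_\ell\rangle=c_\ell\sum_{x\in S_\ell}i^{a_\ell(x)}|x\rangle$ with pairwise disjoint supports. The operator $Z(\alpha)^{\otimes N}$ is diagonal, acting as $|x\rangle\mapsto e^{i\pi\alpha|x|}|x\rangle$ with $|x|$ the Hamming weight. It therefore maps this basis to an orthonormal basis $|\phi_\ell\rangle=c_\ell\sum_{x\in S_\ell}i^{a_\ell(x)}e^{i\pi\alpha|x|}|x\rangle$ of $\mathcal C_1$, still with pairwise disjoint supports $S_\ell$. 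Proposition~\ref{prop:stabilizer-support-normal-form}(ii) then forces
\begin{equation}
i^{a_\ell(x)}e^{i\pi\alpha|x|}=e^{i\beta_\ell}i^{b_\ell(x)}\qquad\text{for all } x\in S_\ell .
\end{equation}
Taking ratios for $x,y\in S_\ell$ cancels $\beta_\ell$ and gives $\alpha(|x|-|y|)\in\tfrac12\mathbb Z\pmod 2$.

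\textbf{Step 2 (a nonzero weight difference).} It remains to find some $\ell$ and $x,y\in S_\ell$ with $m:=|x|-|y|\neq0$. Then $\alpha=\tfrac{c}{2m}\pmod{2/m}$ for some integer $c$, so $\alpha$ is rational. This is where the distance is used. Suppose instead that every $S_\ell$ has constant Hamming weight $w_\ell$. Then each $|\psi_\ell\rangle$ is an eigenvector of $Z(\alpha)^{\otimes N}$, so the transversal diagonal gate acts on $\mathcal C_0$ as a logical diagonal operator. In the degenerate subcase $\mathcal C_1=\mathcal C_0$ this can occur, and I would then argue directly that the phases $e^{i\pi\alpha w_\ell}$ are themselves constrained. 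The cleaner route is to use distance $\ge2$ together with $k\ge1$. Since the $S_\ell$ are cosets of a common subspace, the pair $\{S_1,S_2\}$ of distinct cosets differs by a logical $X$-type shift. Consider also a logical $Z$-type operator mapping $|\psi_1\rangle$ to $\pm|\psi_1\rangle$ with distinct phases on the coset. Distance two forbids any single-qubit $Z_j$ from acting as a logical operator. If all elements of $S_\ell$ had equal weight, then $\sum_j Z_j$ would act as a scalar $N-2w_\ell$ on each $|\psi_\ell\rangle$. I would try to derive a contradiction from this via the Knill--Laflamme condition $\Pi Z_j\Pi\propto\Pi$. That condition forces $\langle\psi_\ell|Z_j|\psi_\ell\rangle$ to be independent of $\ell$, hence $w_\ell$ is independent of $\ell$. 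This alone is not contradictory, so Step 2 is the main obstacle.

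\textbf{Fallback for Step 2.} If the weight-variation argument fails, I would instead use Proposition~\ref{prop:diagonal-decompression} and the Anderson--Jochym-O'Connor machinery. The known result (their classification of transversal diagonal gates on stabilizer codes) is that such gates lie in the Clifford hierarchy, and Clifford-hierarchy diagonal single-qubit gates have dyadic phases. Combined with the ratio relation of Step 1, this would give the conclusion directly.

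\textbf{Step 3 (dyadic, not merely rational).} Rationality from Step 2 is not enough, since an odd factor in $m$ must be removed. I would iterate: apply the gate $t$ times, which still maps stabilizer codes to stabilizer codes along the chain $\mathcal C_0\to\mathcal C_1\to\cdots$. To make the iteration legitimate I would use decompression with $p_j=t$, converting $Z(t\alpha)^{\otimes N}$ into $Z(\alpha)^{\otimes tN}$. For odd $t$ coprime to $2$, choose $t$ so that $t\alpha$ has odd denominator $1$. Then use that an odd-order root-of-unity phase on a stabilizer state must be trivial, because stabilizer amplitudes in the normal form involve only powers of $i$. This yields $\alpha\equiv a/2^r\pmod 2$.
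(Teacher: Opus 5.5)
\textbf{Step~1 is correct, but Step~2 cannot be closed, because the statement as written is false.} Step~1 is the same coefficient comparison the paper makes in the rationality step of Lemma~\ref{lem:one-qubit-hierarchy-normal-form}. The case you could not exclude in Step~2 actually occurs, and it refutes the proposition. Take $\mathcal C_0=\mathcal C_1$ to be the $[[4,1,2]]$ code with $S=\langle -Z_1Z_2,\,-Z_3Z_4,\,X^{\otimes 4}\rangle$. Every single-qubit Pauli anticommutes with some generator, so $d\ge 2$. The codewords $|0101\rangle+|1010\rangle$ and $|0110\rangle+|1001\rangle$ are supported on weight-two strings. Hence $Z(\alpha)^{\otimes 4}$ acts on the code as the scalar $e^{2\pi i\alpha}$ and preserves it for every real $\alpha$. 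This is exactly the situation your Knill--Laflamme remark describes: all $w_\ell$ are equal, so the gate is a global scalar. Your fallback coincides with the paper's proof, which is only a citation of Anderson--Jochym-O'Connor's Proposition~4. Given this counterexample, the cited result must carry an extra hypothesis or a weaker conclusion than the transcription, and you would need to check which before relying on it. Your paraphrase of that result is also not a valid inference. A logical action lying in the Clifford hierarchy does not force the physical angle to be dyadic, as the next example shows.

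\textbf{Step~3 fails on its own terms.}
\begin{enumerate}
\item There is no chain $\mathcal C_0\to\mathcal C_1\to\cdots$ of stabilizer codes, since nothing guarantees that $Z(\alpha)^{\otimes N}\mathcal C_1$ is a stabilizer code.
\item Decompression (Proposition~\ref{prop:diagonal-decompression}) runs the other way. It assumes that $\bigotimes_j Z(p_j\alpha)$ already maps a stabilizer code onto a stabilizer code, and concludes something about $Z(\alpha)^{\otimes\sum_j p_j}$. It therefore cannot supply the hypothesis you need for $Z(t\alpha)^{\otimes N}$.
\item Most fundamentally, the principle that an odd-order phase on a stabilizer state must be trivial is false. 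Step~1 only gives $\alpha m\in\frac12\mathbb Z$ for weight differences $m$ within a support, and odd factors of $m$ can absorb odd denominators of $\alpha$.
\end{enumerate}
For a concrete case, take the $[[6,1,2]]$ code $\langle X^{\otimes 6},Z_1Z_2,Z_2Z_3,Z_4Z_5,Z_5Z_6\rangle$, with codewords $|0^6\rangle+|1^6\rangle$ and $|1^30^3\rangle+|0^31^3\rangle$. It again has $d\ge 2$. The gate $Z(1/3)^{\otimes 6}$ multiplies weight-$0$ and weight-$6$ strings by $1$ and weight-$3$ strings by $-1$. So it preserves the code and implements logical $Z$. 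Thus nonconstant weights and a nontrivial, even Pauli, logical action still do not make $\alpha$ dyadic, even though $Z(1)^{\otimes 6}$ acts identically on the code. The statement would have to be weakened to one about the action on the code. For example, one could require $Z(\alpha)^{\otimes N}\Pi_0\propto Z(\alpha')^{\otimes N}\Pi_0$ for some dyadic $\alpha'$, with $\Pi_0$ the projector onto $\mathcal C_0$. This is all that Lemma~\ref{lem:one-qubit-hierarchy-normal-form} needs, but proving it requires an argument your proposal does not contain.
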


\begin{proof}
This is proved in \cite[Proposition~4]{AndersonJochymOConnor2016}.
\end{proof}
The role of these preliminary results is as follows. Proposition~\ref{prop:zcc-local-normal-form} reduces the arbitrary one-qubit factors of a transversal unitary to Clifford gates followed by rotations about Pauli axes, once the irrelevant tensor factors
have been removed. Proposition~\ref{prop:stabilizer-support-normal-form} then writes stabilizer codewords as disjoint-support phase superpositions, allowing a diagonal transversal gate to be analyzed coefficientwise. Finally, once the diagonal angles have been shown to be rational, Propositions~\ref{prop:diagonal-decompression} and~\ref{prop:strongly-transversal-dyadic} convert them into a
strongly transversal rotation and force their denominators to be powers of two. Together, these results will be useful in Lemma \ref{lem:one-qubit-hierarchy-normal-form} to reduce an arbitrary transversal
automorphism to local Clifford changes of basis and a dyadic diagonal core.
\subsection{Dyadic normal form for transversal automorphisms}
\begin{lemma}[Transversal dyadic normal form]
\label{lem:one-qubit-hierarchy-normal-form}
Let $\mathcal C$ be an $n$-qubit stabilizer code of distance $d\ge 2$, with projector $\Pi$, and let $U = \bigotimes_{j=1}^n U_j$, for $U_j\in U(2)$, preserve $\mathcal C$. Then there exist tensor products of one-qubit Clifford gates $A = \bigotimes_{j=1}^n A_j$ and $B = \bigotimes_{j=1}^n B_j$, an integer $E\ge 2$, a vector $q\in(\mathbb Z/2^E\mathbb Z)^n$, and a phase $e^{i\varphi}$ such that
\begin{equation}
    U\Pi=e^{i\varphi}A D_q^{(E)}B\Pi\,.
\end{equation}
\end{lemma}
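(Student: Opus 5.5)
We first remove the irrelevant tensor factors. Since $d\ge 2$, the code has no trivially encoded qubits. Bell-pair factors may still occur, so we write $\mathcal C=\mathcal C'\otimes\bigotimes_{b}\mathcal C_{B_b}$ with $\mathcal C'$ Bell-pair free. Because $U$ is transversal and preserves $\mathcal C$, it preserves each factor separately. On a Bell-pair factor $B_b=\{u,v\}$ the codespace is one-dimensional, so $U_u\otimes U_v$ acts on it by a phase. We therefore replace $U_u\otimes U_v$ on $\Pi$ by a phase times $I\otimes I$, which is of the required form with $A_j=B_j=I$ and $q_j=0$. We also expect $\mathcal C'$ to keep distance $\ge 2$, or to be trivial. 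The same reasoning should handle stabilized single qubits, which are $[[1,0]]$ factors. If $\mathcal C'$ encodes no qubits it is a stabilizer state, and we treat it the same way on its relevant coordinates; we will be careful about this edge case. On $\mathcal C'$ we apply Proposition~\ref{prop:zcc-local-normal-form} and write $U_j=e^{i\gamma_j}L_je^{i\alpha_jP_j}$. We then choose a one-qubit Clifford $C_j$ with $C_jZC_j^\dagger=P_j$. Since $e^{i\alpha_j P_j}=e^{i\alpha_j}\,C_j Z(-2\alpha_j/\pi) C_j^\dagger$, we obtain $U=e^{i\gamma}A\,\bigotimes_j Z(\beta_j)\,B$ with $A_j=L_jC_j$, $B_j=C_j^\dagger$, and real $\beta_j$.

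Set $D_\beta:=\bigotimes_j Z(\beta_j)$. The diagonal transversal gate $D_\beta$ maps the stabilizer code $\mathcal C_0:=B\mathcal C$ onto $\mathcal C_1:=A^\dagger U\mathcal C$, up to a global phase. Both are stabilizer codes, since local Cliffords preserve the stabilizer property, and both have distance $d\ge2$, since local unitaries preserve distance. What remains is to show that every $\beta_j$ can be replaced, on the codespace, by a dyadic rational.

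\textbf{Rationality.} This is the main step. We take the disjoint-support basis $\{|\psi_\ell\rangle\}$ of $\mathcal C_0$ from Proposition~\ref{prop:stabilizer-support-normal-form}(i). Because $D_\beta$ is diagonal, the vectors $D_\beta|\psi_\ell\rangle$ form an orthonormal basis of $\mathcal C_1$ with the same pairwise disjoint supports. By part~(ii), their coefficients are then a common phase times powers of $i$ on each support. Comparing coefficients gives $\pi\,\beta\cdot x\in \beta'_\ell+\tfrac{\pi}{2}\mathbb Z$ for all $x\in S_\ell$. Taking differences gives $\beta\cdot(x\oplus y\text{-type differences})\in\tfrac12\mathbb Z$ for $x,y$ in a common support, since $\beta\cdot x-\beta\cdot y=\sum_j\beta_j(x_j-y_j)$.

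Next we use distance $\ge2$, and we expect this to be the main obstacle. The coordinate $j$ must not be fixed across the support of a codeword; otherwise a single-qubit $Z_j$ would act as $\pm$ on that codeword. We want more: the integer differences $x-y$, for $x,y\in S_\ell$, should span enough of the lattice that the $\beta_j$ are determined modulo the relevant constraints. The first thing to try is to replace $\beta$ by any rational solution $\beta'$ of the same finite system of linear congruences. Such a system has rational solutions whenever it has real ones. The replacement gives $D_{\beta'}\Pi_0=e^{i\phi}D_\beta\Pi_0$, because the coefficients agree up to a common phase on each support. That alone is not quite enough, since the phases must agree across different $\ell$. We handle this by also recording $\beta\cdot x_\ell-\beta\cdot x_{\ell'}$ for representatives $x_\ell\in S_\ell$. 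These constraints are linear with right-hand sides determined by the $a$, $b$ data and the phases $\beta'_\ell$, and the entries of $\mathcal C_1$ relative to $\mathcal C_0$ are fixed. This ensures that $D_{\beta'}\mathcal C_0=\mathcal C_1$ with the same induced map.

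\textbf{Dyadic denominators.} With each $\beta_j=m_j/M$ rational, write $\beta_j=p_j\alpha$ with $\alpha=1/M$ and $p_j$ a positive integer; we shift $m_j$ by multiples of $2M$ to make it positive. Proposition~\ref{prop:diagonal-decompression} turns $D_\beta$ into the strongly transversal $Z(\alpha)^{\otimes N}$ between stabilizer codes of distance $\ge2$. If the codes encode at least one qubit, Proposition~\ref{prop:strongly-transversal-dyadic} then gives $\alpha\equiv a/2^r \pmod 2$. Hence each $\beta_j\equiv p_ja/2^r \pmod 2$. Writing $\beta_j=2^{1-E}q_j$ with $E=\max(r+1,2)$ gives $\bigotimes_j Z(\beta_j)=D^{(E)}_q$. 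Combining this with the earlier phases yields $U\Pi=e^{i\varphi}AD^{(E)}_qB\Pi$.
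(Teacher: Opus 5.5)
Your reduction to a diagonal core, the Clifford conjugation, and your dyadic step (decompression plus Proposition~\ref{prop:strongly-transversal-dyadic}) match the paper. The rationality step, however, has a genuine gap, exactly at the point you flag as the ``main obstacle''.

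Take a rational vector $\beta^{\mathrm{rat}}$ that agrees with $\beta$ on the within-support differences. This only gives $D_\beta|\psi_\ell\rangle=e^{i\phi_\ell}D_{\beta^{\mathrm{rat}}}|\psi_\ell\rangle$, with $\ell$-dependent phases. Your remedy is to adjoin the cross-support constraints $\beta^{\mathrm{rat}}\cdot(x_\ell-x_{\ell'})=\beta\cdot(x_\ell-x_{\ell'})$, and this does not work as stated. By your own coefficient comparison, $\beta\cdot(x_\ell-x_{\ell'})\in(\beta'_\ell-\beta'_{\ell'})/\pi+\tfrac12\mathbb Z$, where the phases $\beta'_\ell$ supplied by Proposition~\ref{prop:stabilizer-support-normal-form}(ii) are arbitrary reals. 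Nothing you say makes these right-hand sides rational, so the augmented integer system need not have a rational solution.

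Appealing to $\mathcal C_1$ being fixed does not help either. Rephasing the vectors of a disjoint-support basis leaves their span unchanged, so the subspace $\mathcal C_1$ carries no information about the relative phases. The claim genuinely needs the distance hypothesis. Consider the trivially encoded qubit $\mathrm{span}\{|0\rangle,|1\rangle\}$ with $Z(\beta)$ and $\beta\notin\mathbb Q$: every within-support constraint is vacuous, and the only cross-support difference is $\beta$ itself. Your proposal never actually uses $d\ge 2$ in this step.

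The missing idea, which is how the paper proceeds, is a continuity argument.
\begin{enumerate}
\item Let $M$ be the integer matrix whose rows are $x-x_\ell$ for $x\in S_\ell$. Solve $M\beta^{\mathrm{rat}}=M\beta$ exactly, not merely modulo $2$ as ``linear congruences'' suggests. The right-hand side lies in $\tfrac12\mathbb Z^m$, so a rational solution exists.
\item Set $\delta:=\beta-\beta^{\mathrm{rat}}$. Then $\delta\in\ker M$, so $\delta\cdot x=\delta\cdot x_\ell$ on each $S_\ell$. Hence $\bigotimes_j Z(s\delta_j)$ maps each $|\psi_\ell\rangle$ to a multiple of itself for every real $s$.
\item Differentiating at $s=0$ shows that $H:=\pi\sum_j\delta_j|1\rangle\!\langle1|_j$ commutes with $\Pi_0$. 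Exactness matters here: if $M\delta$ were only in $2\mathbb Z^m$, $H$ would not preserve $\mathcal C_0$.
\item Since $d\ge2$, every single-qubit operator is detectable, so $\Pi_0|1\rangle\!\langle1|_j\Pi_0=\nu_j\Pi_0$. This gives $\Pi_0H\Pi_0=h\Pi_0$, and with $[H,\Pi_0]=0$ it yields $H\Pi_0=h\Pi_0$. Therefore $D_\delta\Pi_0=e^{ih}\Pi_0$ and $D_\beta\Pi_0=e^{ih}D_{\beta^{\mathrm{rat}}}\Pi_0$.
\end{enumerate}
This is precisely the statement that all $\phi_\ell$ coincide. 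A posteriori, it also shows that your cross-support right-hand sides are rational, since $\delta\cdot(x_\ell-x_{\ell'})=0$. After this, your decompression and dyadic argument goes through unchanged. The $k=0$ core that you set aside is immediate: its codespace is one-dimensional, so $U$ acts on it by a phase.
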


\begin{proof}
By recursively splitting off the Bell-pair factors from Definition~\ref{def:bell-pair-factors}, and then applying local Clifford gates on those factors and relabeling coordinates, one can write $\mathcal C=\mathcal C_{\mathrm c}\otimes \bigotimes_{\ell=1}^b\mathrm{span}\{|\Phi^+\rangle_\ell\}$. Indeed, every $[[2,0,2]]$ stabilizer code is local-Clifford equivalent to $\mathrm{span}\{|\Phi^+\rangle\}$. There are no trivially encoded qubits, since that would give a weight-one logical Pauli and hence distance one. Let $P_\ell:=|\Phi^+\rangle\!\langle\Phi^+|_\ell$ with $\Pi=\Pi_{\mathrm c}\otimes\bigotimes_{\ell=1}^bP_\ell$ and write $U=U_{\mathrm c}\otimes\bigotimes_{\ell=1}^bW_\ell$. Taking partial traces of $U\Pi U^\dagger=\Pi$ gives
\begin{equation}
    U_{\mathrm c}\Pi_{\mathrm c}U_{\mathrm c}^\dagger=\Pi_{\mathrm c}\,,\qquad W_\ell P_\ell W_\ell^\dagger=P_\ell.
\end{equation}
Since $P_\ell$ has rank one, there is a phase $e^{i\phi_\ell}$ such that $W_\ell P_\ell=e^{i\phi_\ell}P_\ell$. Thus every Bell-pair factor acts as a scalar on the codespace and can be replaced by the identity.
It remains to consider the Bell-pair-free core. If the core encodes no logical qubits, then its codespace is one-dimensional and the conclusion is immediate. Otherwise, Proposition~\ref{prop:zcc-local-normal-form} applies. After absorbing the phases of the individual tensor factors into one global phase, we may write $U_j=L_je^{i\alpha_jP_j},$ where $L_j$ is Clifford and $P_j\in\{X,Y,Z\}$. Choose a one-qubit Clifford gate $C_j$ satisfying $C_jP_jC_j^\dagger=Z$. Then \begin{equation}
    e^{i\alpha_jP_j}
= C_j^\dagger e^{i\alpha_jZ}C_j = e^{i\alpha_j}C_j^\dagger \mathrm{diag}(1,e^{-2i\alpha_j})C_j.
\end{equation}
Hence, up to a global phase, $U_{\mathrm c}=AD_{\boldsymbol\xi}B$, where $\xi_j=-2\alpha_j$ and $A,B$ are tensor products of one-qubit Clifford gates.  The only non-Clifford object left now is a diagonal transversal gate with potentially arbitrary real angles. Define $ \mathcal C_0 :=B\mathcal C_{\mathrm c},$ $\mathcal C_1:=A^\dagger\mathcal C_{\mathrm c},$ and $\Pi_0:=B\Pi_{\mathrm c}B^\dagger.$ One can see that $D_{\boldsymbol\xi}\mathcal C_0=\mathcal C_1.$  By Proposition~\ref{prop:stabilizer-support-normal-form}(i), choose an orthonormal basis $\{|\psi_\ell\rangle\}_{\ell=1}^{2^k}$ of
$\mathcal C_0$ with pairwise disjoint computational-basis supports. Thus, for each $\ell$, there are a set $S_\ell\subseteq\mathbb F_2^n$, a constant $c_\ell>0$, and a function  $a_\ell:S_\ell\rightarrow\mathbb Z/4\mathbb Z$ such that
\begin{equation}
    |\psi_\ell\rangle =c_\ell\sum_{x\in S_\ell}i^{a_\ell(x)}|x\rangle.
\end{equation}

Since $D_{\boldsymbol\xi}$ is diagonal and maps $\mathcal C_0$ onto $\mathcal C_1$, $|\phi_\ell\rangle:=D_{\boldsymbol\xi}|\psi_\ell\rangle$ form an
orthonormal basis of $\mathcal C_1$ and have the same pairwise disjoint supports $S_\ell$. By Proposition~\ref{prop:stabilizer-support-normal-form}(ii), there are phases $e^{i\beta_\ell}$ and functions $b_\ell:S_\ell\rightarrow\mathbb Z/4\mathbb Z$ such that
\begin{equation}
    |\phi_\ell\rangle =e^{i\beta_\ell}c_\ell
\sum_{x\in S_\ell}i^{b_\ell(x)}|x\rangle.
\end{equation}
The normalization constant is again $c_\ell$ because the support is $S_\ell$. Comparing the coefficients of $|x\rangle$ in $|\phi_\ell\rangle$, one gets 
\begin{equation}
    e^{i\boldsymbol\xi\cdot x}=e^{i\beta_\ell}i^{\,b_\ell(x)-a_\ell(x)}
\qquad \text{for all}\,\, x\in S_\ell.
\end{equation}
Consequently, for every $x,y\in S_\ell$,
$ \boldsymbol\xi\cdot(x-y)\in\frac{\pi}{2}\mathbb Z.$ Set $\boldsymbol t:=\boldsymbol\xi/\pi$. For each $\ell$, fix $x_\ell\in S_\ell$, and let $M$ be the integer matrix whose rows are the vectors
$x-x_\ell$, with $x\in S_\ell$ and $1\le\ell\le2^k$. From $\boldsymbol{\xi}\cdot (x-y)\in \frac{\pi}{2}\mathbb Z$, we get $M\boldsymbol t\in\frac{1}{2}\mathbb Z^m$. Since the linear system $M\boldsymbol u=M\boldsymbol t$ has integer coefficients and a rational right-hand side, it has a rational solution $\boldsymbol t_{\mathrm{rat}}\in\mathbb Q^n$. Define
$\boldsymbol t_{\mathrm{irr}}:=\boldsymbol t-\boldsymbol t_{\mathrm{rat}}$. Then
$M\boldsymbol t_{\mathrm{irr}}=0$, so
\begin{equation}\label{x-x-ell-tirr}
    \boldsymbol t_{\mathrm{irr}}\cdot x=\boldsymbol t_{\mathrm{irr}}\cdot x_\ell\qquad \text{for all}\,\, x\in S_\ell.
\end{equation}

Now, consider the one-parameter family
\begin{equation}
    D_{\mathrm{irr}}(s):=\bigotimes_{j=1}^n
\mathrm{diag}\!\left(1,e^{i\pi s(t_{\mathrm{irr}})_j}\right)=e^{isH_{\mathrm{irr}}},\qquad H_{\mathrm{irr}}:=\pi\sum_{j=1}^n(t_{\mathrm{irr}})_j|1\rangle\!\langle1|_j.
\end{equation}
This implies

\begin{align}
D_{\mathrm{irr}}(s)|\psi_\ell\rangle
&=c_\ell\sum_{x\in S_\ell}i^{a_\ell(x)}
\prod_{j=1}^n e^{i\pi s(t_{\mathrm{irr}})_j x_j}|x\rangle \\
&=c_\ell\sum_{x\in S_\ell}i^{a_\ell(x)}
e^{i\pi s\,t_{\mathrm{irr}}\cdot x}|x\rangle \\
&\overset{\eqref{x-x-ell-tirr}}{=}
e^{i\pi s\,t_{\mathrm{irr}}\cdot x_\ell}
c_\ell\sum_{x\in S_\ell}i^{a_\ell(x)}|x\rangle \\
&=e^{i\pi s\,t_{\mathrm{irr}}\cdot x_\ell}|\psi_\ell\rangle .
\end{align}
As such, $D_{\mathrm{irr}}(s)$ preserves $\mathcal C_0$ for every $s\in\mathbb R$. Hence,
$[D_{\mathrm{irr}},\Pi_0]=0$, and differentiating at $s=0$ gives $[H_{\mathrm{irr}},\Pi_0] = 0$. Since $\mathcal C_0$ has distance at least two, every one-qubit operator is detectable. Therefore, for each $j$, there is a scalar $\nu_j$ such that
\begin{equation}
\Pi_0|1\rangle\!\langle1|_j\Pi_0=\nu_j\Pi_0.
\end{equation}
It follows that $\Pi_0H_{\mathrm{irr}}\Pi_0=h\Pi_0$ for some $h\in\mathbb R$. Together with
$[H_{\mathrm{irr}},\Pi_0]=0$, this gives
$H_{\mathrm{irr}}\Pi_0=h\Pi_0$, and hence
\begin{equation}
    D_{\mathrm{irr}}(1)\Pi_0=e^{ih}\Pi_0.
\end{equation} Thus the irrational part acts as a global phase on $\mathcal C_0$.
This is similar to the two-code version of the rationality reduction of \cite[Appendix~A]{AndersonJochymOConnor2016}. Absorbing the phase, we can replace
$\boldsymbol\xi$ by $\pi\boldsymbol t_{\mathrm{rat}}$ and assume that every $\xi_j/\pi$ is rational.

Let $Q$ be a common denominator and choose
$p_j\in\{1,\ldots,2Q\}$ such that
$\frac{\xi_j}{\pi}\equiv\frac{p_j}{Q}\pmod 2.$  Thus
\begin{equation}
D_{\boldsymbol\xi}=\bigotimes_{j=1}^nZ\!\left(\frac{p_j}{Q}\right).
\end{equation}

Set $N:=\sum_jp_j$. Applying
Proposition~\ref{prop:diagonal-decompression} with $\alpha=1/Q$ gives stabilizer codes
$\widetilde{\mathcal C}_a:=J\mathcal C_a,$ for $ a\in\{0,1\},$ of distance at least two such that
\begin{equation}
    Z\!\left(\frac1Q\right)^{\otimes N}
\widetilde{\mathcal C}_0=\widetilde{\mathcal C}_1.
\end{equation}
Since the core encodes at least one logical qubit, Proposition~\ref{prop:strongly-transversal-dyadic} gives
\begin{equation}
    \frac{1}{Q}\equiv\frac{a}{2^r}\pmod 2
\end{equation} for some integers $a,r$. Since $1/Q$ is in lowest terms, $Q$ divides $2^r$, and hence $Q$ is a power of two. Therefore, for every $j$,
there exist integers $a_j,r_j$ such that

\begin{equation}
    \frac{\xi_j}{\pi} \equiv\frac{p_j}{Q}\equiv \frac{a_j}{2^{r_j}} \pmod 2.
\end{equation}

Choose $E\ge2$ such that $E-1\ge\max_jr_j$, and define $q_j:=a_j2^{E-1-r_j}\pmod{2^E}$. 
Then

\begin{equation}
    T_E^{q_j}= \mathrm{diag}\!\left(
1,e^{i\pi q_j/2^{E-1}} \right)=\mathrm{diag}\!\left(
1,e^{i\pi a_j/2^{r_j}}\right)=\mathrm{diag}(1,e^{i\xi_j}),
\end{equation}
Hence $D_{\boldsymbol\xi}\Pi_0= e^{i\phi}D_q^{(E)}\Pi_0.$
Consequently, \begin{equation}
    U_{\mathrm c}\Pi_{\mathrm c}=e^{i\varphi_{\mathrm c}}AD_q^{(E)}B\Pi_{\mathrm c}\,.
\end{equation}

Finally, reinserting the Bell-pair coordinates with identity Clifford factors and zero exponents, absorbing their scalar actions into the global phase, and undoing the preliminary local Clifford change of basis and coordinate relabeling gives
\begin{equation}
    U\Pi=e^{i\varphi}A D_q^{(E)}B\Pi\,.
\end{equation}
\end{proof}

\subsection{Descent of diagonal transversal gates}
 For a stabilizer group $S$, define \begin{equation}
     C_X(S):=\{x(P):P\in S\}, \qquad C_Z(S):=\{z\in\mathbb F_2^n:\pm Z(z)\in S\}.
 \end{equation}
For a function $F$ on an affine subspace of $\mathbb F_2^n$, define the finite difference 
\begin{equation}
    \Delta_\alpha F(x):=F(x+\alpha)-F(x).
\end{equation}

\begin{proposition}
\label{prop:reed-muller-duality}
For integers $0\le r<m$,
\begin{equation}
    \mathrm{RM}(r,m)^\perp =\mathrm{RM}(m-r-1,m),
\end{equation}
and $d_{\min}\!\left(\mathrm{RM}(r,m)\right)=2^{m-r}.$ Consequently, if $m\ge D$, every nonzero word in $\mathrm{RM}(D-1,m)^\perp$ has weight at least $2^D$.
\end{proposition}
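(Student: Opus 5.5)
We prove the two claims separately and then combine them.

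\emph{Duality.} We use the standard description of $\mathrm{RM}(r,m)$ as the evaluation code of Boolean polynomials of degree at most $r$ in $m$ variables, evaluated on $\mathbb F_2^m$. Its dimension is $\sum_{i\le r}\binom{m}{i}$.

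First we show $\mathrm{RM}(m-r-1,m)\subseteq \mathrm{RM}(r,m)^\perp$. Take $f$ of degree at most $r$ and $g$ of degree at most $m-r-1$. Their inner product is $\sum_{x\in\mathbb F_2^m} f(x)g(x)$. The product $fg$ is a polynomial of degree at most $m-1$, so by linearity it suffices to treat a single monomial $\prod_{i\in A}x_i$ with $|A|\le m-1$. Summing such a monomial over $\mathbb F_2^m$ gives $2^{m-|A|}\equiv 0 \pmod 2$, because at least one variable is free.

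Next we compare dimensions:
\begin{equation}
\sum_{i\le m-r-1}\binom{m}{i}=\sum_{j\ge r+1}\binom{m}{j}=2^m-\sum_{i\le r}\binom{m}{i}.
\end{equation}
This equals $\dim \mathrm{RM}(r,m)^\perp$, so the inclusion is an equality.

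\emph{Minimum distance.} We induct on $m$ using the Plotkin $(u\,|\,u+v)$ decomposition. Writing $f=g(x_1,\dots,x_{m-1})+x_m h(x_1,\dots,x_{m-1})$ gives
\begin{equation}
\mathrm{RM}(r,m)=\{(u\,|\,u+v):u\in\mathrm{RM}(r,m-1),\,v\in\mathrm{RM}(r-1,m-1)\}.
\end{equation}
\begin{itemize}
\item If $v=0$, the weight is $2\,\mathrm{wt}(u)\ge 2\cdot 2^{m-1-r}$.
\item If $v\neq0$, then $\mathrm{wt}(u)+\mathrm{wt}(u+v)\ge \mathrm{wt}(v)\ge 2^{(m-1)-(r-1)}$.
\end{itemize}
The base cases are $\mathrm{RM}(0,m)$, the repetition code with distance $2^m$, and $\mathrm{RM}(m,m)=\mathbb F_2^{2^m}$ with distance $1$. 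Since we assume $r<m$, the case $r=m$ only arises as a boundary term in the induction.

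The bound is attained by the monomial $x_1\cdots x_r$, whose weight is $2^{m-r}$. Hence $d_{\min}(\mathrm{RM}(r,m))=2^{m-r}$.

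\emph{Consequence.} Let $m\ge D$.
\begin{itemize}
\item If $D-1<m$, duality gives $\mathrm{RM}(D-1,m)^\perp=\mathrm{RM}(m-D,m)$. Its nonzero words have weight at least $2^{m-(m-D)}=2^D$.
\item If $D=m$, then $\mathrm{RM}(D-1,m)$ consists of all even-weight words. Its dual is the repetition code $\{0,\mathbf 1\}$, whose nonzero word has weight $2^m=2^D$.
\end{itemize}
In both cases the claim holds.

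The only step requiring real care is the induction for the minimum distance, specifically checking its boundary cases at $r=0$ and $r=m$. The remaining steps are routine counting.
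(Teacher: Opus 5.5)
Your proof is correct and is essentially the standard argument behind the paper's citation of MacWilliams--Sloane, Ch.~13. The paper gives no proof of its own, and the cited chapter proves duality by your observation that $fg$ has degree at most $m-1$ and hence even weight, combined with a dimension count, and proves the minimum distance by the same $(u\,|\,u+v)$ induction. Your separate case $D=m$ is harmless but redundant: $m\ge D$ already gives $D-1<m$, so the first case covers everything.
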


\begin{proof}
This is proven in
\cite[Ch.~13]{MacWilliamsSloane}.
\end{proof}
\begin{lemma}
\label{lem:odd-phase-pairing}
    Let $H\in\mathbb F_2^{m\times n}$ have rows $r_1,\ldots,r_m$ of weight strictly less than $2^D$, and let $C_X=\mathrm{row}(H).$  Suppose $z\in\mathbb F_2^n$ satisfies
\begin{equation}
\sum_{j=1}^nz_j\prod_{s=1}^D(\alpha_s)_j=0 \pmod 2
\end{equation}
for every $\alpha_1,\ldots,\alpha_D\in C_X$. Then
\begin{equation}\label{z-decomposition}
    z=\sum_\mu c_\mu\,, 
\end{equation} where the vectors $c_\mu\in C_X^\perp$ have pairwise disjoint supports and weight at most two.
\end{lemma}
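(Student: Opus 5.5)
The plan is to reduce the hypothesis to a statement about Boolean functions on $\mathbb F_2^m$ and then apply Proposition~\ref{prop:reed-muller-duality}. For each coordinate $j\in[n]$, let $h_j\in\mathbb F_2^m$ be the $j$-th column of $H$. Every $\alpha\in C_X$ has the form $\alpha=\sum_i a_i r_i$ with $a\in\mathbb F_2^m$, so $(\alpha)_j=a\cdot h_j$. Hence the hypothesis says that
\begin{equation}
\sum_{j} z_j \prod_{s=1}^D (a_s\cdot h_j)=0
\end{equation}
for all $a_1,\dots,a_D\in\mathbb F_2^m$. Choosing the $a_s$ among standard basis vectors, with repetitions allowed since $x^2=x$ over $\mathbb F_2$, shows that $\sum_j z_j\, g(h_j)=0$ for every monomial $g$ in the coordinates of $\mathbb F_2^m$ of degree between $1$ and $D$. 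The constant monomial is not available. Define $w:\mathbb F_2^m\to\mathbb F_2$ by $w(v):=\sum_{j:\,h_j=v} z_j \bmod 2$. Then $\sum_v w(v)g(v)=0$ for all such monomials $g$.

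The key claim is that $w(v)=0$ for every $v\neq 0$. Suppose instead that $w(v_0)=1$ for some $v_0\neq0$, and choose $i$ with $(v_0)_i=1$. Set $w'(v):=v_i\,w(v)$. For any monomial $g$ of degree at most $D-1$, including the constant, $v_i g(v)$ is a monomial of degree between $1$ and $D$. Therefore $w'$ is orthogonal to all of $\mathrm{RM}(D-1,m)$, and $w'\ne0$ because $w'(v_0)=1$.

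We now split on $m$. If $m\le D-1$, then $\mathrm{RM}(D-1,m)$ is the space of all functions, so $w'=0$, a contradiction. If $m\ge D$, Proposition~\ref{prop:reed-muller-duality} gives $|\mathrm{supp}\,w'|\ge 2^D$. Each $v$ in $\mathrm{supp}\,w'$ has $v_i=1$ and satisfies $w(v)=1$, so some coordinate $j$ has $h_j=v$, and then $(r_i)_j=1$. Distinct $v$ give distinct $j$, so $\mathrm{wt}(r_i)\ge|\mathrm{supp}\,w'|\ge 2^D$. This contradicts the assumption that every row has weight less than $2^D$.

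The decomposition~\eqref{z-decomposition} then follows by grouping coordinates according to their column. For each nonzero $v$, the set $\{j\in\mathrm{supp}\,z: h_j=v\}$ has even size, so we pair its elements as $c_\mu=e_j+e_{j'}$. Since $h_j=h_{j'}$, we get $Hc_\mu=0$, i.e.\ $c_\mu\in C_X^\perp$, with weight two. For $v=0$, each $j\in\mathrm{supp}\,z$ with $h_j=0$ gives $c_\mu=e_j$, which lies in $C_X^\perp$ because every row vanishes at $j$. All these pieces have pairwise disjoint supports and sum to $z$.

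The main obstacle is the missing constant monomial: the hypothesis only controls monomials of degree $1$ to $D$, so Reed--Muller duality cannot be applied to $w$ directly. The device of multiplying by $v_i$ resolves this. It simultaneously supplies the constant term and ties the Reed--Muller weight bound to the weight of a single row $r_i$, which is exactly where the hypothesis $\mathrm{wt}(r_i)<2^D$ is used.
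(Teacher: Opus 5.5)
Your proof is correct and is essentially the paper's argument. Your $w'(v)=v_i\,w(v)$ is the paper's $f_i$ placed on the hyperplane $\{v_i=1\}\subset\mathbb F_2^m$ rather than on $\mathbb F_2^{m-1}$, and multiplying by $v_i$ does the same job as the paper's choice $\alpha_1=\cdots=\alpha_{D-k}=r_i$: it supplies the missing constant monomial and ties the Reed--Muller bound to $\mathrm{wt}(r_i)<2^D$. The only differences are cosmetic. You argue by contradiction, and you apply duality in $\mathrm{RM}(D-1,m)$ instead of $\mathrm{RM}(D-1,m-1)$, which gives the same $2^D$ bound.
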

\begin{proof}
Let $h_j\in\mathbb F_2^m$ denote the $j$-th column of $H$. If $h_j=0$ and $z_j=1$, then $e_j\in C_X^\perp$, so $e_j$ is a weight-one term. It remains to consider a nonzero column pattern $v\in\mathbb F_2^m$. We show that the number of coordinates $j$ satisfying
$h_j = v$ and $ z_j=1 $ is even. Choose an index $i$ with $v_i=1$ and let $ R_i:=\mathrm{supp}(r_i). $
For $j\in R_i$, let $h_j^{(i)}\in\mathbb F_2^{m-1}$ be $h_j$ after deleting its $i$-th coordinate. Define \begin{equation}
    f_i(a):= \sum_{\substack{j\in R_i\\ h_j^{(i)}=a}}z_j \pmod 2. 
\end{equation}
Since $ |\mathrm{supp}(f_i)| \le |R_i| <2^D, $ it is enough to show that $f_i$ is orthogonal to every Boolean polynomial of degree at most $D-1$. Such polynomials are spanned by functions of the form
\begin{equation}
    P(a)=\prod_{s=1}^{k}(u^{(s)}\cdot a), \qquad 0\le k\le D-1.
\end{equation}
Set $ \alpha_1=\cdots=\alpha_{D-k}=r_i$ and
\begin{equation}
    \alpha_{D-k+s} =\sum_{\ell\ne i}u_\ell^{(s)}r_\ell, \qquad s=1,\ldots,k.
\end{equation}
By assumption,
\begin{equation}
    0 = \sum_{j=1}^n z_j(r_i)_j^{D-k} \prod_{s=1}^{k} \left( \sum_{\ell\ne i}u_\ell^{(s)}r_\ell \right)_j = \sum_{j\in R_i} z_j \prod_{s=1}^{k} \left(u^{(s)}\cdot h_j^{(i)}\right)= \sum_{a\in\mathbb F_2^{m-1}}f_i(a)P(a).
\end{equation}
Thus $f_i$ is orthogonal to every polynomial of degree at most $D-1$. If $m-1<D$, every function on $\mathbb F_2^{m-1}$ has degree at most $D-1$, and therefore $f_i=0$. If $m-1\ge D$, then Proposition~\ref{prop:reed-muller-duality} gives
\begin{equation}
    \mathrm{RM}(D-1,m-1)^\perp =\mathrm{RM}(m-1-D,m-1),
\end{equation}
and every nonzero word in this code has weight at least $2^D$. Since $|\mathrm{supp}(f_i)|<2^D$, it follows that $f_i=0$. Evaluating at the vector obtained from $v$ by deleting its $i$-th coordinate gives
\begin{equation}
     \sum_{j:h_j=v}z_j=0.
\end{equation}
Hence the coordinates with $z_j=1$ and column pattern $v$ can be paired. For each pair $(j,k)$, one has $h_j=h_k$, and therefore
\begin{equation}
    (e_j+e_k)\cdot r_a =(h_j)_a+(h_k)_a = 0 
\end{equation}
for every row $r_a$. Thus $e_j+e_k\in C_X^\perp. $ Together with the weight-one terms from the zero columns, we get \eqref{z-decomposition}.
\end{proof}

The intuition behind the following proposition is to isolate the genuinely
level-$D$ part of the diagonal gate. The parity vector $z:=q\bmod 2$ records precisely the qubits on which an odd power of $T_D$
appears. The finite-difference constraint will force these qubits to split into disjoint one- and two-qubit $Z$-type relations. On the codespace, each such relation either makes an odd $T_D$ factor act as a scalar or combines two $T_D$ factors into a single $T_{D-1}$ factor, while the even
part of $q$ is already level-$(D-1)$. Thus the entire physical representative, and hence its logical action, descends by one level of the Clifford hierarchy.

\begin{proposition}\label{prop:diagonal-descent}
Let $D\ge 3$, and let $\mathcal C_0,\mathcal C_1$ be two $[[n,k,d]]$ stabilizer codes with $d\ge 3$. Suppose $D_q\mathcal C_0=\mathcal C_1. $ If the stabilizer group $S_0$ of $\mathcal C_0$ admits a generating set of checks of weight strictly less than $2^D$, then the logical map induced by $D_q$ from $\mathcal C_0$ to $\mathcal C_1$ lies in $\mathcal C_{D-1}$.
\end{proposition}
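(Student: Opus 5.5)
Throughout, $D_q$ means $D_q^{(D)}=\bigotimes_j T_D^{q_j}$ with $q\in(\mathbb Z/2^D\mathbb Z)^n$. The plan is to split $q$ into its parity part and its even part, and to show that the parity part can be rewritten on $\mathcal C_0$ using only $T_{D-1}$-level factors.

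Write $z:=q\bmod 2$ and $q=z+2q'$, where $z_j\in\{0,1\}$ is viewed as an integer. Since $T_D^2=T_{D-1}$, the even part $\bigotimes_j T_{D-1}^{q'_j}$ is a transversal diagonal gate with level-$(D-1)$ phases. The goal is therefore to show
\begin{equation}
D_q\Pi_0=e^{i\phi}D'\Pi_0 ,
\end{equation}
where $D'$ is a tensor product of diagonal gates with phases $e^{i\pi m/2^{D-2}}$. Given this, the logical map from $\mathcal C_0$ to $\mathcal C_1$ equals that of $D'$. I would then argue that a transversal diagonal gate of level $D-1$ mapping one stabilizer code to another induces a logical operator in $\mathcal C_{D-1}$. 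For this step I would conjugate logical Paulis and use that conjugating an $X$-type Pauli by $D'$ gives a Pauli times a diagonal gate one level lower, then induct on the level. The same induction that proves $\bigotimes_j T_{D-1}^{m_j}\in\mathcal C_{D-1}$ physically should transfer to the induced logical map, because $D'$ maps logical Paulis of $\mathcal C_0$ to operators that preserve the codes and lie one level lower.

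The key step is to establish the hypothesis of Lemma~\ref{lem:odd-phase-pairing} for $z$, taking $H$ to be an $X$-part check matrix of a low-weight generating set of $S_0$. The rows of $H$ then have weight $<2^D$ and span $C_X(S_0)$, which may require passing to the $x$-parts of the generators. I would use the finite-difference structure of codewords given by Proposition~\ref{prop:stabilizer-support-normal-form}. The codewords of $\mathcal C_0$ have supports that are cosets of $C_X$, with $\mathbb Z_4$ phases. Because $D_q\mathcal C_0=\mathcal C_1$ with $\mathcal C_1$ also a stabilizer code, comparing coefficients as in the proof of Lemma~\ref{lem:one-qubit-hierarchy-normal-form} shows that $F(x):=q\cdot x \bmod 2^D$ differs from a $\mathbb Z_4$-valued function (times $2^{D-2}$) by a constant on each support coset. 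Taking $D-1$ successive finite differences $\Delta_{\alpha_1}\cdots\Delta_{\alpha_{D-1}}$ along directions in $C_X$ kills quadratic-form phases up to the relevant order. I would use the identity
\begin{equation}
\Delta_{\alpha_1}\cdots\Delta_{\alpha_D}\Big(\sum_j q_jx_j\Big) = \sum_j q_j(-2)^{D-1}\prod_s(\alpha_s)_j\cdot(\pm1)
\end{equation}
for the mod-$2$ lift $x\mapsto(x_j)$, read modulo $2^D$. From this identity the vanishing modulo $2^D$ forces $\sum_j z_j\prod_s(\alpha_s)_j\equiv0\pmod 2$. I expect this to be the main obstacle: I have to handle the integer lift of $x\oplus\alpha$ correctly and check that the $\mathbb Z_4$ phases $a_\ell,b_\ell$ contribute only differences of order $\le 2$, which are annihilated once $D\ge3$ differences are taken.

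Once Lemma~\ref{lem:odd-phase-pairing} is applied, $z=\sum_\mu c_\mu$ with disjoint $c_\mu\in C_X^\perp$ of weight one or two. Since $d\ge3$, the Pauli $Z(c_\mu)$ commutes with all $X$-parts and has weight at most $2<d$. I would argue that $\pm Z(c_\mu)$, possibly after accounting for the $Y$ components, lies in $S_0$ up to sign. Otherwise it would be a nontrivial logical operator or would fail to commute, which needs care: commuting with $C_X$ suffices for pure $Z$-type operators. Then I use the following identities on $\Pi_0$. For a weight-one term $Z_j\Pi_0=\pm\Pi_0$, so qubit $j$ is fixed to $|0\rangle$ or $|1\rangle$ and $T_D$ acts there as a scalar. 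For a weight-two term $Z_jZ_k\Pi_0=\pm\Pi_0$. With the $+$ sign, $T_D\otimes T_D$ acts as $T_{D-1}\otimes I$ on the codespace. With the $-$ sign, $x_j+x_k=1$, so the pair acts as the scalar $e^{i\pi/2^{D-1}}$. Substituting these into $D_q$ produces $D'$ and completes the descent.
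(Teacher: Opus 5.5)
Your outline follows the paper's proof essentially step for step. The shared steps are: the split $q=z+2q'$; $D$-fold finite differences of $Q(x)=\sum_j q_jx_j$ along $C_X$, giving $\sum_j z_j\prod_s(\alpha_s)_j\equiv 0\pmod 2$; Lemma~\ref{lem:odd-phase-pairing} applied to the $x$-parts of the low-weight checks; $d\ge 3$ forcing each $\pm Z(c_\mu)$ into $S_0$; the three cases; and the Pauli-conjugation induction.

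The gap is that the step you set aside as the ``main obstacle'' is where the paper does its real work. It cannot be read off from Proposition~\ref{prop:stabilizer-support-normal-form}, which only says the phases are $\mathbb Z_4$-valued and puts no constraint on their structure. A generic $\mathbb Z_4$-valued phase does not survive the argument. For instance, on a coset of dimension at least $D$, $a(y)=y_1y_2\cdots y_D$ has $\Delta_{e_1}\cdots\Delta_{e_D}a=\prod_{i=1}^D(1-2y_i)$, which is odd, so $2^{D-2}a$ is not annihilated modulo $2^D$.

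What you need is that on a $C_X$-coset $u+\sum_i y_ia_i$ the phase exponent has the form $R=\ell+2p$, with $\ell$ affine and $\deg p\le 2$. The paper derives this from the stabilizer equations:
\begin{itemize}
\item For $g_i=i^{\kappa_i}X(a_i)Z(b_i)\in S_t$, the relation $g_i|\psi_t\rangle=|\psi_t\rangle$ gives $R_t(y+e_i)-R_t(y)\equiv A_i+2\sum_j B_{ij}y_j\pmod 4$.
\item Consistency of these differences forces $B_{ij}=B_{ji}$ and $B_{ii}\equiv A_i\pmod 2$, and integrating yields $R_t=\ell_t+2p_t$.
\item Hence $\Delta^3R_t\equiv 0\pmod 4$, so $2^{D-2}R_t$ is killed by $D\ge 3$ differences modulo $2^D$.
\end{itemize}
The relevant count is $D$ differences, as in your displayed identity, not $D-1$.

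There is also an ingredient you do not flag. You need the same quadratic structure for the phases $b_\ell$ of $D_q|\psi_\ell\rangle\in\mathcal C_1$, along the directions of $C_X(S_0)$ (those of the low-weight checks). Being a codeword of $\mathcal C_1$ constrains the phases only along directions in $C_X(S_1)$, so you must justify $C_X(S_0)\subseteq C_X(S_1)$.

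The paper gets $C_X(S_0)=C_X(S_1)$, and likewise $C_Z(S_0)=C_Z(S_1)$, from the fact that $D_q$ commutes with every $Z(z)$. Hence $Z(z)$ preserves $\mathcal C_0$ iff it preserves $\mathcal C_1$, i.e.\ $C_X(S_0)^\perp=C_X(S_1)^\perp$.

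It then sidesteps bases entirely. Take $z_1,\dots,z_k\in C_X^\perp$ representing a basis of $C_X^\perp/C_Z$, and let $|\psi_t\rangle\in\mathcal C_t$ be the unique state fixed by all $Z(z_a)$. Since $D_q$ commutes with these operators, $D_q|\psi_0\rangle\propto|\psi_1\rangle$. Both are stabilizer states supported on the common coset $u+C_X$, and this single pair of codewords suffices for the finite-difference argument. With these two points supplied, the rest of your argument goes through as written.
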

\begin{proof}
    Since $D_q$ is diagonal, it commutes with every pure-$Z$ Pauli. Hence the two codes have the same signed pure-$Z$ stabilizers, and therefore $C_Z(S_0)=C_Z(S_1)=:C_Z.$ Moreover, for any $z\in\mathbb F_2^n$, the operator $Z(z)$ preserves $\mathcal C_0$ if and only if it preserves $\mathcal C_1$. Let $t\in \{0,1\}$. Since $Z(z)$ preserves $\mathcal C_t$ is equivalent to $z\in C_X(S_t)^\perp,$ we obtain $C_X(S_0)=C_X(S_1)=:C_X. $

Let $g_1,\ldots,g_m$ be generators of $S_0$ of weight strictly less than $2^D$, and set $r_a:=x(g_a).$ Then $r_1,\ldots,r_m$ span $C_X$ and satisfy $|r_a|<2^D.$ The quotient $C_X^\perp/C_Z$ has dimension $k$. Choose representatives $z_1,\ldots,z_k\in C_X^\perp$ whose classes form a basis of this quotient. For each $t\in\{0,1\}$, let $|\psi_t\rangle\in\mathcal C_t$ be the unique normalized state satisfying
\begin{equation}
    Z(z_a)|\psi_t\rangle=|\psi_t\rangle, \qquad a=1,\ldots,k.
\end{equation}
Since $D_q$ commutes with every $Z(z_a)$ and maps $\mathcal C_0$ to $\mathcal C_1$, there is a phase $e^{i\theta}$ such that $ D_q|\psi_0\rangle=e^{i\theta}|\psi_1\rangle. $  Adding the operators $Z(z_a)$ does not change the $X$-projection of the stabilizer group, so both $|\psi_0\rangle$ and $|\psi_1\rangle$ have $X$-projection $C_X$. Their computational-basis support is therefore an affine coset of $C_X$. Since $D_q$ is diagonal, the two supports coincide. Write this common support as $u+C_X.$ Choose a basis $a_1,\ldots,a_r$ of $C_X$ and write \begin{equation}
    x(y)=u+\sum_{i=1}^r y_i a_i.
\end{equation}
The stabilizer equations imply that the nonzero amplitudes have equal magnitude and may be written as
\begin{equation}\label{amplitude-mag-psi}
    \langle x(y)|\psi_t\rangle = 2^{-r/2}\lambda_t i^{R_t(y)},
\end{equation}
for an overall phase $\lambda_t$ such that $|\lambda_t|=1$ where $R_t:\mathbb F_2^r\rightarrow\mathbb Z/4\mathbb Z$ has the form
\begin{equation}
    R_t(y)=\ell_t(y)+2p_t(y)\pmod 4,
\end{equation}
with $\ell_t$ affine-linear and $p_t$ of Boolean degree at most two. Indeed,
if a stabilizer with $X$-part $a_i$ has $Z$-part $b_i$, its stabilizer equation gives
\begin{equation}\label{eq:Rt-first-difference}
    R_t(y+e_i)-R_t(y) = A_i+2\sum_{j=1}^r B_{ij}y_j \pmod 4.
\end{equation}
One can see this as follows. For each $i$, choose a stabilizer with $X$-part $a_i$ and
write it as $g_i=i^{\kappa_i}X(a_i)Z(b_i)$, where $\kappa_i\in\mathbb Z/4\mathbb Z$. Its action on a computational-basis state is
\begin{equation}
g_i|x\rangle =i^{\kappa_i}(-1)^{b_i\cdot x}|x+a_i\rangle.
\end{equation}
Since $x(y+e_i)=x(y)+a_i,$  the stabilizer equation $g_i|\psi_t\rangle=|\psi_t\rangle$ relates the amplitudes at $x(y)$ and $x(y+e_i)$. Comparing the coefficient of $|x(y+e_i)\rangle$ gives
\begin{align}
2^{-r/2}\lambda_t i^{R_t(y+e_i)}
&=i^{\kappa_i}(-1)^{b_i\cdot x(y)}2^{-r/2}\lambda_t i^{R_t(y)},\\i^{R_t(y+e_i)-R_t(y)}&=i^{\kappa_i+2b_i\cdot x(y)},
\end{align}
where we used $(-1)^c=i^{2c}$ for $c\in\mathbb F_2$. Therefore,
\begin{align}
R_t(y+e_i)-R_t(y)
&\equiv \kappa_i+2b_i\cdot x(y) \pmod 4\\
&\equiv \kappa_i+2b_i\cdot u +2\sum_{j=1}^r(b_i\cdot a_j)y_j \pmod 4\\
&\equiv A_i+2\sum_{j=1}^rB_{ij}y_j \pmod 4,
\end{align}
where
\begin{equation}
A_i:=\kappa_i+2b_i\cdot u\pmod 4,
\qquad B_{ij}:=b_i\cdot a_j\pmod 2.
\end{equation}

To reconstruct $R_t$ from these first differences, write
$\Delta_iR_t(y):=R_t(y+e_i)-R_t(y)$. Since changing two distinct coordinates commutes, $\Delta_j\Delta_iR_t(y)=\Delta_i\Delta_jR_t(y)$. On the other hand, applying $\Delta_j$ to \eqref{eq:Rt-first-difference}
changes only the term containing $y_j$, and therefore
\begin{equation}
    \Delta_j\Delta_iR_t(y)\equiv 2B_{ij}\pmod 4,
\qquad \Delta_i\Delta_jR_t(y)\equiv 2B_{ji}\pmod 4.
\end{equation}
It follows that $B_{ij}=B_{ji}$ for $i\ne j$. Also, changing the $i$-th coordinate twice returns to the original point, so $\Delta_iR_t(y)+\Delta_iR_t(y+e_i)=0\pmod 4.$
Substituting \eqref{eq:Rt-first-difference} gives $2A_i+2B_{ii}=0\pmod 4,$ and hence $B_{ii}\equiv A_i\pmod 2.$ Now define
\begin{equation}
    \widetilde R_t(y) := R_t(0)+\sum_{i=1}^r A_i y_i +2\sum_{1\le i<j\le r}B_{ij}y_i y_j \pmod 4.
\end{equation}
Using $B_{ij}=B_{ji}$ and $B_{ii}\equiv A_i\pmod 2$, one gets
\begin{align}
\Delta_i\widetilde R_t(y)
&\equiv A_i(1-2y_i) +2\sum_{j\ne i}B_{ij}y_j \pmod 4\\
&\equiv A_i+2\sum_{j=1}^rB_{ij}y_j \pmod 4\\
&\equiv \Delta_iR_t(y) \pmod 4.
\end{align}
Thus $R_t-\widetilde R_t$ is unchanged by flipping any coordinate.
Since the Boolean cube $\mathbb F_2^r$ is connected by such flips and
$R_t(0)=\widetilde R_t(0)$, we conclude that
$R_t=\widetilde R_t$. Therefore,

\begin{equation}
    R_t(y)=\ell_t(y)+2p_t(y)\pmod 4,
\end{equation}
where $\ell_t(y) := R_t(0)+\sum_{i=1}^rA_i y_i$ is affine-linear and $p_t(y) := \sum_{1\le i<j\le r}B_{ij}y_i y_j$ is a Boolean polynomial of degree at most two. Now define $F_t(x(y)) := 2^{D-2}R_t(y)\pmod {\mathbb Z/2^D\mathbb Z}.$
Then since
\begin{equation}
    \exp\!\left(\frac{\pi i}{2^{D-1}}F_t(x(y))\right)=\exp\!\left(\frac{\pi i}{2}R_t(y)\right)=i^{R_t(y)}.
\end{equation}
So using \eqref{amplitude-mag-psi},
\begin{equation}
    |\psi_t\rangle = 2^{-r/2}\lambda_t \sum_{x\in u+C_X} \exp\left(\frac{\pi i}{2^{D-1}}F_t(x)\right)|x\rangle.
\end{equation}
Because $R_t$ is affine-linear plus twice a quadratic Boolean polynomial,
one has
\begin{equation}
\Delta_{\alpha_1}\cdots\Delta_{\alpha_D}F_t(x)=0 \pmod{2^D}
\end{equation}
for every $\alpha_1,\ldots,\alpha_D\in C_X$. Now define
\begin{equation}
    Q(x):=\sum_{j=1}^n q_jx_j  \pmod{\mathbb Z/2^D\mathbb Z}. 
\end{equation}
Indeed, since
\begin{equation}
  D_q|x\rangle =\exp\!\left(\frac{\pi i}{2^{D-1}}Q(x)\right)|x\rangle,  
\end{equation}
comparing with the coefficient of $|x\rangle$ in $D_q|\psi_0\rangle=e^{i\theta}|\psi_1\rangle$ gives
\begin{equation}
    \exp\!\left( \frac{\pi i}{2^{D-1}}
\bigl(F_0(x)+Q(x)-F_1(x)\bigr) \right) = e^{i\theta}\frac{\lambda_1}{\lambda_0}.
\end{equation}
The right-hand side is independent of $x$. Fixing any $x_0\in u+C_X$ and setting $\kappa:=F_0(x_0)+Q(x_0)-F_1(x_0)\pmod{2^D},$
we therefore obtain, for every $x\in u+C_X$,
\begin{equation}
F_0(x)+Q(x)=F_1(x)+\kappa\pmod{2^D}.
\end{equation} 
Taking $D$ finite differences of the above in directions $\alpha_1,\ldots,\alpha_D\in C_X$, the constant $\kappa$ disappears and the $D$th differences of $F_0$ and $F_1$
vanish, so
\begin{equation}
\Delta_{\alpha_1}\cdots\Delta_{\alpha_D}Q(x)=0\pmod{2^D}.
\end{equation}
For the Boolean coordinate function $x_j$, one has $\Delta_\alpha x_j =(x_j\oplus\alpha_j)-x_j=\alpha_j(1-2x_j).$
Iterating this identity gives
\begin{equation}
\Delta_{\alpha_1}\cdots\Delta_{\alpha_D}x_j
=(-2)^{D-1}\left(\prod_{s=1}^D(\alpha_s)_j\right)(1-2x_j)\equiv 2^{D-1}\prod_{s=1}^D(\alpha_s)_j\pmod{2^D}.
\end{equation}
Since $Q(x)=\sum_{j=1}^nq_jx_j$, it follows that
\begin{equation}
    0\equiv 2^{D-1}\sum_{j=1}^nq_j
\prod_{s=1}^D(\alpha_s)_j\pmod{2^D}.
\end{equation}
Equivalently,
\begin{equation}
    \sum_{j=1}^n(q_j\bmod2) \prod_{s=1}^D(\alpha_s)_j =0\pmod2
\end{equation}
for every $\alpha_1,\ldots,\alpha_D\in C_X$. Now let $z_j:=q_j\bmod 2.$ Applying Lemma~\ref{lem:odd-phase-pairing} to the matrix with rows $r_1,\ldots,r_m$ gives a disjoint decomposition $z = \sum_\mu c_\mu$ for $c_\mu\in C_X^\perp$ and  $|c_\mu|\le 2$. Thus each $Z(c_\mu)$ preserves $\mathcal C_0$. Since $ |c_\mu|\le 2<d, $ it cannot be a nontrivial logical Pauli. Hence $ c_\mu\in C_Z.$ For each $\mu$, there is therefore an $\epsilon_\mu\in\mathbb F_2$ such that $(-1)^{\epsilon_\mu}Z(c_\mu)\in S_0. $

Every computational-basis label $x$ appearing in a state of $\mathcal C_0$ satisfies $c_\mu\cdot x=\epsilon_\mu \pmod 2. $
There are three cases. If $c_\mu=e_j$, then $x_j$ is fixed, so $T_D$ on qubit $j$ acts as a scalar on $\mathcal C_0$. If $c_\mu=e_j+e_k$ and $\epsilon_\mu=1$, then
$x_j+x_k=1$, so the two $T_D$ factors act as $e^{\pi i/2^{D-1}}.$ If $c_\mu=e_j+e_k$ and $\epsilon_\mu=0$, then $x_j=x_k$, and $e^{\frac{\pi i}{2^{D-1}}(x_j+x_k)}=e^{\frac{\pi i}{2^{D-2}}x_j}.$ Thus the two $T_D$ factors agree on $\mathcal C_0$ with a single $T_{D-1}$ factor. Since the supports of the $c_\mu$ are disjoint, there is a transversal gate $R_{\rm odd}=\bigotimes_{j=1}^n R_j,$ with $R_j\in\mathcal C_{D-1}, $ and a phase $\gamma$ such that $D_z|\phi\rangle = \gamma R_{\rm odd}|\phi\rangle $ for every $|\phi\rangle\in\mathcal C_0$. Since $z=q\bmod 2$, write $ q-z=2q', $ with $ q'\in(\mathbb Z/2^{D-1}\mathbb Z)^n. $ Then 
\begin{equation}
    D_q=D_zD_{2q'}, \qquad D_{2q'} =\bigotimes_{j=1}^nT_{D-1}^{q'_j}. 
\end{equation}
Therefore, on $\mathcal C_0$\,,
\begin{equation}
 D_q=\gamma W, \qquad W:=R_{\rm odd}D_{2q'},
\end{equation}
where $W$ is transversal and every one-qubit factor of $W$ is in $\mathcal C_{D-1}$. Since $D_q\mathcal C_0=\mathcal C_1$, $W$ also maps $\mathcal C_0$ to $\mathcal C_1$ and induces the same logical map as $D_q$ up to phase.

It remains only to note that a transversal physical level-$E$ gate mapping one stabilizer code to another induces a logical level-$E$ map. This follows by induction on $E$. It is immediate for $E=2$. For $E>2$, conjugating a physical representative of any logical Pauli by the transversal gate gives a
transversal level-$(E-1)$ operator on the target code; the induction hypothesis then shows that its logical action lies in
$\mathcal C_{E-1}$. Hence the original logical map lies in $\mathcal C_E$. Applying this observation with $E=D-1$ proves that the logical map induced by $D_q$ lies in $\mathcal C_{D-1}$.
\end{proof}

\subsection{Proof of Theorem \ref{thm:clifford-hierarchy-wb}}
\begin{proof}[Proof of Theorem \ref{thm:clifford-hierarchy-wb}]
   Suppose, for contradiction, that $S$ has a generating set whose checks all have weight strictly less than $2^D$. By Lemma~\ref{lem:one-qubit-hierarchy-normal-form}, there exist local Clifford gates $A,B$, an integer $E\ge2$, and $q\in(\mathbb Z/2^E\mathbb Z)^n$ such that
\begin{equation}
    U\Pi=e^{i\varphi}A D_q^{(E)}B\Pi.
\end{equation}
Replacing $E$ by a larger integer if necessary, we may assume $E\ge D$, since $T_E^a=T_{E'}^{\,2^{E'-E}a}$ for every $E'\ge E$.

Set $\mathcal C_0:=B\mathcal C$ and $\mathcal C_1:=A^\dagger\mathcal C$. Local Clifford gates preserve distance and Pauli weight, so $\mathcal C_0$ and $\mathcal C_1$ have distance $d\ge3$, and the stabilizer of $\mathcal C_0$ is generated by checks of weight strictly less than $2^D$. Moreover,
\begin{equation}
    D_q^{(E)}\mathcal C_0=\mathcal C_1.
\end{equation}
With $V_0:=BV$ and $V_1:=A^\dagger V$, the induced logical map is
\begin{equation}
    V_1^\dagger D_q^{(E)}V_0=e^{-i\varphi}\overline U.
\end{equation}

We now descend the physical hierarchy level. Suppose $e>D$ and $D_{q^{(e)}}^{(e)}$ maps $\mathcal C_0$ to $\mathcal C_1$ while inducing this same logical map. Since every chosen stabilizer generator has weight less than $2^D\le2^e$, Proposition~\ref{prop:diagonal-descent} applies. Its proof constructs, up to phase, a diagonal level-$(e-1)$ representative of the same map: $R_{\rm odd}$ is a product of powers of $T_{e-1}$, while $D_{2q'}^{(e)}=D_{q'}^{(e-1)}$. Hence there exists $q^{(e-1)}\in(\mathbb Z/2^{e-1}\mathbb Z)^n$ such that $D_{q^{(e-1)}}^{(e-1)}$ maps $\mathcal C_0$ to $\mathcal C_1$ and induces the same logical map.

Iterating from $e=E$ down to $e=D$ yields a diagonal physical level-$D$ representative of $e^{-i\varphi}\overline U$. Applying Proposition~\ref{prop:diagonal-descent} once more, now at level $D$, shows that this logical map lies in $\mathcal C_{D-1}$. Since global phases do not affect inclusion in Clifford-hierarchy, $\overline U\in\mathcal C_{D-1}$, contradicting the hypothesis. Therefore $\lambda_S\ge2^D.$
\end{proof}

\subsection{Proof of Theorem \ref{thm:concatenated_no_go}}
Next we prove Theorem \ref{thm:concatenated_no_go}. First we understand the precise setting. Our result considers standard concatenated codes of the form $\mathcal{C} = \mathcal{C}_{1} \circ \mathcal{C}_2 \circ \cdots \circ \mathcal{C}_r$ where $\mathcal{C}_1$ is any stabilizer code of distance $d\geq 3$ and arbitrary rate whereas all $\mathcal{C}_j$ are $[[n_j, 1, d_j]]$ codes with $d_j \geq 3$. The composition $\mathcal{C}_j \circ \mathcal{C}_{j+1}$ corresponds to mapping $n_j$ copies of a logical qubit encoded in $\mathcal{C}_{j+1}$ to physical qubits of $\mathcal{C}_j$. Each code $\mathcal{C}_j$ has an encoding isometry $V_j$ and a projector $\Pi_j$, and we use $\Tilde{\Pi}$ to denote the projector onto $\mathcal{C}$. 

Our proof proceeds in three steps. First, we show that at each level of concatenation, if a block-transversal unitary preserves the concatenated code, it also preserves the child codes. Therefore, if a transversal gate preserves the overall depth-$r$ concatenated code, it must preserve the codespace of each subtree. We then argue that if the transversal gate implements a genuine level-$D$ non-Clifford logical on the depth-$r$ code, then for some child code at each level of the concatenation tree, the restriction of the transversal gate to that code must implement at least a genuine level-$D$ non-Clifford logical. Finally, we apply Theorem \ref{thm:clifford-hierarchy-wb} to conclude that the number of physical qubits in each constituent code must be at least $2^D$ for the previous two statements to hold, which forces the concatenated code to have at least $2^{Dr}$ qubits, completing the proof.

We begin by showing that block-transversal logical gates preserve the children codes of each node in the concatenation tree. The following lemma is important for bookkeeping.

\begin{lemma} \label{lemma:frozen_qubits}
   Let $\mathcal{C}$ be an $n$-qubit stabilizer code with stabilizer group $S$. If $\epsilon P_j\in S$, where $\epsilon \in \{\pm 1\}$ and $P_j \in \{X_j, Y_j, Z_j\}$, then $\mathcal{C} = \ket{\phi_j} \otimes \mathcal{C}'$, where $\ket{\phi_j}$ is the $+1$ eigenstate of $\epsilon P_j$ and $\mathcal{C}'$ is an $n-1$-qubit stabilizer code. That is, qubit $j$ is a fixed tensor factor.
\end{lemma}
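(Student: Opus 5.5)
The plan is to use the standard stabilizer-formalism reduction: choose a Clifford that maps $\epsilon P_j$ to $Z_j$, put the stabilizer group into a form where qubit $j$ appears only in the generator $Z_j$, and read off the tensor-product structure.

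\emph{Normalize.} Apply the single-qubit Clifford $C_j$ with $C_j(\epsilon P_j)C_j^\dagger = Z_j$. It suffices to treat the case $Z_j \in S$. If the conjugated code is $\ket{0}\otimes\mathcal C''$, then the original code is $C_j^\dagger\ket{0}\otimes\mathcal C''$. Here $C_j^\dagger\ket{0}$ is the $+1$ eigenstate of $\epsilon P_j$.

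\emph{Clean the other generators.} Extend $Z_j$ to a generating set $\{Z_j, g_2,\ldots,g_m\}$ of $S$. Every $g_i$ commutes with $Z_j$, so its restriction to qubit $j$ is $I$ or $Z$, up to phase. If it is $Z$, replace $g_i$ by $g_iZ_j$. This replacement keeps the set generating $S$. Each new $g_i$ has the form $\pm I_j\otimes h_i$ with $h_i$ an $(n-1)$-qubit Pauli, the sign being absorbed into $h_i$. Let $S' := \langle h_2,\ldots,h_m\rangle$.

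\emph{Check that $S'$ is a stabilizer group and identify the code.} The $h_i$ commute pairwise because the $g_i$ do. It remains to show $-I\notin S'$. Suppose some product of the $h_i$ equals $-I$. The same product of the $g_i$ would then be $-I$, which lies in $S$ and is impossible. So $S'$ is a valid stabilizer group with codespace $\mathcal C'$.

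A state lies in $\mathcal C$ if and only if it is stabilized by $Z_j$ and by every $I_j\otimes h_i$. The first condition places qubit $j$ in $\ket 0$, giving the form $\ket 0\otimes\ket\chi$. The second condition then says $\ket\chi\in\mathcal C'$. Hence $\mathcal C = \ket 0\otimes\mathcal C'$.

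\emph{Main obstacle.} The only real subtlety is the sign bookkeeping needed to confirm $-I\notin S'$, which the argument above handles. Every other step is routine.
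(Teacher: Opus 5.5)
Your proposal is correct and follows essentially the same route as the paper: every stabilizer acts on site $j$ as $I_j$ or $P_j$, so $S$ splits as $\langle \epsilon P_j\rangle \times S'$, and the codewords factor as $\ket{\phi_j}\otimes\ket{\chi}$ with $\ket{\chi}\in\mathcal C'$. The Clifford normalization to $Z_j$ is cosmetic. Your cleaning step $g_i\mapsto g_iZ_j$ and the check $-I\notin S'$ spell out the sign bookkeeping that the paper's phrase ``$S$ restricted to all sites except $j$'' leaves implicit.
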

\begin{proof}
    Because every stabilizer commutes with $P_j$, every stabilizer acts on site $j$ with $P_j$ or $I_j$. Defining $S'$ to be S restricted to all sites except $j$, it then follows that $S = S' \times \langle \epsilon P_j\rangle$. Then every codeword $\ket{\psi}\in \mathcal{C}$ can be written as $\ket{\psi'} \otimes \ket{\phi_j}$ where $\ket{\psi'}$ is in $\mathcal{C}'$, the stabilizer code with stabilizer group $S'$.
\end{proof}
This lets us formally exclude codes with trivial single-qubit tensor factors. Let $\mathcal{A}$ be an $[[n,k,d]]$ stabilizer code with stabilizer group $S$ and let $F_A$ be the set of qubit coordinates such that $j\in F_A$ if $\pm P_j \in S_A$ for a nonidentity single-qubit $P_j$. We say that $F_A$ is the set of frozen coordinates, and $R_A = [n]\backslash F_A$ is the set of active coordinates. By Lemma \ref{lemma:frozen_qubits}, we can write $\mathcal{A}$ as
\begin{equation}
    \mathcal{A} = \left(\bigotimes_{j\in F_A}\ket{\phi_j}\right) \otimes \mathcal{A}_R ,
\end{equation}
where $\mathcal{A}_R$ is itself a stabilizer code. Moreover, $\mathcal{A}_R$ has the same code distance as $\mathcal{A}$, because any Pauli in the normalizer of $S_A$ must act on site $j$ as identity or $P_j$ where $P_j\in S_A$; one can always select the normalizer with $I_j$ as the logical Pauli representative. As such, every logical Pauli of $\mathcal{A}_R$ has at least the same weight as the Paulis in some logical coset of $\mathcal{A}$.

This bookkeeping allows us to factor out the trivial part of a stabilizer code. Notationally, we use $\mathcal{A}_R$ in the following discussion to refer to the core logical part of a general stabilizer code $\mathcal{A}$, and we use $V_A^{(R)}$ and $\Pi_A^{(R)}$ to refer to the encoding isometry and projector for $\mathcal{A}_R$ respectively. We refer to the set $F_A$ as the set of frozen coordinates and its complement, $R_A$, as the set of active coordinates.

\begin{lemma}\label{lemma:concatenation_step_preserved}
    Let $\mathcal{A}, \mathcal{B}$ be $[[n_A, k, d_A]]$ and  $[[n_B, 1, d_B]]$ stabilizer codes respectively, with $d_A, d_B \geq 3$,  encoding isometries $V_A, V_B$, and projectors $\Pi_A, \Pi_B$. Let $\Tilde{V} = V_B^{\otimes n} V_A$ and $\Tilde{\Pi} = V_B^{\otimes n_A} \Pi_A (V_B^\dagger)^{\otimes n_A}$. Let $\mathcal H_{\mathcal{B}}$ denote the codomain of $V_\mathcal{B}$. Suppose the block-transversal unitary $U = \bigotimes_{j=1}^{n_A}U_j$ for $U_j \in U(\mathcal H_{\mathcal{B}})$ preserves the concatenated codespace, i.e.~$U\Tilde{\Pi} U^\dagger = \Tilde{\Pi}$, and let $\overline{U} = \Tilde{V}^\dagger U \Tilde{V}$. For every $j\in R_A$, let $u_j = V_B^\dagger U_j V_B$. Then there is a real number $\theta$ such that
    \begin{equation}
        \overline{U} = e^{i\theta} (V_A^{(R)})^\dagger \left(\bigotimes_{j\in R_A} u_j\right)V_A^{(R)} \ .
    \end{equation}
\end{lemma}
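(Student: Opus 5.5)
The plan is a reduced-state argument: first show that each block unitary $U_j$ preserves the inner code $\mathcal B$ when $j\in R_A$, and stabilizes the single codeword of block $j$ up to a phase when $j\in F_A$; then read off $\overline U$ directly from $U\tilde V$.

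\textbf{Step 1 (invariance of reduced states).} Let $\tilde\rho:=\tilde\Pi/\operatorname{tr}\tilde\Pi$ be the maximally mixed concatenated code state. Since $U=\bigotimes_j U_j$ is block-transversal and $U\tilde\Pi U^\dagger=\tilde\Pi$, taking the partial trace over all blocks except block $j$ gives
\begin{equation}
U_j\,\tilde\rho_j\,U_j^\dagger=\tilde\rho_j,
\end{equation}
where $\tilde\rho_j$ denotes the reduced state of $\tilde\rho$ on block $j$. Because $\tilde\Pi=V_B^{\otimes n_A}\Pi_A(V_B^\dagger)^{\otimes n_A}$, we have
\begin{equation}
\tilde\rho_j=V_B\,\rho^{A}_j\,V_B^\dagger,
\end{equation}
where $\rho^A_j$ is the reduced state of $\Pi_A/\operatorname{tr}\Pi_A$ on outer qubit $j$.

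\textbf{Step 2 (computing $\rho^A_j$).} The case split is whether $j$ is active or frozen.

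For $j\in R_A$, take any nonidentity single-qubit Pauli $P_j$. Since $d_A\ge 3$, $P_j$ is not a nontrivial logical operator. It is not in $\pm S_A$ because $j$ is active, and it is not in $\pm i S_A$ because $-I\notin S_A$. Hence $P_j$ anticommutes with some stabilizer, so $\Pi_A P_j\Pi_A=0$ and $\operatorname{tr}(\Pi_A P_j)=0$. Therefore $\rho^A_j=I/2$, and
\begin{equation}
\tilde\rho_j=V_BV_B^\dagger/2=\Pi_B/2.
\end{equation}
Step 1 then gives $U_j\Pi_BU_j^\dagger=\Pi_B$. So $u_j=V_B^\dagger U_jV_B$ is unitary and $U_jV_B=V_Bu_j$.

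For $j\in F_A$, Lemma~\ref{lemma:frozen_qubits} gives $\rho^A_j=|\phi_j\rangle\!\langle\phi_j|$. Then $\tilde\rho_j$ is the rank-one projector onto $V_B|\phi_j\rangle$, and invariance forces
\begin{equation}
U_jV_B|\phi_j\rangle=e^{i\theta_j}V_B|\phi_j\rangle
\end{equation}
for some real $\theta_j$.

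\textbf{Step 3 (assembling $\overline U$).} Write $V_A=\bigl(\bigotimes_{j\in F_A}|\phi_j\rangle\bigr)\otimes V_A^{(R)}$, up to the qubit relabeling. Then
\begin{equation}
U\tilde V=\prod_{j\in F_A}e^{i\theta_j}\;V_B^{\otimes n_A}\Bigl(\bigotimes_{j\in F_A}|\phi_j\rangle\otimes\bigl(\textstyle\bigotimes_{j\in R_A}u_j\bigr)V_A^{(R)}\Bigr).
\end{equation}
Multiplying on the left by $\tilde V^\dagger$ and using $V_B^\dagger V_B=I$ together with $\langle\phi_j|\phi_j\rangle=1$ yields the claimed formula, with $\theta=\sum_{j\in F_A}\theta_j$.

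The main obstacle is Step 2 for active coordinates. It must be checked carefully that the reduced state is exactly $I/2$. This uses that a single-qubit Pauli on an active coordinate cannot commute with all of $S_A$: if it did, then since $d_A\ge 2$ it would have to lie in $\pm S_A$, which would make the coordinate frozen. Once this is in place, everything else is bookkeeping with isometries.
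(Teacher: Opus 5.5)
Your proposal is correct and follows the paper's proof essentially step for step. The paper also takes the partial trace of $U\tilde\Pi U^\dagger=\tilde\Pi$ on each block and finds $\Tr_{\backslash j}(\Pi_A)\propto|\phi_j\rangle\!\langle\phi_j|$ on frozen coordinates and $\propto I$ on active ones. It then concludes that $U_j$ either fixes $V_B|\phi_j\rangle$ up to a phase or preserves $\Pi_B$, and reads $\overline U$ off from $U\tilde V$ with $\theta=\sum_{j\in F_A}\theta_j$. The only differences are cosmetic: you normalize to the maximally mixed code state, and you justify the active-coordinate reduced state via anticommutation with a stabilizer rather than by expanding $\Pi_A$ as a sum over $S_A$.
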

\begin{proof}
   Write $\Pi_A = \frac{1}{|S_A|}\sum_{s\in S_A} s$.  First consider any frozen $j\in F_A$. Take the partial trace over $\Tr_{\backslash j}(\Pi_A)$ on all coordinates except $j$,
   \begin{equation}
       \Tr_{\backslash j }(\Pi_A) = 2^{k-n}2^{n-1}(I + \epsilon P_j) = 2^{k}\ketbra{\phi_j}{\phi_j} \ .
   \end{equation}
   Defining $\ket{\Phi_j} = V_B\ket{\phi_j}$ to be the child-encoded version of the frozen state $\ket{\phi_j}$, we then see
   \begin{equation}
    \Tr_{\backslash j}(U \Tilde{\Pi} U^\dagger) = U_j \Tr_{\backslash j}(\Tilde{\Pi}) U^\dagger =  2^k U_j \ketbra{\Phi_j}{\Phi_j} U_j^\dagger \ , 
   \end{equation}
   where we use $\Tilde{\Pi} =  V_B^{\otimes n} \Pi_A (V_B^\dagger)^{\otimes n}$ and $\Tr_{\backslash j}(V_B^{\otimes n} \Pi_A (V_B^\dagger)^{\otimes n})  = V_B \Tr_{\backslash j}(\Pi_A) V_B^\dagger$. But since $U$ preserves the concatenated code, $ \Tr_{\backslash j}(U \Tilde{\Pi} U^\dagger) = \Tr(\Tilde{\Pi})$, so we obtain $U_j \ketbra{\Phi_j}{\Phi_j} U^\dagger = \ketbra{\Phi_j}{\Phi_j}$, which is possible only if $U_j\ket{\Phi_j} = e^{i\theta_j}\ket{\Phi_j}$ for some real phase $\theta_j$. 

   Now consider active $j \notin F_A$. Once again taking the partial trace, we have 
   \begin{equation}
       \Tr_{\backslash j }(\Pi_A) = 2^{k-n}2^{n-1}I_j = 2^{k-1}I_j\ .
   \end{equation}
   The same argument now gives $U_j \Pi_B U_j^\dagger = \Pi_B$ for every active $j$, so every block factor on an active coordinate of $\mathcal{A}$ preserves the entire child codespace. Consequently there exists a $u_j\in U(2)$ such that $U_j V_B = V_B u_j$. The concatenated encoding isometry can be written as 
   \begin{equation}
       \Tilde{V} = \left(\bigotimes_{j\in F_A} \ket{\Phi_j}\right) \otimes \left( V_B^{\otimes n-|F_A|} V_A^{(R)}\right) \ .
   \end{equation}
   Applying the block-transversal $U$ gives
   \begin{align}
       U\Tilde{V} = e^{i\Theta}\left(\bigotimes_{j\in F_A} \ket{\Phi_j}\right) \otimes \left( V_B^{\otimes n-|F_A|} \left(\bigotimes_{j\in R_A} u_j\right)V_A^{(R)}\right) \ ,
   \end{align}
   where $\Theta = \sum_{j\in F_A} \theta_j$. Since $U\Tilde{V} = \Tilde{V}\overline{U}$, we obtain
   \begin{equation}
        \overline{U} = e^{i\theta} (V_A^{(R)})^\dagger \left(\bigotimes_{j\in R_A} u_j\right)V_A^{(R)} 
   \end{equation}
   as desired.
\end{proof}
Recalling that $u_j$ is simply the logical action of the $j$-th block on the codespace of the child code, the core interpretation of this lemma is that a block-transversal unitary preserving a two-level concatenated code also preserves the codespace of the child code. We will apply this recursively at every level of concatenation. 

Next we record a standard fact about transversal non-Clifford gates.
\begin{lemma} \label{lemma:physical_to_logical_nc}
    Let $\mathcal{Q}$ be any $[[n, k, d]]$ stabilizer code preserved by a gate which decomposes as $G = \bigotimes_{j=1}^{n} g_j$. Suppose every $g_j$ lies in the $D^{\rm th}$ level of the Clifford hierarchy. Then $\overline{G}$, the logical action of $G$ on $\mathcal{Q}$, also lies in the $D^{\rm th}$ level of the Clifford hierarchy.
\end{lemma}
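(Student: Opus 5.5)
We argue by induction on $D$, the level of the Clifford hierarchy. The observation that drives the induction is that conjugation by a tensor product of single-qubit gates sends a tensor product of Paulis to a tensor product of single-qubit operators, each one level lower. The argument follows the sketch already given at the end of the proof of Proposition~\ref{prop:diagonal-descent}, now for a single code $\mathcal Q$ and arbitrary transversal factors.

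\emph{Setup.} Fix an encoding isometry $V$ with $GV = V\overline G$; this exists because $G$ preserves the codespace. To show $\overline G\in\mathcal C_D$ it suffices to show that $\overline G\,\overline P\,\overline G^\dagger\in\mathcal C_{D-1}$ for every logical Pauli $\overline P$. Since $\mathcal C_{D-1}$ is closed under multiplication by phases, it suffices to check this on the generators $X_a,Z_a$ and their products.

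\emph{Pushing through the encoding.} Choose a physical representative $\widetilde P=\bigotimes_j P_j\in\mathcal N(S)$, so that $\widetilde P V = V\overline P$. Then
\begin{equation}
G\widetilde P G^\dagger V = G\widetilde P V\overline G^\dagger = GV\overline P\,\overline G^\dagger = V\,\overline G\,\overline P\,\overline G^\dagger .
\end{equation}
Hence the operator $G' := G\widetilde P G^\dagger=\bigotimes_j g_jP_jg_j^\dagger$ preserves $\mathcal Q$, and its logical action is $\overline{G'}=\overline G\,\overline P\,\overline G^\dagger$. Each factor $g_jP_jg_j^\dagger$ lies in $\mathcal C_{D-1}^{(1)}$, because $g_j\in\mathcal C_D^{(1)}$. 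So $G'$ is again transversal, with factors one level lower.

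\emph{Induction.} The induction hypothesis applied to $G'$ then gives $\overline{G'}\in\mathcal C_{D-1}$, and therefore $\overline G\in\mathcal C_D$.

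\emph{Base case.} The main point to handle carefully is the base case. For $D=1$, each $g_j$ is a Pauli up to phase, so $G$ is an $n$-qubit Pauli up to phase. Since it preserves $\mathcal Q$, it commutes with $S$ up to sign. In fact it must commute exactly: an element anticommuting with some stabilizer would map the codespace to an orthogonal syndrome space. Thus $G\in\mathcal N(S)$, so its logical action is a logical Pauli up to phase, that is, it lies in $\mathcal C_1$. Alternatively, the induction can start at $D=2$. Local Cliffords preserving $\mathcal Q$ map normalizer elements to normalizer elements, because conjugates of stabilizers by a code-preserving unitary act as identity on the code and are Paulis, hence stabilizers up to sign. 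By the $D=1$ case, the logical action of such a local Clifford is Clifford.

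The only remaining subtlety is phase bookkeeping: $\mathcal C_D$ as defined contains all phase multiples, so global phases never obstruct membership. No distance assumption is needed anywhere in the argument.
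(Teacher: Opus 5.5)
Your proof is correct and follows the same induction as the paper. The base case is that a Pauli preserving the code lies in $\mathcal N(S)$. The inductive step conjugates a physical representative of a logical Pauli by the transversal gate, giving a transversal gate one level lower, to which the induction hypothesis applies. You are more explicit than the paper in checking that $G\widetilde P G^\dagger$ preserves $\mathcal Q$ and has logical action $\overline G\,\overline P\,\overline G^\dagger$, which is exactly what is needed to invoke the induction hypothesis.
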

\begin{proof}
    We proceed by induction on $D$. For $D=1$, $G$ is a Pauli operator, so if it preserves the codespace it lies in the normalizer and therefore induces a logical Pauli. Then suppose the hypothesis holds at level $D-1$. Choose any logical Pauli operator with physical Pauli representative $P$; then $GPG^\dagger = \bigotimes_{j=1}^n g_j P_j g_j^\dagger$, and every term in the tensor product is in the $D-1^{\rm th}$ level. The tensor product of $D-1$ level gates is in the $D-1$ level, but since $P$ is a logical representative, the logical action of $G$ maps a Pauli to a level-$D-1$ operator. As such $\overline{G}$ is in the $D^{\rm th}$ level.
\end{proof}
\begin{lemma} \label{lemma:concatenated_descent}
    Operating in the setting of Lemma \ref{lemma:concatenation_step_preserved} with the additional assumption that each block factor $U_j$ is a product of arbitrary single-qubit physical unitaries, suppose that the logical unitary $\overline{U}$ is genuinely in the $D^{\rm th}$ level. Then, for some $j\in R_A$, the logical unitary $u_j$ is genuinely in the $E^{\rm th}$ level for some $E\geq D$.
\end{lemma}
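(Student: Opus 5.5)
We argue by contraposition. Suppose that for every active coordinate $j\in R_A$ the child logical action $u_j=V_B^\dagger U_jV_B$ lies in $\mathcal C_{D-1}$. Note that each $u_j$ is well defined as an element of $\mathrm U(2)$, because Lemma~\ref{lemma:concatenation_step_preserved} shows that $U_j$ preserves the child code on active coordinates.

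The aim is then to show that $\overline U\in\mathcal C_{D-1}$. This contradicts the hypothesis that $\overline U$ is genuinely level $D$, so some $u_j$ must lie outside $\mathcal C_{D-1}$. Every single-qubit unitary lies in some level of the hierarchy or in none. If $u_j\notin\mathcal C_{D-1}$, it is either genuinely level $E$ for a unique $E\ge D$, or it lies outside the hierarchy altogether. The second possibility has to be handled as well.

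The main tool is Lemma~\ref{lemma:concatenation_step_preserved}. It writes $\overline U=e^{i\theta}(V_A^{(R)})^\dagger\bigl(\bigotimes_{j\in R_A}u_j\bigr)V_A^{(R)}$. So $\overline U$ is, up to phase, exactly the logical action on the core code $\mathcal A_R$ of the transversal gate $G=\bigotimes_{j\in R_A}u_j$. Here $\mathcal A_R$ is a stabilizer code of the same distance $d_A\ge3$, and $G$ preserves it because $\overline U$ is unitary and $U$ preserves $\tilde\Pi$.

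If every $u_j\in\mathcal C_{D-1}$, Lemma~\ref{lemma:physical_to_logical_nc} applied to $\mathcal A_R$ gives $\overline G\in\mathcal C_{D-1}$. Since global phases do not affect hierarchy membership, this yields $\overline U\in\mathcal C_{D-1}$, which is the desired contradiction. Hence some $u_j\notin\mathcal C_{D-1}$.

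The hardest step is excluding $u_j\notin\bigcup_E\mathcal C_E$, i.e.\ showing that each $u_j$ lies in some finite level. For this we use the additional assumption that $U_j$ is a product of single-qubit physical unitaries on the child block. Then $U_j$ is a transversal unitary preserving the child stabilizer code $\mathcal B$, which has distance $d_B\ge3$. By Lemma~\ref{lem:one-qubit-hierarchy-normal-form}, $U_j\Pi_B=e^{i\varphi}AD_q^{(E')}B\Pi_B$ with local Cliffords $A,B$ and a dyadic diagonal core. That normal form is a transversal product of gates in $\mathcal C_{E'}$, since $T_{E'}\in\mathcal C_{E'}$ and Cliffords lie in $\mathcal C_2\subseteq\mathcal C_{E'}$ for $E'\ge2$.

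Restricted to the codespace, its logical action coincides with $u_j$ up to phase. We then apply the observation at the end of the proof of Proposition~\ref{prop:diagonal-descent}: a transversal level-$E'$ gate mapping one stabilizer code to another induces a level-$E'$ logical map. This holds equally with $AD_qB$ mapping $\mathcal B$ to itself, composing through $B\mathcal B$ and $A^\dagger\mathcal B$. It gives $u_j\in\mathcal C_{E'}$.

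Combining the two parts, $u_j\in\mathcal C_{E'}\setminus\mathcal C_{D-1}$. Taking $E$ to be the least level containing $u_j$ gives $E\ge D$ and $u_j\in\mathcal C_E\setminus\mathcal C_{E-1}$, as claimed.
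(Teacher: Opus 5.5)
Your proposal is correct and follows the paper's proof essentially step for step. You first obtain a contradiction via Lemma~\ref{lemma:concatenation_step_preserved} and Lemma~\ref{lemma:physical_to_logical_nc} on the core code, and then show that the level of $u_j$ is finite using the dyadic normal form of Lemma~\ref{lem:one-qubit-hierarchy-normal-form}. The only cosmetic difference is that in the second step you invoke the two-code observation from the end of Proposition~\ref{prop:diagonal-descent} (with isometries $BV_B$ and $A^\dagger V_B$) where the paper cites Lemma~\ref{lemma:physical_to_logical_nc}. If you instead treat $AD_q^{(E')}B$ as a single transversal gate, each factor $A_jT_{E'}^{q_j}B_j$ lies in $\mathcal C_{E'}$ because each level is closed under Clifford multiplication. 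Your stated reason, that every factor belongs to $\mathcal C_{E'}$, is not enough on its own.
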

\begin{proof}
    Suppose for contradiction that $u_j\in\mathcal{C}_{D-1}$ for every active coordinate $j$. Then $\bigotimes_{j\in R_A}u_j\in\mathcal{C}_{D-1}$. By Lemma \ref{lemma:concatenation_step_preserved} and Lemma \ref{lemma:physical_to_logical_nc}, this implies $\overline{U}\in\mathcal{C}_{D-1}$, a contradiction. Hence, for some $j\in R_A$, one has $u_j\notin\mathcal{C}_{D-1}$. Moreover, $u_j$ is itself induced by a transversal product of single-qubit unitaries preserving the child stabilizer code. By Lemma \ref{lem:one-qubit-hierarchy-normal-form} and Lemma \ref{lemma:physical_to_logical_nc}, $u_j$ belongs to some finite level of the Clifford hierarchy. Therefore, it is genuinely level $E$ for some $E\geq D$.
\end{proof}
Now we can complete the proof.
\begin{proof}[Proof of Theorem \ref{thm:concatenated_no_go}] 
    We apply Lemma \ref{lemma:concatenated_descent} at every level of concatenation. We have that $U = \otimes_{j=1}^n U_j$ acts as a genuine $D^{\rm th}$ level logical gate $\overline{U}$ on $\mathcal{C} = \mathcal{C}_1 \circ (\mathcal{C}_2\circ...\circ \mathcal{C}_r)$. Then, upon removing any frozen coordinates of $\mathcal{C}_1$, we are left with an $[[m_1, k_1, d_1]]$ code, with $m_1 \leq n_1$, such that the restriction of $U$ to the qubit blocks implements a transversal action $\bigotimes_{j=1}^{m_1}u_j$ on $(\mathcal{C}_1)_R$, with each $u_j$ acting logically on $\mathcal{C}_2\circ\cdots\circ\mathcal{C}_r$. Moreover Lemma \ref{lemma:concatenated_descent} implies that at least one $u_j$ must act as a genuine logical $D^{\rm th}$ level gate or higher in the Clifford hierarchy. Let $E_2$ denote the lowest level of the Clifford hierarchy in which this $u_j$ acts logically, and set $E_1 = D$. Repeating this argument at each level, we find a nondecreasing sequence $D = E_1 \leq E_2 \leq \cdots \leq E_r$    such that at the $j^{\rm th}$ level of concatenation, the logical active core of code $\mathcal{C}_j$, $(\mathcal{C}_j)_R$, is a stabilizer code admitting a genuine $E_j$-level gate. By Theorem \ref{thm:clifford-hierarchy-wb}, every such code must have $\lambda_S \geq 2^{E_j}$, which forces $n_j \geq 2^{E_j}$ since the number of physical qubits must upper bound the check weight. Hence, the total number of qubits in the concatenated code is $n = \prod_j n_j \geq \prod_j 2^{E_j} \geq 2^{Dr}$. Rearranging gives $r\leq \lfloor \log_2 n/ D\rfloor$ as claimed.
\end{proof}

\section{Proofs of Theorem \ref{thm:discrete-eastin-knill} and syndrome irreducibility} \label{app:thms_2_3}
Here we prove Theorems \ref{thm:discrete-eastin-knill} and \ref{thm:lowweight-syndr-incompleteness}. We begin with the following definition central to the proofs. 
\begin{definition}[Codespace signal susceptibility]
Let $G=\sum_{j=1}^n \sigma_j,$
where each $\sigma_j$ is a nonidentity one-qubit Pauli supported on
qubit $j$, and let $\Pi$ denote a projector onto a subspace of $\mathbb{C}^{2^n}$. Define the codespace signal susceptibility by
\begin{equation}
    \Xi_G := \inf_{c\in\mathbb R} \left\| (G-c I)\Pi \right\|_{\mathrm{op}}^2.
\end{equation}
\end{definition}
This object allows us to quantify how nontrivially a signal generated by $G$ acts on the codespace of a QECC. We will show that any sensing-compatible code must have sufficiently large codespace signal susceptibility, but codes with large $\Xi_G$ must have large, irreducible nonlocality that manifests in every generating set of their stabilizer group. The following Lemma is the core mathematical principle behind this idea.

\begin{lemma}\label{susceptibility}
    Let $\mathcal C$ be an $n$-qubit stabilizer code of distance
    $d\ge 3$ with projector $\Pi$ and stabilizer group $S$, and let $G=\sum_{j=1}^n \sigma_j$. If $\Xi_G>nL$, then there exist a set of qubits $J\subseteq[n]$ with $|J|>L$ and a stabilizer $P\in S\setminus\mathcal N_{\le L}(S)$ such that
    \begin{equation}
    [\sigma_j,Q]=0 \quad \forall j\in J,\ \forall Q\in\mathcal N_{\le L}(S), \qquad \{\sigma_j,P\}=0 \quad \forall j\in J.
    \end{equation}
\end{lemma}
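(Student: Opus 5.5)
The plan is to decompose $G\Pi$ by stabilizer syndrome and use the orthogonality of syndrome sectors to bound $\Xi_G$ by the size of the largest syndrome class. Once a class larger than $L$ is found, the two commutation properties should follow from distance $d\ge 3$ together with a support-counting argument.

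\emph{Step 1: syndrome classes.} For each qubit $j$, let $s_j\in\mathrm{Hom}(S,\mathbb Z_2)$ be the syndrome of $\sigma_j$, i.e.\ its commutation pattern with $S$. Let $J_s:=\{j:s_j=s\}$ and $G_s:=\sum_{j\in J_s}\sigma_j$. Then $\sigma_j\Pi$ maps the codespace into the syndrome-$s_j$ eigenspace of $S$, and these eigenspaces are mutually orthogonal.

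Consider first the trivial syndrome, $s_j=0$. Then $\sigma_j\in\mathcal N(S)$, and since $\mathrm{wt}(\sigma_j)=1<d$, $\sigma_j$ is not a nontrivial logical operator. Hence $\pm\sigma_j\in S$, and $\sigma_j$ acts as a scalar $\pm 1$ on the codespace. Choosing $c$ to be the sum of these scalars, $(G-c)\Pi=\sum_{s\neq 0}G_s\Pi$. Orthogonality of the ranges gives, for any codeword $v$,
\begin{equation}
\Bigl\|\sum_{s\ne0}G_s v\Bigr\|^2=\sum_{s\ne0}\|G_s v\|^2\le\sum_{s\ne 0}|J_s|^2\,\|v\|^2\le n\max_{s\ne0}|J_s|\,\|v\|^2 .
\end{equation}
Hence $\Xi_G\le n\max_{s\ne0}|J_s|$. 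The hypothesis $\Xi_G>nL$ then yields a nontrivial syndrome $s$ with $|J_s|>L$; take $J:=J_s$.

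\emph{Step 2: all $\sigma_j$, $j\in J$, commute with $\mathcal N_{\le L}(S)$.} For $j,k\in J$, the product $\sigma_j\sigma_k$ has trivial syndrome and weight at most $2<d$. It therefore lies in $\pm S$ and commutes with every element of $\mathcal N(S)$. So all $\sigma_j$ with $j\in J$ have the same commutation character on $\mathcal N(S)$, and in particular on $\mathcal N_{\le L}(S)$.

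Suppose this common character is nontrivial on $\mathcal N_{\le L}(S)$. Because that group is generated by normalizer elements of weight at most $L$, some generator $Q'$ with $\mathrm{wt}(Q')\le L$ anticommutes with every $\sigma_j$, $j\in J$. Anticommuting with a single-qubit operator on qubit $j$ forces $j\in\mathrm{supp}(Q')$, so $J\subseteq\mathrm{supp}(Q')$ and $|J|\le L$, a contradiction. Hence $[\sigma_j,Q]=0$ for all $j\in J$ and all $Q\in\mathcal N_{\le L}(S)$.

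\emph{Step 3: the stabilizer $P$.} Since $s\neq0$, there is some $P\in S$ with $s(P)=1$. Because every $j\in J$ has the same syndrome $s$, this $P$ anticommutes with every $\sigma_j$, $j\in J$. Step 2 shows that every element of $\mathcal N_{\le L}(S)$ commutes with these $\sigma_j$, so $P\notin\mathcal N_{\le L}(S)$.

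The step requiring the most care is Step 1. One must check that the cross terms between different syndrome classes genuinely vanish, and that the trivial-syndrome qubits contribute only a scalar (which uses $d\ge 2$). These two facts are what produce the clean bound $\Xi_G\le n\max_s|J_s|$. The remaining steps only use $d\ge3$, through the weight-two products $\sigma_j\sigma_k$, and the pigeonhole observation on supports.
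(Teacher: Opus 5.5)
Your proof is correct and follows essentially the same route as the paper. Under $d\ge 3$, your syndrome classes $J_s$ coincide with the paper's connected components of the graph whose edges are the weight-two stabilizers $\pm\sigma_i\sigma_j$: the trivial class $J_0$ is the paper's fixed component, and your Steps 2--3 reproduce the paper's support-counting argument and its choice of $P$. The only difference is that the paper computes $\Xi_G=\sum_{C\ \mathrm{active}}\bigl(\sum_{i\in C}\epsilon_i\bigr)^2$ exactly from Knill--Laflamme coefficients, whereas you only upper-bound it using orthogonality of syndrome sectors and the triangle inequality; this makes the vanishing of cross terms automatic and needs only $d\ge 2$ in Step 1.
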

\begin{proof}
    Since $d\ge 3$, the Knill--Laflamme conditions give scalars $a_i$ and $b_{ij}$ such that
    \begin{equation}
        \Pi\sigma_i\Pi=a_i\Pi, \qquad \Pi\sigma_i\sigma_j\Pi=b_{ij}\Pi,
    \end{equation}
    for every $i,j\in[n]$. Set $\Gamma_{ij}:=b_{ij}-a_i a_j$. For every $c\in\mathbb R$,
    \begin{align}
        \Pi(G-c I)^2\Pi
        &=\sum_{i,j=1}^n\Pi\sigma_i\sigma_j\Pi
        -2c\sum_{i=1}^n\Pi\sigma_i\Pi
        +c^2\Pi \nonumber\\
        &=\left(\sum_{i,j=1}^n b_{ij}
        -2c\sum_{i=1}^n a_i
        +c^2\right)\Pi \nonumber\\
        &=\left[\sum_{i,j=1}^n\Gamma_{ij}
        +\left(c-\sum_{i=1}^n a_i\right)^2\right]\Pi.
    \end{align}
    Since $G-c I$ is Hermitian,
    \begin{align}
        \|(G-c I)\Pi\|_{\mathrm{op}}^2
        &=\|\Pi(G-c I)^2\Pi\|_{\mathrm{op}} =\sum_{i,j=1}^n\Gamma_{ij}
        +\left(c-\sum_{i=1}^n a_i\right)^2.
    \end{align}
    Taking the infimum over $c\in\mathbb R$ therefore gives $\Xi_G = \sum_{i,j=1}^n\Gamma_{ij}$.

    Construct a graph on the physical qubits by joining distinct vertices $i$ and $j$ whenever $\pm\sigma_i\sigma_j\in S$. We call a connected component \textit{fixed} if it contains a vertex $r$ for which $\pm\sigma_r\in S$, and \textit{active} otherwise. We first show that fixed components contribute nothing to $\Xi_G$. Let $C$ be a fixed component and choose $r\in C$ such that $\pm\sigma_r\in S$. By multiplying the edge stabilizers along a path from $r$ to any $i\in C$, we have that $\pm\sigma_i\sigma_r\in S$, and hence $\pm\sigma_i\in S$. Thus, for every $i\in C$, there is a sign $\epsilon_i\in\{\pm1\}$ such that $\sigma_i\Pi=\epsilon_i\Pi,$
    so $a_i=\epsilon_i$. It follows that, for every $j\in[n]$, 
    \begin{equation}
\Pi\sigma_i\sigma_j\Pi=\epsilon_i\Pi\sigma_j\Pi=\epsilon_i a_j\Pi=a_i a_j\Pi.
    \end{equation}
    Therefore $b_{ij}=a_i a_j$ and $\Gamma_{ij}=0$ whenever $i$ belongs to a fixed component. Hence fixed components contribute nothing to $\Xi_G$. We next control the contribution of the active components. Let $C$ be an active component and choose a reference vertex $r\in C$. By the construction of the graph, for every $i\in C$, there is a sign $\epsilon_i\in\{\pm1\}$ such that
    $\sigma_i\Pi=\epsilon_i\sigma_r\Pi.$
    Since $C$ is active, $\pm\sigma_i\notin S$ for every $i\in C$. Moreover, since $\mathrm{wt}(\sigma_i)=1<d$, $\sigma_i$ cannot be a nontrivial logical Pauli. Thus $\sigma_i\notin\mathcal N(S)$, and hence $a_i=0$
    for every $i\in C$. Consequently, for every $i,j\in C$, 
    \begin{equation}
\Pi\sigma_i\sigma_j\Pi=\epsilon_i\epsilon_j\Pi,
    \end{equation}
    and therefore $\Gamma_{ij}=\epsilon_i\epsilon_j.$
    Moreover, if $i$ and $j$ belong to distinct active components, then $\Gamma_{ij}=0$. Indeed, since $a_i=a_j=0$, the condition $\Gamma_{ij}\neq0$ would imply  $\Pi\sigma_i\sigma_j\Pi\neq0.$
    Because $\mathrm{wt}(\sigma_i\sigma_j)\le2<d$, this is possible only if $\pm\sigma_i\sigma_j\in S$, contradiction. It follows that
    \begin{equation}
    \label{eq:bound-xi-C}
    \Xi_G=\sum_{C\ {\rm active}}\left(\sum_{i\in C}\epsilon_i\right)^2
    \le\sum_{C\ {\rm active}}|C|^2.
    \end{equation}
    If every active component had size at most $L$, then
    \begin{equation}
        \Xi_G\le\sum_{C\ {\rm active}}L|C|\le nL,
    \end{equation}
    contradicting the hypothesis. Hence there exists an active component $C$ satisfying $|C|>L$. Fix an operator $Q\in\mathcal N(S)$ with weight at most $L$. Since $Q$ commutes with every edge stabilizer $\pm\sigma_i\sigma_j$, it anticommutes with $\sigma_i$ if and only if it anticommutes with $\sigma_j$. Thus, if $Q$ anticommutes with $\sigma_i$ for some $i\in C$, it anticommutes with $\sigma_j$ for every $j\in C$. This would mean
    $\mathrm{wt}(Q)\ge|C|>L,$
    a contradiction. Therefore $[\sigma_j,Q]=0$ for every $j\in C$ and every normalizer Pauli $Q$ of weight at most $L$. Hence every $\sigma_j$, $j\in C$, commutes with every element of $\mathcal N_{\le L}(S)$. Choose a reference vertex $r\in C$. Since $C$ is active, $\pm\sigma_r\notin S$. Since $\mathrm{wt}(\sigma_r)=1<d$, it cannot be a nontrivial logical Pauli, so $\sigma_r\notin\mathcal N(S)$. Therefore there exists a stabilizer $P\in S$ such that $\{\sigma_r,P\}=0.$
    For every $j\in C$, there is a sign $\epsilon_j\in\{\pm1\}$ such that $\epsilon_j\sigma_j\sigma_r\in S$. Since $P$ commutes with every stabilizer, it commutes with $\sigma_j\sigma_r$. It follows that $\{\sigma_j,P\}=0$
    for every $j\in C$. Since every element of $\mathcal N_{\le L}(S)$ commutes with every $\sigma_j$, then
    $P\notin\mathcal N_{\le L}(S).$ Taking $J=C$ proves the lemma.
\end{proof}

Through Lemma \ref{susceptibility}, we see that the familiar appearance of coherent signal accumulation due to entanglement in metrological probe states appears as an algebraic constraint on QECCs which can support sensing: the necessity of irreducible multipartite entanglement appears as essential high-weight members of the stabilizer group. Theorems \ref{thm:discrete-eastin-knill} and \ref{thm:lowweight-syndr-incompleteness} emerge essentially as corollaries of this result.

\begin{proof}[Proof of Theorem \ref{thm:discrete-eastin-knill}]
   For every $c\in\mathbb R$, Duhamel's formula gives
\begin{equation}
    s \le \left\|\overline U_\theta-e^{-i\theta c}I_L \right\|_{\mathrm{op}} = \left\|\left( U_\theta-e^{-i\theta c}I \right)\Pi \right\|_{\mathrm{op}}\le|\theta|\left\|(G-c)\Pi \right\|_{\mathrm{op}}.
\end{equation}
Taking the infimum over $c$ gives
\begin{equation}
\label{eq:lower-bound-xi}
\Xi_G \ge \frac{s^2}{|\theta|^2}\,.
\end{equation}
Set $L=\lambda_S^{\mathcal N}$. By definition,
$S\subseteq\mathcal N_{\le L}(S).$ The conclusion of Lemma~\ref{susceptibility} is therefore impossible, since it would produce a stabilizer $P\in S\setminus\mathcal N_{\le L}(S)$. By contrapositive, $\Xi_G\le nL=n\lambda_S^{\mathcal N}$. Combining this with \eqref{eq:lower-bound-xi} gives the bound
\begin{equation}
    \lambda_S^{\mathcal N} \ge\frac{s^2}{n|\theta|^2}\,.
\end{equation}
Since $\lambda_S^{\mathcal N}\le\lambda_S$, the same lower bound holds
for $\lambda_S$.
\end{proof}

\begin{proof}[Proof of Theorem \ref{thm:lowweight-syndr-incompleteness}] By \eqref{eq:lower-bound-xi}, $\Xi_G\ge\frac{s^2}{|\theta|^2}$. Hence, whenever $L<\frac{s^2}{n|\theta|^2}$, one has $\Xi_G>nL$. Lemma~\ref{susceptibility} then gives a set of qubits $J\subseteq[n]$ with $|J|>L$ and a stabilizer $P\in S\setminus\mathcal N_{\le L}(S)$ such that \begin{equation}
    [\sigma_j,Q]=0 \quad \forall j\in J,\ \forall Q\in\mathcal N_{\le L}(S), \qquad \{\sigma_j,P\}=0 \quad \forall j\in J.
\end{equation}
Because $d\ge3$, each weight-one error $\sigma_j$ is correctable. Moreover, since $P$ anticommutes with $\sigma_j$ for every $j\in J$, it acts nontrivially on every qubit in $J$. Therefore, \begin{equation}
    \mathrm{wt}(P)\ge |J|>L.
\end{equation} 
Finally, if $|\theta|\le Cn^{-\alpha}$, then \begin{equation}
    \frac{s^2}{n|\theta|^2}\ge\frac{s^2}{C^2}n^{2\alpha-1}.
\end{equation} Hence every $L=o(n^{2\alpha-1})$ satisfies $nL<\frac{s^2}{|\theta|^2}$ for sufficiently large $n$. Moreover, every generating set of $S$ must contain a stabilizer of weight greater than $L$. Indeed, otherwise every generator would belong to $\mathcal N_{\le L}(S)$, implying $S\subseteq\mathcal N_{\le L}(S)$, but $P\in S\setminus\mathcal N_{\le L}(S)$. \end{proof}

\section{Proofs of no-go theorem for beyond-SQL sensing and specialization to transversal sensing}\label{app:no-go-for-sensing}

Here we prove Theorem \ref{thm:general-tranverse-sensing-no-go_main} by establishing the stronger Theorem \ref{thm:general-tranverse-sensing-no-go}. 

\begin{center}
    \textit{Model for general sensing}
\end{center}
First, we define a general sensing protocol consisting of $m$ rounds of signal interrogation interleaved with quantum processing. Transversal sensing, as presented in the main text, would be an $m=1$ protocol because the signal interrogation is followed immediately by error correction and readout; more general protocols, however, may utilize additional quantum processing. 

In round $r$, we say that the $n$ sensing qubits evolve for time $t_r$ under 
\begin{equation}
    \mathcal{U}_{\omega t_r}(\rho)=e^{-i\omega t_r G}\rho e^{i \omega t_r G}, \quad G=\sum_{j=1}^n \sigma_j,
\end{equation}
 where each $\sigma_j$ is a nonidentity one-qubit Pauli supported on qubit $j$. Next we precisely define a noise model in which each qubit has an $n$-independent rate of signal-aligned error. For this, we consider the evolution of the sensor during interrogation to be defined by the master equation
 \begin{equation}
     \frac{d\rho}{dt} = -i\omega[G, \rho] + \gamma \sum_{j=1}^n (\sigma_j \rho\sigma_j - \rho) \ .
 \end{equation}
 Here, $\gamma$ is a positive constant controlling the instantaneous noise strength. In round $r$ the sensor is exposed for time $t_r$; the dissipative part of the above master equation then induces the single-qubit quantum channel 
 \begin{equation}
     \mathcal{N}_{\gamma, t_r, j}^{\mathrm{deph}}(\rho) = (1-p_{\gamma, r})\rho +  p_{\gamma,r}\sigma_j \rho \sigma_j\,, \qquad p_{\gamma, r} = \frac{1-e^{-2\gamma t_r}}{2}
 \end{equation}
 on every qubit $j=1...n$, yielding the composite dephasing channel $     \mathcal{N}_{\gamma, t_r}^{\mathrm{deph}}(\rho) = \bigotimes_{j=1}^n  \mathcal N_{\gamma, t_r, j}^{\mathrm{deph}}$. It then follows that the sensor experiences the quantum channel $\mathcal{S}_\omega^{(r)} = \mathcal{N}_{\gamma, t_r}^{\mathrm{deph}} \circ \mathcal U_{\omega t_r}$ in round $r$. In practice, the noise is often simple $Z$-axis dephasing, or more generally noise that acts on multiple Pauli axes; here we isolate the weakest noise model which is aligned with the signal.

 Between sensing rounds, the protocol may apply arbitrary parameter independent CPTP maps $\Phi_0, ..., \Phi_m$ to the sensing qubits, the memory, fresh ancillas, and any classical measurement registers. However, as discussed in \cite{CotlerGongKannan2026NoisyQuantumLearning}, it is important that the error associated with increased rounds of quantum processing be accounted for independently from noise during the sensing period. This is both operationally and theoretically important. In practice, quantum control cannot be executed arbitrarily fast. Moreover, if all processing were assumed error-free, the sensing model would not produce physically realistic optimality constraints because it would be strictly advantageous to take $m\rightarrow \infty$ to artificially mitigate signal dephasing errors. As such, we introduce a separate layer of errors associated with each interface between signal interrogation and quantum processing. For this, we consider an optimistic erasure channel. We say that $\tau$, an $n$-independent constant, is the minimum timescale at which quantum control can be executed, and denote the strength of the interface noise (a quantity determined by hardware) by another constant $\eta$. Then, after each layer of dephasing noise, every qubit in the sensor register is erased with probability $\geq 1-e^{-\eta \tau}$ and left intact with probability $e^{-\eta \tau}$. Defining the erasure probability $p_{\mathrm{e}} = 1-e^{-\eta \tau}$, this leaves us with the single-qubit erasure channel 
 \begin{equation}
 \mathcal{N}_{p_{\mathrm{e}}, j}^{\mathrm{e}}(\rho) = (1-p_{\mathrm{e}})\rho + p_{\mathrm{e}} \Tr(\rho) \ketbra{e}{e} \,,     
 \end{equation}
 acting identically on each $j= 1...n$ and the composite $n$-qubit channel $\mathcal N_{p_{\mathrm{e}}}^{\mathrm{e}} = \bigotimes_{j=1}^n \mathcal{N}_{p_{\mathrm{e}},j}^{\mathrm{e}}$. 

As such, the final state of the probe for the most general sensing protocol is
\begin{equation}
    \rho_\omega^{\mathrm{out}}=\Phi_m \circ \Lambda_\omega^{(m)}\circ \Phi_{m-1}\circ \cdots \circ \Phi_1 \circ \Lambda_\omega^{(1)}\circ \Phi_0(\rho_{\mathrm{in}}),
\end{equation}
where 
$\Lambda_{\omega}^{(r)}:=\mathcal{N}_{p_{\mathrm{e}}}^{\mathrm{e}}\circ\mathcal{N}_{\gamma, t_r}^{\mathrm{deph}}\circ \mathcal{U}_{\omega t_r}$ acts on the sensing qubits and trivially on the memory. We further remark that our introduction of the interface erasure channel is purely to obtain a maximally expressive lower bound on a general, realistic metrological protocol. The asymptotic degradation of beyond-SQL protocols still holds if this erasure channel is removed and all quantum processing is assumed noiseless.

\begin{center}
    \textit{Proof of no-go theorem}
\end{center}

With the model understood, we record the two metrological bounds that will be used in the proof.

\begin{lemma}[Dynamical process distinguishability]
\label{lemma:dynamical_distinguish}
Let $\rho(\omega)$ be a piecewise continuously differentiable family of quantum states. Let $L(\omega)$ be the symmetric logarithmic derivative defined implicitly by $\frac{d\rho(\omega)}{d\omega}=\frac{1}{2}\big(\rho(\omega)L(\omega)+L(\omega)\rho(\omega)\big)$, and let $F_Q(\omega)=\Tr\!\big(\rho(\omega)L(\omega)^2\big)$ be the quantum Fisher information. Then,
\begin{equation}
d_B\!\left(\rho(\omega_0),\rho(\omega_1)\right)\leq\frac{1}{2}\int_{\min\{\omega_0,\omega_1\}}^{\max\{\omega_0,\omega_1\}}\sqrt{F_Q(\omega)}\,d\omega.
\end{equation}
This result was proved in Ref.~\cite[Eq.~(4)]{TaddeiEtAl2013}.
\end{lemma}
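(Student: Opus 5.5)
We prove the Bures-angle bound via the Fubini--Study / Bures metric argument: we need $d_B \le \frac12\int \sqrt{F_Q}\,d\omega$. The plan is to use the standard fact that the Bures angle is the geodesic distance of the Bures metric, whose line element is $ds^2 = \frac14 F_Q(\omega)\,d\omega^2$, and then bound the geodesic distance by the length of the particular path $\omega\mapsto\rho(\omega)$.

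\emph{Step 1 (infinitesimal expansion).} For a $C^1$ family, expand the fidelity $f(\omega,\omega+\delta)=\|\sqrt{\rho(\omega)}\sqrt{\rho(\omega+\delta)}\|_1 = 1-\frac{1}{8}F_Q(\omega)\delta^2+o(\delta^2)$. This gives $d_B(\rho(\omega),\rho(\omega+\delta)) = \frac12\sqrt{F_Q(\omega)}\,|\delta| + o(\delta)$. To avoid rank-change subtleties of the SLD (the paper only needs the stated inequality), I would instead prove the one-sided bound $\limsup_{\delta\to0} d_B(\rho(\omega),\rho(\omega+\delta))/|\delta|\le \frac12\sqrt{F_Q(\omega)}$. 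This can be done via purifications: by Uhlmann, $\cos d_B=\max|\langle\psi_\omega|\psi_{\omega+\delta}\rangle|$. Choose the purification $|\psi_\omega\rangle=\sqrt{\rho(\omega)}\otimes I|\Omega\rangle$ evolved by $\frac{d}{d\omega}|\psi\rangle=\frac12 (L\otimes I)|\psi\rangle$, which is a valid purification path since it reproduces $\dot\rho=\frac12(L\rho+\rho L)$. Its velocity satisfies $\|\dot\psi\|^2=\frac14\mathrm{Tr}(\rho L^2)=\frac14F_Q$, and $\mathrm{Re}\langle\psi|\dot\psi\rangle=\frac12\mathrm{Tr}(\rho L)=0$. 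The Bures angle is then bounded by the Fubini--Study angle between the purifications, which is at most the Euclidean path length on the unit sphere.

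\emph{Step 2 (triangle inequality and integration).} Since $d_B$ is a metric (arccos of fidelity satisfies the triangle inequality), partition $[\omega_0,\omega_1]$ finely, sum the local bounds, and pass to the limit. This gives $d_B\le\int\frac12\sqrt{F_Q}$. Alternatively, integrate the purification path directly: $d_B(\rho_0,\rho_1)\le \arccos|\langle\psi_0|\psi_1\rangle|\le \int\|\dot\psi\|\,d\omega$, since the spherical distance is bounded by the arc length of any connecting curve. Piecewise differentiability is handled by splitting at the finitely many breakpoints.

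The main obstacle is making the purification ODE rigorous when $\rho(\omega)$ changes rank or $L$ is unbounded or ill-defined off the support. I would handle this by using the SLD only on the support, or by regularizing with $\rho_\varepsilon=(1-\varepsilon)\rho+\varepsilon I/d$ (whose QFI is at most that of $\rho$ by monotonicity under the mixing channel) and letting $\varepsilon\to0$ using continuity of $d_B$. Since the statement is cited from Ref.~\cite{TaddeiEtAl2013}, it may also simply be invoked.
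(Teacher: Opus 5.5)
The paper gives no proof of this lemma; it only cites Taddei et al. Their Eq.~(4) rests on two geometric facts: the Bures angle is the geodesic distance of the Bures metric, and that metric equals $\tfrac14F_Q\,d\omega^2$. Your regularized purification argument is a correct, self-contained replacement for the citation, provided you make it the main line of the proof. The steps are as follows.
\begin{enumerate}
\item For $\rho_\varepsilon=(1-\varepsilon)\rho+\varepsilon I/d\ge(\varepsilon/d)I$, the SLD $L_\varepsilon$ is unique and piecewise continuous.
\item Hence the linear ODE $\frac{d}{d\omega}|\psi\rangle=\tfrac12(L_\varepsilon\otimes I)|\psi\rangle$ (an Uhlmann horizontal lift) and the linear ODE obeyed by its reduced state both have unique solutions. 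This forces the reduced state to equal $\rho_\varepsilon(\omega)$.
\item Uhlmann's theorem plus the spherical arc-length bound give $d_B(\rho_\varepsilon(\omega_0),\rho_\varepsilon(\omega_1))\le\tfrac12\int\sqrt{F_Q(\rho_\varepsilon)}\,d\omega$.
\item Monotonicity of the SLD QFI under the fixed depolarizing map gives $F_Q(\rho_\varepsilon(\omega))\le F_Q(\rho(\omega))$ pointwise. This holds even where $\rho$ is rank-deficient, e.g.\ via $F_Q=\sup_X(\mathrm{Tr}\,\dot\rho X)^2/\mathrm{Tr}(\rho X^2)$ and the Kadison--Schwarz inequality for the unital adjoint map.
\item Continuity of the fidelity handles $\varepsilon\to0$.
\end{enumerate}
Compared with the citation, your route gives an explicit proof that stays valid for rank-changing families. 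That is exactly where the identification of the Bures metric with $\tfrac14F_Q$ breaks down. The citation buys brevity, and in the paper's applications $F_Q$ is bounded by an $\omega$-independent constant anyway.

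You should, however, drop Step~1 and the partition variant of Step~2. The one-sided pointwise bound $\limsup_{\delta\to0}d_B(\rho(\omega),\rho(\omega+\delta))/|\delta|\le\tfrac12\sqrt{F_Q(\omega)}$ does not sidestep the rank-change problem; it is precisely what fails there. Take $\rho(\omega)=\cos^2\omega\,|0\rangle\langle0|+\sin^2\omega\,|1\rangle\langle1|$. Then $\dot\rho(0)=0$, so every SLD at $\omega=0$ has $L_{00}=L_{01}=0$ and $F_Q(0)=0$. Yet $d_B(\rho(0),\rho(\delta))=|\delta|$. The integrated bound still holds, with equality, because $F_Q=4$ for $\omega\neq0$. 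Equivalently, every differentiable purification path of $\rho$ has speed at least $1$ at $\omega=0$, while the unregularized ODE would force speed $\tfrac12\sqrt{F_Q(0)}=0$. Writing $|\psi\rangle=|0\rangle|u\rangle+|1\rangle|v\rangle$ for $\omega>0$, the ODE reads $\dot v=\cot\omega\,v$ with $v(0)=0$, whose solutions $\sin\omega\,|c\rangle$ are nonunique. So the regularization is not optional bookkeeping: it is what makes the reduced-state identification, and hence the whole argument, valid.
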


\begin{lemma}[Adaptive channel-extension QFI bound]
\label{lemma:channel_extension_qfi}
Let $\Lambda_\omega^{(1)},\ldots,\Lambda_\omega^{(N)}$ be differentiable quantum channels used once each in a protocol with arbitrary parameter-independent CPTP maps between channel uses. For each $\ell$, choose a Kraus representation $\{K_{a,\omega}^{(\ell)}\}_a$ and define $\alpha_\ell:=\sum_a\dot K_{a,\omega}^{(\ell)\dagger}\dot K_{a,\omega}^{(\ell)}$ and $\beta_\ell:=\sum_a\dot K_{a,\omega}^{(\ell)\dagger}K_{a,\omega}^{(\ell)}$. If $\beta_\ell=0$ for every $\ell$, then the final output state satisfies
\begin{equation}
F_Q\!\left(\rho_\omega^{\mathrm{out}}\right)\leq4\sum_{\ell=1}^N\|\alpha_\ell\|_{\mathrm{op}}.
\end{equation}
\end{lemma}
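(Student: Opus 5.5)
The plan is to prove Lemma~\ref{lemma:channel_extension_qfi} by the standard purification argument (in the spirit of Escher et al.\ and Demkowicz-Dobrza\'nski--Maccone), applied to the full adaptive protocol.

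\textbf{Step 1: bound the QFI by a purified QFI.} Write the entire protocol as a sequence of isometries on system, memory, and an environment. Each parameter-independent map $\Phi_\ell$ is dilated by a fixed Stinespring isometry $W_\ell$. Each channel use is dilated by the isometry $\mathcal K_\omega^{(\ell)}:=\sum_a K^{(\ell)}_{a,\omega}\otimes|a\rangle_{E_\ell}$, with a fresh environment register $E_\ell$ for every use. The global pure output is then
\begin{equation}
|\Psi_\omega\rangle = W_N\mathcal K^{(N)}_\omega \cdots W_1\mathcal K^{(1)}_\omega W_0|\psi_{\rm in}\rangle .
\end{equation}
Tracing out a register cannot increase the QFI, and for pure states $F_Q=4(\langle\dot\Psi|\dot\Psi\rangle-|\langle\Psi|\dot\Psi\rangle|^2)$. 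Hence
\begin{equation}
F_Q(\rho^{\rm out}_\omega)\le 4\langle\dot\Psi_\omega|\dot\Psi_\omega\rangle .
\end{equation}

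\textbf{Step 2: expand the derivative.} By the product rule, $|\dot\Psi\rangle=\sum_\ell|\Psi_\ell\rangle$, where $|\Psi_\ell\rangle$ is the chain with $\mathcal K^{(\ell)}_\omega$ replaced by $\dot{\mathcal K}^{(\ell)}_\omega$. The diagonal terms are easy. Everything after step $\ell$ is an isometry, so
\begin{equation}
\langle\Psi_\ell|\Psi_\ell\rangle=\langle\chi_\ell|\dot{\mathcal K}^{(\ell)\dagger}\dot{\mathcal K}^{(\ell)}|\chi_\ell\rangle=\langle\chi_\ell|\alpha_\ell|\chi_\ell\rangle\le\|\alpha_\ell\|_{\rm op},
\end{equation}
where $|\chi_\ell\rangle$ is the normalized state just before use $\ell$.

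\textbf{Step 3: kill the cross terms (the main obstacle).} For $\ell<\ell'$, the terms $\langle\Psi_{\ell'}|\Psi_\ell\rangle$ must vanish, or at least the total must be bounded by $\sum_\ell\|\alpha_\ell\|$ rather than by $(\sum_\ell\sqrt{\|\alpha_\ell\|})^2$. The idea is to exploit the fresh environment register $E_\ell$. After the derivative at step $\ell$, the two branches share the same isometries up to step $\ell'$. At step $\ell'$, one branch carries $\dot{\mathcal K}^{(\ell')\dagger}$ and the other carries $\mathcal K^{(\ell')}$. On the right, the $\ell$-branch contains $\dot{\mathcal K}^{(\ell)}$ and the $\ell'$-branch contains $\mathcal K^{(\ell)}$. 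Contracting the $E_\ell$ register inside the overlap leaves an operator sandwich of the form
\begin{equation}
\sum_a K^{(\ell)\dagger}_a (\cdots)\dot K^{(\ell)}_a .
\end{equation}
The middle operator $(\cdots)$ acts on later registers and contains the factor from step $\ell'$. By itself this does not vanish. The cleaner route is to order the contraction from the last differentiated channel backwards. Contracting $E_{\ell'}$ at step $\ell'$ gives
\begin{equation}
\sum_b\dot K^{(\ell')\dagger}_b K^{(\ell')}_b=\beta_{\ell'}^\dagger=0 .
\end{equation}
This is valid because the later isometries $W_{\ell'},\dots$ and the channels after $\ell'$ act identically in both branches and cancel as $V^\dagger V=I$. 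Since $E_{\ell'}$ is untouched between the two positions, the inner product reduces to an expectation containing $\beta_{\ell'}^\dagger$, which vanishes. So every cross term is zero.

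Combining the steps gives $F_Q\le 4\sum_\ell\langle\chi_\ell|\alpha_\ell|\chi_\ell\rangle\le4\sum_\ell\|\alpha_\ell\|_{\rm op}$. The one point I would check carefully is that in Step 3 the index structure really reduces to $\beta_{\ell'}$, with no leftover earlier-branch operator inserted between $\dot K^{(\ell')\dagger}$ and $K^{(\ell')}$. Both act on the same input state $|\chi_{\ell'}\rangle$-like vectors, so after tracing $E_{\ell'}$ the relevant operator is exactly $\sum_b\dot K_b^{\dagger}K_b$. If needed, I would make this rigorous by writing the overlap as $\langle\xi'|(\dot{\mathcal K}^{(\ell')\dagger}\mathcal K^{(\ell')})|\xi\rangle$ with $\mathcal K^{(\ell')}$ acting on the system tensored with the vacuum of $E_{\ell'}$.
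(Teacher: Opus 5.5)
Your proposal is correct and is essentially the paper's argument. The paper purifies the protocol with the same Stinespring isometries and runs the one-step recursion $\|\dot\Psi_\ell\|^2\le\|\dot\Psi_{\ell-1}\|^2+\|\alpha_\ell\|_{\mathrm{op}}$, whose cross term vanishes because $W_\ell^\dagger\dot W_\ell=\beta_\ell^\dagger=0$; this is exactly your pairwise cancellation of $\langle\Psi_{\ell'}|\Psi_\ell\rangle$, summed over $\ell<\ell'$. The point you flagged is fine: once the later isometries cancel, $\dot{\mathcal K}^{(\ell')\dagger}\mathcal K^{(\ell')}=\beta_{\ell'}$ is an operator identity, so no leftover operator can sit between the two factors (it is $\beta_{\ell'}$ itself rather than its adjoint, which makes no difference here).
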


\begin{proof}
Purify every parameter-independent control map and define the Stinespring isometry
\begin{equation}
W_{\ell,\omega}:=\sum_aK_{a,\omega}^{(\ell)}\otimes|a\rangle.
\end{equation}
If $V_{\ell-1}$ is the purified control preceding the $\ell$-th channel use, the purified protocol state satisfies
\begin{equation}
|\Psi_{\ell,\omega}\rangle=W_{\ell,\omega}V_{\ell-1}|\Psi_{\ell-1,\omega}\rangle.
\end{equation}
Moreover $\dot W_{\ell,\omega}^{\dagger}\dot W_{\ell,\omega}=\alpha_\ell$ and $\dot W_{\ell,\omega}^{\dagger}W_{\ell,\omega}=\beta_\ell$. Differentiating the recursion and using $W_{\ell,\omega}^{\dagger}W_{\ell,\omega}=I$ gives
\begin{align}
\|\dot\Psi_{\ell,\omega}\|^2
&=\|\dot\Psi_{\ell-1,\omega}\|^2+\langle\Psi_{\ell-1,\omega}|V_{\ell-1}^{\dagger}\alpha_\ell V_{\ell-1}|\Psi_{\ell-1,\omega}\rangle \nonumber\\
&\quad+2\operatorname{Re}\langle\dot\Psi_{\ell-1,\omega}|V_{\ell-1}^{\dagger}W_{\ell,\omega}^{\dagger}\dot W_{\ell,\omega}V_{\ell-1}|\Psi_{\ell-1,\omega}\rangle.
\end{align}
Since $\beta_\ell=0$, differentiating $W_{\ell,\omega}^{\dagger}W_{\ell,\omega}=I$ also gives $W_{\ell,\omega}^{\dagger}\dot W_{\ell,\omega}=0$. Therefore,
\begin{equation}
\|\dot\Psi_{\ell,\omega}\|^2\leq\|\dot\Psi_{\ell-1,\omega}\|^2+\|\alpha_\ell\|_{\mathrm{op}}.
\end{equation}
Iterating over all channel uses and using $F_Q(|\Psi_\omega\rangle)\leq4\|\dot\Psi_\omega\|^2$ proves the result after tracing out the purifying registers. This is the $\beta_\ell=0$ specialization of the adaptive channel-extension bound of Ref.~\cite[Eq.~(14) and Appendix~B]{KubicaDemkowicz2021}.
\end{proof}

\begin{theorem}[General sensing bound]
\label{thm:general-tranverse-sensing-no-go}
Consider the sensing protocol defined above, let $T:=\sum_{r=1}^mt_r$, and set $\epsilon:=|\omega_1-\omega_0|$. Then,
\begin{equation}
\label{eq:general-sensing-exact-bound}
d_B\!\left(\rho_{\omega_0}^{\mathrm{out}},\rho_{\omega_1}^{\mathrm{out}}\right)\leq\epsilon\min\left\{nT,\left(n\sum_{r=1}^m\frac{(1-p_{\mathrm{e}})t_r^2e^{-4\gamma t_r}}{1-(1-p_{\mathrm{e}})e^{-4\gamma t_r}}\right)^{1/2}\right\}.
\end{equation}
Consequently, a simplified but slightly looser bound is
\begin{equation}
\label{eq:general-sensing-simple-bound}
d_B\!\left(\rho_{\omega_0}^{\mathrm{out}},\rho_{\omega_1}^{\mathrm{out}}\right)\leq\epsilon\min\left\{nT,T\sqrt{\frac{n(1-p_{\mathrm{e}})}{p_{\mathrm{e}}}},\frac{1}{2}\sqrt{\frac{nT}{\gamma}}\right\}.
\end{equation}
\end{theorem}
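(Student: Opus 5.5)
The plan is to combine Lemma~\ref{lemma:dynamical_distinguish} with a uniform upper bound on the quantum Fisher information $F_Q(\rho_\omega^{\mathrm{out}})$ that holds for every $\omega$. The two QFI bounds below are independent of $\omega$, so integrating $\tfrac12\sqrt{F_Q}$ over an interval of length $\epsilon$ will directly give $d_B\le \tfrac{\epsilon}{2}\sqrt{F_{\max}}$. It therefore suffices to prove
\[
F_Q \le 4n^2T^2
\qquad\text{and}\qquad
F_Q\le 4n\sum_r \frac{(1-p_{\mathrm e})t_r^2e^{-4\gamma t_r}}{1-(1-p_{\mathrm e})e^{-4\gamma t_r}}.
\]
The family $\rho^{\mathrm{out}}_\omega$ is smooth in $\omega$, since every channel depends analytically on $\omega$, so Lemma~\ref{lemma:dynamical_distinguish} applies.

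\textbf{Heisenberg term.} For the $nT$ bound I would use the unitary Kraus representation of each round. The Kraus operators are $K_{a,\omega}=(\text{noise Kraus})\,e^{-i\omega t_r G}$, and the noise Kraus operators do not depend on $\omega$. Differentiating gives $\dot K_a=-it_rK_aG$. Then $\beta=\sum_a\dot K_a^\dagger K_a=it_rG$, which is not zero. I would remove this by the standard gauge shift: replace $G$ by $G-c$ in the derivative, which is a phase reparametrization of the Kraus operators. Equivalently, I would bound the purified derivative directly. In the purification, $\|\dot\Psi\|$ grows by at most $t_r\|G\|_{\mathrm{op}}=nt_r$ per round, because $\dot W_\ell=-it_rW_\ell G$ and therefore $\|\dot W_\ell\|_{\mathrm{op}}\le nt_r$. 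The triangle inequality then gives $\|\dot\Psi\|\le nT$ and hence $F_Q\le 4n^2T^2$.

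\textbf{Noise term.} This is the main obstacle: I need a per-qubit Kraus representation of $\Lambda^{(r)}_\omega$ with $\beta=0$ and small $\alpha$. Because every part of $\Lambda^{(r)}_\omega$ is a tensor product over qubits, I will construct single-qubit Kraus operators with $\beta_j=0$. The tensor-product Kraus family then satisfies $\beta=\sum_j\beta_j$-type cross terms, which vanish, and $\alpha=\sum_j\alpha_j\otimes I$, so $\|\alpha\|\le n\max_j\|\alpha_j\|$.

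For one qubit, rotate to $\sigma=Z$. The dephased rotation has Kraus operators $\sqrt{1-p}\,e^{-i\phi Z}$ and $\sqrt p\, Z e^{-i\phi Z}$, with $\phi=\omega t_r$. Erasure multiplies these by $\sqrt{1-p_{\mathrm e}}$ and adds the Kraus operators $\sqrt{p_{\mathrm e}}\,|e\rangle\langle i|$, which are $\omega$-independent. I would then apply an $\omega$-dependent unitary mixing $K_a\to\sum_b u_{ab}(\omega)K_b$ on the two dephasing Kraus operators, with $u=\exp(-i\omega h)$ for a $2\times2$ Hermitian $h$. The matrix $h$ is chosen so that $\beta=t_r\bigl[-i(1-p_{\mathrm e})Z\cdot(\ldots)\bigr]$ cancels; this is the usual HNLS-violation construction for dephasing. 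I would further mix in the erasure Kraus operators to reduce $\alpha$, optimizing the $2\times2$ (or $3\times 3$) mixing. The target value is $\|\alpha_j\|=t_r^2\,\frac{(1-p_{\mathrm e})e^{-4\gamma t_r}}{1-(1-p_{\mathrm e})e^{-4\gamma t_r}}$, where $e^{-2\gamma t_r}=1-2p$ is the dephasing contraction. I expect the main work to be this optimization of the mixing, equivalently an SDP over $h$ for a qubit channel that contracts coherences by $\eta=(1-p_{\mathrm e})^{1/2}$ times $e^{-2\gamma t_r}$ in amplitude. The Fujiwara–Imai/Kołodyński-type formula gives $\|\alpha\|=t_r^2\eta^2/(1-\eta^2)$, with $\eta^2=(1-p_{\mathrm e})e^{-4\gamma t_r}$, matching the stated expression. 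Lemma~\ref{lemma:channel_extension_qfi} then yields the second bound.

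\textbf{Simplified bound.} Both simplifications follow from elementary estimates. Dropping $e^{-4\gamma t_r}$ gives $1-(1-p_{\mathrm e})x\ge p_{\mathrm e}$ and $x\le1$, so the sum is at most $\frac{1-p_{\mathrm e}}{p_{\mathrm e}}\sum t_r^2\le \frac{1-p_{\mathrm e}}{p_{\mathrm e}}T^2$. Alternatively, using $1-(1-p_{\mathrm e})x\ge 1-x$ and $t^2e^{-4\gamma t}/(1-e^{-4\gamma t})\le t/(4\gamma)$ gives the bound $nT/(4\gamma)$, whose square root produces the factor $\tfrac12\sqrt{nT/\gamma}$.
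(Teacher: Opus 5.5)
Your architecture matches the paper's: uniform QFI bounds fed into Lemma~\ref{lemma:dynamical_distinguish}, the coherent bound from $\|\dot\Psi_\omega\|\le nT$ via the triangle inequality, a $\beta=0$ Kraus gauge per qubit and per round fed into Lemma~\ref{lemma:channel_extension_qfi}, and the same elementary simplifications. Your tensor-product additivity, with cross terms $\beta_j\otimes\beta_k^\dagger=0$, is equivalent to the paper's treatment of the $nm$ single-qubit uses as separate channel uses.

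The gap is in the one step that carries the content of the noise term. You never exhibit a single-qubit Kraus representation with $\beta=0$ and $\alpha=t_r^2\frac{(1-p_{\mathrm e})e^{-4\gamma t_r}}{1-(1-p_{\mathrm e})e^{-4\gamma t_r}}I$. The gauge you do describe mixes only the two dephasing Kraus operators. It cancels $\beta=it_r(1-p_{\mathrm e})Z$, e.g.\ with $h_{01}=h_{10}=-t_r/(2\sqrt{p(1-p)})$, $p=p_{\gamma,r}$. But the best such choice gives $\alpha=t_r^2\frac{(1-p_{\mathrm e})e^{-4\gamma t_r}}{1-e^{-4\gamma t_r}}I$. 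This is strictly larger than the target and diverges as $\gamma t_r\to0$, so it cannot produce the $T\sqrt{n(1-p_{\mathrm e})/p_{\mathrm e}}$ term. Your appeal to a known formula does not close the gap either. The composed channel contracts coherences by $(1-p_{\mathrm e})e^{-2\gamma t_r}$, not by $(1-p_{\mathrm e})^{1/2}e^{-2\gamma t_r}$. Moreover, the pure-dephasing expression $\eta^2/(1-\eta^2)$ does not carry over to a channel with an erasure branch. For pure erasure it would give $(1-p_{\mathrm e})^2/(1-(1-p_{\mathrm e})^2)$, whereas the target, and the known optimal constant for loss, is $(1-p_{\mathrm e})/p_{\mathrm e}$. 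Your $\eta$ is chosen to reproduce the answer rather than derived.

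The missing idea is that the erasure Kraus operators must be rephased with opposite signs, so that they share the job of cancelling $\beta$ with the dephasing mixing. Write $q=1-p_{\mathrm e}$, $x=e^{-4\gamma t_r}$, $c=-t_r/(1-qx)$, and order the Kraus operators as $\sqrt{q(1-p)}\,e^{-i\phi Z}$, $\sqrt{qp}\,Ze^{-i\phi Z}$, $\sqrt{p_{\mathrm e}}|e\rangle\langle0|$, $\sqrt{p_{\mathrm e}}|e\rangle\langle1|$. The paper's $h_r$ is then $h_{01}=h_{10}=2c\sqrt{p(1-p)}$ and $h_{22}=-h_{33}=cqx$. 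The gauge $\widetilde K=e^{-i\omega h}K$ shifts $\beta$ by
\[
i\sum_{bc}h_{bc}K_b^\dagger K_c=icq\bigl[(1-x)+p_{\mathrm e}x\bigr]Z=-it_rqZ,
\]
so $\widetilde\beta=0$. The same bookkeeping gives $\widetilde\alpha=\frac{t_r^2qx}{1-qx}I$.

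If you only needed the simplified bound, you could apply Lemma~\ref{lemma:channel_extension_qfi} twice and take the minimum: once with your dephasing-only gauge (giving the $\gamma$ term), and once with the erasure-only gauge $h_{22}=-h_{33}=-t_rq/p_{\mathrm e}$, which gives $\alpha=t_r^2q/p_{\mathrm e}\,I$. The exact bound as stated needs the combined gauge.
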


\begin{proof}
Choose one-qubit Clifford gates $W_j$ such that $W_j\sigma_jW_j^\dagger=Z_j$, and set $W=\bigotimes_{j=1}^nW_j$. Signal-aligned dephasing and flagged erasure are covariant under this local change of basis, so conjugating the protocol by $W$ reduces the signal generator to $G=\sum_{j=1}^nZ_j$ without changing the Bures distance.

Fix a sensing round $r$. For one sensing qubit, the combined signal, dephasing, and erasure channel has Kraus operators
\begin{equation}
K_{0,\omega}^{(r)}=\sqrt{e^{-\eta \tau}(1-p_{\gamma, r})}e^{-i\omega t_rZ}, \quad K_{1,\omega}^{(r)}=\sqrt{e^{-\eta \tau}p_{\gamma, r}}Ze^{-i\omega t_rZ}, \quad K_{2,\omega}^{(r)}=\sqrt{p_{\mathrm{e}}}|e\rangle\!\langle0|, \quad K_{3,\omega}^{(r)}=\sqrt{p_{\mathrm{e}}}|e\rangle\!\langle1|.
\end{equation}
The last two Kraus operators may be chosen independently of $\omega$ because the erased branch contains no information about the input state. Consider the equivalent Kraus representation
\begin{equation}
\widetilde K_{a,\omega}^{(r)}=\sum_{b=0}^3\left(e^{-i\omega h_r}\right)_{ab}K_{b,\omega}^{(r)},
\end{equation}
where
\begin{equation}
h_r=-\frac{t_r}{1-e^{-\eta \tau-4\gamma t_r}}
\begin{pmatrix}
0 & 2\sqrt{p_{\gamma, r}(1-p_{\gamma, r})} & 0 & 0\\
2\sqrt{p_{\gamma, r}(1-p_{\gamma, r})} & 0 & 0 & 0\\
0 & 0 & e^{-\eta \tau-4\gamma t_r} & 0\\
0 & 0 & 0 & -e^{-\eta \tau-4\gamma t_r}
\end{pmatrix}.
\end{equation}
Writing
\begin{equation}
\alpha_r:=\sum_a\dot{\widetilde K}_{a,\omega}^{(r)\dagger}\dot{\widetilde K}_{a,\omega}^{(r)}, \qquad \beta_r:=\sum_a\dot{\widetilde K}_{a,\omega}^{(r)\dagger}\widetilde K_{a,\omega}^{(r)},
\end{equation}
a direct calculation gives
\begin{equation}
\beta_r=0, \qquad \alpha_r=\frac{t_r^2e^{-\eta \tau-4\gamma t_r}}{1-e^{-\eta \tau-4\gamma t_r}}I=\frac{(1-p_{\mathrm{e}})t_r^2e^{-4\gamma t_r}}{1-(1-p_{\mathrm{e}})e^{-4\gamma t_r}}I.
\end{equation}

The full protocol contains $n$ copies of this one-qubit channel in every round. Applying Lemma~\ref{lemma:channel_extension_qfi} to the resulting $nm$ channel uses gives
\begin{equation}
\label{eq:general-sensing-qfi-bound}
F_Q\!\left(\rho_\omega^{\mathrm{out}}\right)\leq4n\sum_{r=1}^m\frac{(1-p_{\mathrm{e}})t_r^2e^{-4\gamma t_r}}{1-(1-p_{\mathrm{e}})e^{-4\gamma t_r}}.
\end{equation}
The right-hand side is independent of $\omega$, so Lemma~\ref{lemma:dynamical_distinguish} gives
\begin{equation}
\label{eq:general-sensing-bures-channel-extension}
d_B\!\left(\rho_{\omega_0}^{\mathrm{out}},\rho_{\omega_1}^{\mathrm{out}}\right)\leq\epsilon\left(n\sum_{r=1}^m\frac{(1-p_{\mathrm{e}})t_r^2e^{-4\gamma t_r}}{1-(1-p_{\mathrm{e}})e^{-4\gamma t_r}}\right)^{1/2}.
\end{equation}

There is also a noise-independent coherent bound. In a purification of the complete protocol, the derivative contribution from sensing round $r$ has norm at most $t_r\|G\|_{\mathrm{op}}=nt_r$. The triangle inequality therefore gives $\|\dot\Psi_\omega\|\leq nT$, and hence
\begin{equation}
F_Q\!\left(\rho_\omega^{\mathrm{out}}\right)\leq 4 n^2T^2.
\end{equation}
Applying Lemma~\ref{lemma:dynamical_distinguish} once more gives
\begin{equation}
d_B\!\left(\rho_{\omega_0}^{\mathrm{out}},\rho_{\omega_1}^{\mathrm{out}}\right)\leq \epsilon \, n\,T.
\end{equation}
Combining this with \eqref{eq:general-sensing-bures-channel-extension} proves \eqref{eq:general-sensing-exact-bound}.

It remains to derive the simpler bound. For every $t_r\geq0$,
\begin{equation}
\frac{(1-p_{\mathrm{e}})e^{-4\gamma t_r}}{1-(1-p_{\mathrm{e}})e^{-4\gamma t_r}}\leq\frac{1-p_{\mathrm{e}}}{p_{\mathrm{e}}}.
\end{equation}
Using $\sum_{r=1}^mt_r^2\leq T^2$ in \eqref{eq:general-sensing-bures-channel-extension} therefore gives
\begin{equation}
d_B\!\left(\rho_{\omega_0}^{\mathrm{out}},\rho_{\omega_1}^{\mathrm{out}}\right)\leq\epsilon \,T\sqrt{\frac{n(1-p_{\mathrm{e}})}{p_{\mathrm{e}}}}.
\end{equation}
Moreover,
\begin{equation}
\frac{(1-p_{\mathrm{e}})e^{-4\gamma t_r}}{1-(1-p_{\mathrm{e}})e^{-4\gamma t_r}}\leq\frac{e^{-4\gamma t_r}}{1-e^{-4\gamma t_r}}=\frac{1}{e^{4\gamma t_r}-1}\leq\frac{1}{4\gamma t_r}.
\end{equation}
Substituting this into \eqref{eq:general-sensing-bures-channel-extension} and using $\sum_{r=1}^mt_r=T$ gives
\begin{equation}
d_B\!\left(\rho_{\omega_0}^{\mathrm{out}},\rho_{\omega_1}^{\mathrm{out}}\right)\leq\frac{\epsilon}{2}\sqrt{\frac{nT}{\gamma}}.
\end{equation}
Combining the three bounds proves \eqref{eq:general-sensing-simple-bound}.
\end{proof}

The preceding argument also applies when the unknown parameter multiplies a known time-dependent signal, as in AC sensing. Suppose that during round $r$, over an interval $I_r$ of duration $t_r$, the Hamiltonian is
\begin{equation}
H_\vartheta(t)=\vartheta f_r(t)G,
\end{equation}
where $f_r(t)$ is known and satisfies $|f_r(t)|\leq1$. Most generally, the protocol may apply a known modulation $y_r(t)$ satisfying $|y_r(t)|\leq1$, and we define
\begin{equation}
a_r:=\int_{I_r}y_r(t)f_r(t)\,dt.
\end{equation}
Since the Hamiltonian is proportional to $G$ at all times and the dephasing is aligned with $G$, the channel in round $r$ is
\begin{equation}
\mathcal N_{p_{\mathrm{e}}}^{\mathrm{e}}\circ\mathcal N_{\gamma,t_r}^{\mathrm{deph}}\circ\mathcal U_{\vartheta a_r}.
\end{equation}
The proof of Theorem~\ref{thm:general-tranverse-sensing-no-go} therefore applies with $a_r$ replacing $t_r$ in each derivative of the signal unitary, while the dephasing strength continues to depend on the physical exposure time $t_r$. It gives
\begin{equation}
\label{eq:known-envelope-sensing-bound}
d_B\!\left(\rho_{\vartheta_0}^{\mathrm{out}},\rho_{\vartheta_1}^{\mathrm{out}}\right)\leq|\vartheta_1-\vartheta_0|\min\left\{n\sum_{r=1}^m|a_r|,\sqrt{\frac{n(1-p_{\mathrm{e}})}{p_{\mathrm{e}}}\sum_{r=1}^ma_r^2},\frac{1}{2}\sqrt{\frac{n}{\gamma}\sum_{r=1}^m\frac{a_r^2}{t_r}}\right\}.
\end{equation}
Since $|a_r|\leq t_r$, one has
\begin{equation}
\sum_{r=1}^m|a_r|\leq T,\qquad \sum_{r=1}^ma_r^2\leq T^2,\qquad \sum_{r=1}^m\frac{a_r^2}{t_r}\leq T.
\end{equation}
Thus the simplified bound of Theorem~\ref{thm:general-tranverse-sensing-no-go_main} holds unchanged for any known bounded signal envelope.

We can use these results to formally rule out asymptotically beyond-SQL DC and AC sensing. For DC sensing under $H_\omega=\omega G$, one has $f_r(t)=1$ and $a_r=t_r$. Two parameter values separated by $\Delta\omega=\Theta(\epsilon)$ therefore accumulate a relative single-qubit angle $|\theta|=\Theta(\epsilon T)$. For any fixed interface erasure probability $p_{\mathrm{e}}\in(0,1)$, taking
\begin{equation}
T=O\!\left(\frac{1}{\epsilon \,n^\alpha}\right)
\end{equation}
gives
\begin{equation}
d_B\!\left(\rho_{\omega_0}^{\mathrm{out}},\rho_{\omega_1}^{\mathrm{out}}\right)=O\!\left(n^{1/2-\alpha}\right).
\end{equation}
Constant-bias discrimination is therefore impossible for every $\alpha>1/2$. The same conclusion holds for AC amplitude sensing with
\begin{equation}
H_B(t)=B\sin(\omega t+\phi)G,
\end{equation}
where $\omega$ and $\phi$ are known and $B$ is unknown. Applying a $\pi$ pulse at each zero crossing reverses the sign of $G$, while leaving the signal-aligned dephasing dissipator unchanged. In the corresponding toggling frame, $y(t)=\operatorname{sgn}(\sin(\omega t+\phi))$, so two amplitudes separated by $\Delta B=\Theta(\epsilon)$ generate the relative single-qubit angle
\begin{equation}
|\theta|=|\Delta B|\int_0^T|\sin(\omega t+\phi)|\,dt.
\end{equation}
Moreover,
\begin{equation}
\int_0^T|\sin(\omega t+\phi)|\,dt=\frac{2}{\pi}T+O\!\left(\frac{1}{\omega}\right),
\end{equation}
and hence $|\theta|=\Theta(\epsilon T)$ once the interrogation contains a constant or larger number of signal periods. Equation~\eqref{eq:known-envelope-sensing-bound} then gives
\begin{equation}
d_B\!\left(\rho_{B_0}^{\mathrm{out}},\rho_{B_1}^{\mathrm{out}}\right)=O\!\left(n^{1/2-\alpha}\right)
\end{equation}
whenever $T=O(1/(\epsilon n^\alpha))$. Thus constant-bias discrimination is again impossible for every $\alpha>1/2$.

Finally we give the proof of Eq.~\eqref{eq:transversal-dephasing-bound}, which rules out any asymptotic transversal sensing strategy.
\begin{proof}[Proof of Eq.~\eqref{eq:transversal-dephasing-bound}]
Fix a recovery channel $\mathcal R$ and let $\mathcal{E}$ be the encoding channel. Set 
\begin{equation}
\mathcal I:=\mathcal R\circ\mathcal N_{\gamma,T}^{\mathrm{deph}}\circ\mathcal E,\qquad \delta_{\mathcal R}:=d_{\mathrm{ch}}(\mathcal I,\mathrm{id}_L).
\end{equation}
The assumption $\inf_{z\in\mathbb C}\|\overline U_\theta-zI_L\|_{\mathrm{op}}\geq s$ implies that $\overline U_\theta$ has two eigenvalues $\lambda_0,\lambda_1$ satisfying $|\lambda_0-\lambda_1|\geq s$. For corresponding normalized eigenvectors, define $|\psi\rangle := \tfrac{1}{\sqrt{2}}(|\lambda_0\rangle + |\lambda_1\rangle)$. Then we have
\begin{equation}
\left|\langle\psi|\overline U_\theta|\psi\rangle\right|=\frac{|\lambda_0+\lambda_1|}{2}=\sqrt{1-\frac{|\lambda_0-\lambda_1|^2}{4}}\leq\sqrt{1-\frac{s^2}{4}},
\end{equation}
and therefore
\begin{equation}
d_B\!\left(|\psi\rangle\!\langle\psi|,\overline U_\theta|\psi\rangle\!\langle\psi|\overline U_\theta^\dagger\right)\geq\arcsin\frac{s}{2}\geq\frac{s}{2}.
\label{eq:ideal-logical-separation}
\end{equation}

Define
\begin{equation}
\rho_0 := \mathcal I(|\psi\rangle\!\langle\psi|),\qquad \rho_\theta:=\mathcal R\circ\mathcal N_{\gamma,T}^{\mathrm{deph}}\circ\mathcal U_\theta\circ\mathcal E(|\psi\rangle\!\langle\psi|)\,.
\end{equation}
Since $\mathcal U_\theta\circ\mathcal E=\mathcal E\circ\overline{\mathcal U}_\theta$, one has $\rho_\theta=\mathcal I\!\left(\overline U_\theta|\psi\rangle\!\langle\psi|\overline U_\theta^\dagger\right)$. By the definition of $\delta_{\mathcal R}$,
\begin{equation}
d_B(|\psi\rangle\!\langle\psi|,\rho_0)\leq\delta_{\mathcal R}\,,\qquad d_B\!\left(\overline U_\theta|\psi\rangle\!\langle\psi|\overline U_\theta^\dagger,\rho_\theta\right)\leq\delta_{\mathcal R}\,.
\end{equation}

We now apply Theorem~\ref{thm:general-tranverse-sensing-no-go} to this one-round protocol with no interface erasure, choosing $\omega_0=0$ and $\omega_1=\theta/T$. The Markovian term in Eq.~\eqref{eq:general-sensing-simple-bound} gives
\begin{equation}
d_B(\rho_0,\rho_\theta)\leq\frac{|\theta|}{2}\sqrt{\frac{n}{\gamma T}}.
\end{equation}
Combining this with Eq.~\eqref{eq:ideal-logical-separation} and the triangle inequality gives
\begin{equation}
\delta_{\mathcal R}\geq\frac{s}{4}-\frac{|\theta|}{4}\sqrt{\frac{n}{\gamma T}}.
\end{equation}
Taking the infimum over $\mathcal R$ proves Eq.~\eqref{eq:transversal-dephasing-bound}.
\end{proof}

To interpret this bound, suppose that $|\theta|=T\epsilon\leq Cn^{-\alpha}$. Then,
\begin{equation}
\delta_n(\gamma,T)\geq\frac{s}{4}-\frac{\sqrt C}{4\sqrt\gamma}\sqrt{\epsilon \,n^{1-\alpha}}.
\end{equation}
Thus, for constant $\gamma>0$ and $\epsilon = o(n^{\alpha-1})$,
\begin{equation}
\liminf_{n\rightarrow\infty}\delta_n(\gamma,T)\geq\frac{s}{4}=\Omega(1).
\end{equation}
In this high-precision regime, any transversal implementation with $|\theta|=O(n^{-\alpha})$ therefore has logical recovery error bounded away from zero. In the complementary regime, the same result appears as a metrological obstruction. With no interface erasure, Eq.~\eqref{eq:general-sensing-simple-bound} implies that constant-bias discrimination requires
\begin{equation}
T = \Omega\!\left(\frac{\gamma}{\epsilon^2 n}\right).
\end{equation}
The proposed transversal scaling $T=O(1/(\epsilon n^\alpha))$ can be asymptotically smaller than this SQL lower bound only if $\epsilon=o(\gamma n^{\alpha-1})$, precisely the regime in which the recovery error above is nonvanishing. If instead $\epsilon = \Omega(\gamma n^{\alpha-1})$, then
\begin{equation}
\frac{1}{\epsilon \,n^\alpha} = \Omega\!\left(\frac{\gamma}{\epsilon^2 n}\right),
\end{equation}
so the proposed sensing time is already no better than the temporal SQL. Thus, under signal-aligned Markovian dephasing of constant strength, a transversal protocol either has asymptotically nonvanishing logical error or fails to provide a beyond-SQL sensing advantage.